\documentclass[submission,copyright,creativecommons]{eptcs}
\providecommand{\event}{AiML 2026} 

\usepackage{aiml26}

\usepackage{iftex}

\ifpdf
  \usepackage{underscore}         % Only needed if you use pdflatex.
  \usepackage[T1]{fontenc}        % Recommended with pdflatex
\else
  \usepackage{breakurl}           % Not needed if you use pdflatex only.
\fi

\title{Intuitionistic Monotone Modal Logic:\\ Proof Theory and Semantics}
\author{Tiziano Dalmonte\footnote{Tiziano Dalmonte acknowledges the financial support through the ‘Abstractron’ project funded by the Autonome Provinz Bozen - Südtirol (Autonomous Province of Bolzano/Bozen) through the Research Südtirol/Alto Adige 2022 Call.}
\institute{Free University of Bozen-Bolzano \\ Bozano, Italy}
\email{tiziano.dalmonte@unibz.it}
\and
Jim de Groot\footnote{Jim de Groot was supported by Swiss National Science Foundation (SNSF) grant No.~200021\_215157.}
\institute{University of Bern\\
Bern, Switzerland}
\email{jim.degroot@unibe.ch}
}

\newcommand{\titlerunning}{Intuitionistic Monotone Modal Logic: Proof Theory and Semantics}
\newcommand{\authorrunning}{T. Dalmonte \& J. de Groot}

\hypersetup{
  bookmarksnumbered,
  pdftitle    = {\titlerunning},
  pdfauthor   = {\authorrunning},
  pdfsubject  = {EPTCS},               % Consider adding a more appropriate subject or description
  pdfkeywords = {Intuitionistic modal logic, monotone modal logic, neighbourhood semantics, sequent calculus} % Uncomment and enter keywords specific to your paper
}

\newcommand{\varax}{\hilbertaxiomstyle{X}}
\newcommand{\varrule}{\mf{x}}

\newcommand{\G}{\Gamma}
\newcommand{\D}{\Delta}
\newcommand{\Dp}{\D}
\newcommand{\Sp}{\Sigma}
\newcommand{\seq}{\Rightarrow}
\newcommand{\sseq}{\Leftrightarrow}
\newcommand{\seqder}[1]{\overset{\raise.3em\hbox{$\varDer_{#1}$}}{\seq}}
\newcommand{\sseqder}[2]{\overset{\raise.3em\hbox{$\varDer_{#1}, \varDer_{#2}$}}{\sseq}}

\newcommand{\vd}{\vdash}

\newcommand{\mc}{\mathcal}
\newcommand{\mf}{\mathsf}

\newcommand{\varDer}{\mathcal{D}}
\newcommand{\neigh}{N}

\newcommand{\imp}{\to}

\newcommand{\diam}{\Diamond}
\newcommand{\lan}{\mathcal{L}}

\newcommand{\fintf}{\iota}
\newcommand{\fint}{i}
\newcommand{\logicnamestyle}[1]{\ensuremath{\mathsf{#1}}}
\newcommand{\K}{\logicnamestyle{K}}

\newcommand{\IPL}{\logicnamestyle{IPL}}
\newcommand{\WK}{\logicnamestyle{WK}}
\newcommand{\CK}{\logicnamestyle{CK}}
\newcommand{\IK}{\logicnamestyle{IK}}
\newcommand{\sfour}{\logicnamestyle{S4}}

\newcommand{\CM}{\logicnamestyle{CM}}
\newcommand{\WM}{\logicnamestyle{WM}}
\newcommand{\IM}{\logicnamestyle{IM}}

\newcommand{\IMstar}{\logicnamestyle{IM} \oplus \Ax}
\newcommand{\IMpK}{\logicnamestyle{IM} \oplus \axK}

\newcommand{\IMN}{\logicnamestyle{IMN}}
\newcommand{\IMP}{\logicnamestyle{IMP}}
\newcommand{\IMD}{\logicnamestyle{IMD}}
\newcommand{\IMT}{\logicnamestyle{IMT}}

\newcommand{\vdIM}{\vd_{\IM}}

\newcommand{\CIM}{\mathcal{C}_{\IM}}
\newcommand{\CIMstar}{\mathcal{C}_{\IMstar}}
\newcommand{\CIMpK}{\mathcal{C}_{\IMpK}}
\newcommand{\CWM}{\mathcal{C}_{\WM}}
\newcommand{\HM}{\mathsf{HM}}
\newcommand{\EM}{\mathsf{M}}

\newcommand{\less}{\futs}

\newcommand{\futs}{\leq}
\newcommand{\R}{R}

\newcommand{\ax}{\AxiomC}
\newcommand{\uinf}{\UnaryInfC}
\newcommand{\binf}{\BinaryInfC}

\newcommand{\llab}{\LeftLabel}
\newcommand{\rlab}{\RightLabel}
\newcommand{\disp}{\DisplayProof}

\newcommand{\bl}[1]{{#1}^\bullet}
\newcommand{\wh}[1]{{#1}^\circ}
\newcommand{\nest}[1]{\{#1\}}
\newcommand{\enest}{\nest{ \ }}
\newcommand{\pr}[1]{{#1}^{\downarrow}}
\newcommand{\thi}[1]{{#1}^{\Downarrow}}
\newcommand{\B}{\mathcal B}
\newcommand{\Der}{\mathcal{D}}

\newcommand{\Rule}{R}
\newcommand{\Rulein}{\Rule_{in}}

\newcommand{\id}{\mf{id}}
\newcommand{\bbot}{\bl{\bot}}
\newcommand{\bimp}{\bl{\imp}}
\newcommand{\wimp}{\wh{\imp}}

\newcommand{\bland}{\bl{\land}}
\newcommand{\wland}{\wh{\land}}
\newcommand{\blor}{\bl{\lor}}
\newcommand{\wlor}{\wh{\lor}}
\newcommand{\bbox}{\bl{\Box}}
\newcommand{\wbox}{\wh{\Box}}
\newcommand{\bdiam}{\bl{\diam}}
\newcommand{\wdiam}{\wh{\diam}}

\newcommand{\cut}{\mf{cut}}
\newcommand{\bwk}{\bl{\mathsf{wk}}}
\newcommand{\wwk}{\wh{\mathsf{wk}}}
\newcommand{\bctr}{\bl{\mathsf{ctr}}}
\newcommand{\nctr}{\mathsf{ctr}^{\nest{\bullet}}}
\newcommand{\bsub}{\bl{\mathsf{sub}}}
\newcommand{\wsub}{\wh{\mathsf{sub}}}
\newcommand{\nelim}{\mathsf{elim}}

\newcommand{\swk}{\mathsf{wk}^*}

\newcommand{\rulen}{\mf{n}}
\newcommand{\rulet}{\mf{t}}
\newcommand{\rulep}{\mf{p}}
\newcommand{\ruled}{\mf{d}}

\newcommand{\rulek}{\mf{k}}

\newcommand{\hilbertaxiomstyle}[1]{\ensuremath{\mathsf{#1}}}

\newcommand{\axN}{\hilbertaxiomstyle{N}}
\newcommand{\axK}{\hilbertaxiomstyle{K}}
\newcommand{\axT}{\hilbertaxiomstyle{T}}
\newcommand{\axD}{\hilbertaxiomstyle{D}}
\newcommand{\axP}{\hilbertaxiomstyle{P}}

\newcommand{\axW}{\hilbertaxiomstyle{W}}
\newcommand{\RD}{\hilbertaxiomstyle{RD}}

\newcommand{\axCbox}{\hilbertaxiomstyle{C}_\Box}

\newcommand{\axPbox}{\hilbertaxiomstyle{P}_\Box}
\newcommand{\axKbox}{\hilbertaxiomstyle{K}_\Box}
\newcommand{\axKbd}{\hilbertaxiomstyle{K}}

\newcommand{\axKdiam}{\hilbertaxiomstyle{K}_\diam}
\newcommand{\axPdiam}{\hilbertaxiomstyle{P}_\diam}
\newcommand{\axIdiam}{\hilbertaxiomstyle{I}_\diam}
\newcommand{\axTbox}{\hilbertaxiomstyle{T}_{\Box}}
\newcommand{\axTdiam}{\hilbertaxiomstyle{T}_{\Diamond}}
\newcommand{\axTbd}{\hilbertaxiomstyle{T}}

\newcommand{\monbox}{\hilbertaxiomstyle{mon}_\Box}
\newcommand{\mondiam}{\hilbertaxiomstyle{mon}_\diam}
\newcommand{\str}{\hilbertaxiomstyle{str}}

\newcommand{\axmnc}{\hilbertaxiomstyle{mnc}}

\usepackage{graphicx}
\usepackage{amsmath}
\usepackage{amssymb}

\usepackage{mathrsfs}
\usepackage{bussproofs}
\usepackage{multirow}
\usepackage{multicol}
\usepackage{mathtools}
\usepackage{tikz}
  \usetikzlibrary{arrows,arrows.meta}

\usepackage{mathrsfs}
\usepackage[cal=txupr,scr=boondoxupr,scrscaled=1.050]{mathalfa}

\usepackage{wasysym}

\DeclareMathOperator{\Ax}{Ax}
\newcommand{\Prop}{\mathrm{Prop}}
\newcommand{\MForm}{\mathcal{L}^{\raisebox{1pt}{\scalebox{.5}{\RIGHTcircle}}}}
\newcommand{\Blocks}{\mathcal{L}^{\{\}}}
\newcommand{\mo}{\mathfrak}
\newcommand{\ms}{\mathscr}

\newcommand{\power}{\mathcal{P}}
\renewcommand{\iff}{\quad\text{iff}\quad}
\renewcommand{\phi}{\varphi}
\newcommand{\Mon}{\triangledown}
\newcommand{\up}{\mathrm{up}}

\newcommand{\subsetsim}{\mathrel{\substack{\textstyle\subset\\[-0.2ex]\textstyle\sim}}}

\newcommand{\PF}{\mathrm{PF}}

\usepackage{fontawesome}

\renewcommand{\star}{\mathbin{\scalebox{.75}{\text{\faStarO}}}}

\usepackage{phaistos}

\begin{document}
\maketitle

\begin{abstract}
We study the intuitionistic monotone modal logic $\IM$ recently introduced in \cite{Gro25-im}.
We first provide a semantic characterisation for a family of natural extensions of $\IM$ in terms of constructive neighbourhood models. 
We then present a calculus for $\IM$ and its extensions, 
obtained by adapting a structured calculus for the classical monotone modal logic $\EM$.
Based on the calculus, we prove some preliminary results for $\IM$, including its decidability. 
Our calculus also reveals an interesting analogy between constructive and intuitionistic variants of $\EM$ and
the corresponding variants of $\K$, thereby 
further justifying $\IM$ as a faithful intuitionistic variant of $\EM$.
\end{abstract}

%%%%%%%%%%%%%%%%%%%%%%%%%%%%%%%%%%%%%%%%%%%%%%%%%%%%%%%%%%%%%%%%%%%%%%%%%%%%%%%%
\section{Introduction}

  Constructive or intuitionistic modal logics arise from extending
  intuitionistic logic with modal operators, such as necessity or
  possibility modalities.
  While early work in the area is motivated by mathematical
  curiosity and the more philosophical
  ``attempt to find modal logics acceptable to intuitionists''~\cite{Bul65b,Bul66},
  intuitionistic modal logics soon proved useful much more broadly.
  For example, Goldblatt developed an intuitionistic modal logic
  to reason about the Grothendieck topology~\cite{Gol81},
  and various constructive systems of epistemic reasoning were
  studied in e.g.~\cite{Wil92,Hir10,Pro12,JagMar16,ArtPro16}.
  Moreover, intuitionistic modal logics are used to model countless
  computational phenomena, with applications ranging from
  hardware verification~\cite{FaiMen97} to
  access control~\cite{GarPfe06} to
  staged computation~\cite{DavPfe96,DavPfe01,NanPfePie08},
  and from type inhabitation of arrows in functional
  programming~\cite{Hug00,LinWadYal11},~\cite[Section~7.1]{LitVis18}
  to program evaluation~\cite{Pit91}
  to knowledge representation in AI~\cite{McC93,McCBuv94,Pai03,MenPai05}.

  A peculiar feature of intuitionistic modal logics is that,
  similar to the intuitionistic connectives, the modal operators
  need not be interdefinable. This disconnection often gives rise to
  a wide spectrum of intuitionistic counterparts of the same classical
  system.
  On one side of this spectrum sit for example logics studied by 
  Sotirov~\cite[Section~4]{Sot84}, Wolter and Zakharyaschev~\cite[Section~2]{WolZak99},
  and Dalmonte, Grellois and Olivetti~\cite{DalGreOli20},
  in which $\Box$ and $\Diamond$ are completely unrelated.
  At the other extremity, we have systems
  studied by Bull~\cite{Bul65b} and
  Bo\v{z}i\'{c} and Do\v{s}en~\cite[Section~11]{BozDos84},
  in which the diamond operator can be viewed as an abbreviation for $\neg\Box\neg$.

  The study of constructive analogues of classical normal modal logics,
  such as $\K$ and $\sfour$, has resulted in two prominent ways to establish
  a more subtle interaction between the modalities.
  First, \emph{constructive} counterparts are obtained by restricting
  a sequent calculus for the classical modal logic in question to (at most)
  one conclusion.
  This originates from Wijesekera~\cite{Wij90}, and gives rise to
  analogues of $\K$ such as $\CK$~\cite{BelPaiRit01}, $\WK$~\cite{Wij90,WijNer05}
  and $\mathsf{CS4}$~\cite{AleEA01}.
  Second, \emph{intuitionistic} counterparts of normal modal logics
  are obtained by taking the set of formulas whose
  standard translation is derivable in \emph{intuitionistic} first-order
  logic~\cite{Fis77,Sim94}.
  This gives rise to \emph{intuitionistic $\K$} ($\IK$) studied by~\cite{Fis77,Fis81,PloSti86},
  as well as intuitionistic counterparts of other normal modal logics
  including $\mathsf{IS4}$ and $\mathsf{IKD}$~\cite{Ewa86,Sim94}.

  Besides intuitionistic analogues of $\K$,
  extensions of intuitionistic logic with monotone modal operators have
  been studied as well. Monotone modal operators are modalities $\Mon$ for
  which $\phi \to \psi$ implies $\Mon\phi \to \Mon\psi$, or equivalently (in presence of the congruence rule),
  that satisfy $\Mon(\phi \wedge \psi) \to \Mon\phi$.
  The study of these in an intuitionistic context dates back to
  Bull~\cite[Page~5]{Bul65}, who studied a single monotone modality that
  satisfies the S4 axioms.
  Discarding the S4 axioms yields Goldblatt's intuitionistic logic
  with a geometric modality~\cite{Gol81,Gol93},
  which was further researched in e.g.~\cite{Gro22gt,GroPat20,TabIemJal22}.
  However, compared to the wealth of research on intuitionistic analogues of $\K$,
  the study of intuitionistic versions of the classical monotone modal logic $\EM$
  is still in its infancy.
  This is surprising, because classical monotone modal logic has been
  researched extensively \cite{Che80,Han03,HanKup04,SanVen10,Fri14}
  and encompasses modal logics such as
  game logic~\cite{Par85},
  concurrent propositional dynamic logic~\cite{Gol87},
  coalition logic~\cite{Pau01,Pau02}
  and alternating-time temporal logic~\cite{AluHenKup02}.

  The study of intuitionistic monotone modal logics with two modalities is
  even more recent.
  The analogy with the normal modal case gives three ways to decide on the level of
  interaction between the modalities.
  First, restricting a (suitable) sequent calculus for $\EM$ to precisely one
  conclusion yields a logic in which the modal operators are completely
  independent~\cite{Dal25}. The resulting logic, called $\CM$ in analogy with $\CK$,
  may be seen as a bimodal version of Goldblatt's intuitionistic logic with a geometric modality.
  Second, the restriction of a sequent calculus for $\EM$ to at most one
  conclusion (i.e.~to zero or one conclusion) is investigated in~\cite{Dal22}, 
  and gives rise to the logic $\WM$ (and extensions).\footnote{The logic $\WM$ from~\cite{Dal22} is called $\IM$ in~\cite{DalGreOli20}. We reserve the name $\IM$ for the logic from~\cite{Gro25-im}. This aligns with the naming convention for constructive and intuitionistic normal modal logics~\cite{GroShiClo25}.}
  Finally, in~\cite{Gro25-im} an \emph{intuitionistic} approach was taken:
  classical monotone modal logic was translated into a suitable first-order
  logic, and changing this first-order logic to an intuitionistic first-order
  logic gave rise to the intuitionistic monotone modal logic $\IM$.
  This mimics the method of obtaining $\IK$ from $\K$ using intuitionistic
  first-order logic~\cite{Sim94}.
  Given its recent introduction, the study of $\IM$ is still
  lacks a proof theoretic treatment, a study of its extensions,
  and basic decidability results.
  This is the gap we aim to close in this paper.
  
  We start by equipping $\IM$ with a semantics by means of
  \emph{constructive neighbourhood frames}, derived from~\cite{Dal22}.
  We use a canonical model construction to prove completeness
  of extensions of $\IM$ with respect to suitable classes of such frames.
  This differs from the \emph{intuitionistic neighbourhood models} used
  in~\cite{Gro25-im}. While the latter were useful for bridging the
  gap between the first-order and modal perspectives of the logic,
  the former semantics provides us with valuable intuition for the
  sequent calculus.

  Next, we present a sound and complete calculus for $\IM$.
  Considering that restrictions of Gentzen-style sequent calculi yield
  constructive monotone modal logics~\cite{Dal22,Dal25},
  and by analogy with the nested sequent calculus for $\IK$~\cite{Str13}, 
  we now start from a \emph{structured} sequent calculus for classical $\EM$
  and adapt it to~$\IM$.
  We consider to this purpose the calculus $\HM$
  from~\cite{DalLelOliPim21}.
  Besides the comma and the sequent arrow, $\HM$ employs two additional structural connectives: 
  the hypersequent bar, which allows one to represent sets of sequents and to efficiently extract neighbourhood countermodels from failed derivations,
  and blocks of formulas, which intuitively represent neighbourhood sets
  associated with the corresponding sequent/worlds,
  and also allow for a modular extension of the base calculus
  by means of rules handling blocks.
Since countermodel extraction is outside the scope of the present work, we disregard here hypersequents
(which is possible to do by still preserving the syntactic completeness of the calculus)
and restrict to simple sequents endowed with blocks.
  Our main contributions are as follows:
\begin{enumerate}
  \item We provide a cut-free, terminating calculus $\CIM$ for $\IM$
        by restricting a suitable sequent calculus with blocks to single-succedent sequents.
  \item We show that this is closely related to a calculus for $\WM$:
        $\CIM$ is based on an operation of pruning that cuts out output formulas
        in the application of an implication rule, and changing this to
        a more radical form of pruning (that we call output thinning)
        provides a calculus for $\WM$.
  \item We prove that $\IM$ is decidable, and re-prove~\cite[Theorem~4.11]{Gro25-im} that $\IM$ and $\WM$ 
        have different $\Box$-free fragments
  \item We obtain calculi for extensions of $\IM$ by using block rules
        for extensions of $\EM$~\cite{DalLelOliPim21}.
\end{enumerate}

%%%%%%%%%%%%%%%%%%%%%%%%%%%%%%%%%%%%%%%%%%%%%%%%%%%%%%%%%%%%%%%%%%%%%%%%%%%%%%%%
\section{Intuitionistic monotone modal logic}\label{sec:IM}

  We briefly recall the logics $\WM$ and $\IM$ from~\cite{Dal22,Gro25-im}.
  We then give a canonical model construction for $\IM$ and some of its extensions
  to prove completeness with respect to constructive neighbourhood frames
  (as opposed to the \emph{intuitionistic} neighbourhood frames from~\cite{Gro25-im}).

%%%%%%%%%%%%%%%%%%%%%%%%%%%%%%%%%%%%%%%%%%%%%%%%%%%%%%%%%%%%%%%%%%%%%%%%%%%%%%%%
\subsection{The logics}\label{subsec:logic}

Let $\mc{L}$ be the language generated by the grammar
$
    \phi ::= p \mid \bot \mid \phi \land \phi \mid \phi \lor \phi \mid \phi \imp \phi \mid \Box \phi \mid \diam \phi,
$
  where $p$ ranges over some arbitrary but fixed set $\Prop$ of proposition letters.
  As usual, we abbreviate $\neg \phi := \phi \imp \bot$ and $\top := \neg\bot$.
Besides, the \emph{complexity} of a formula is recursively defined via
  $c(p) = c(\bot) = 0$,
  $c(\phi \wedge \psi) = c(\phi \vee \psi) = c(\phi \to \psi) = c(\phi) + c(\psi) + 1$,
  and $c(\Box\phi) = c(\Diamond\phi) = c(\phi) + 1$.

  Intuitionistic analogues of monotone modal logics
  can be described using the following axioms:
  \begin{equation*}
    \ax{$\phi \imp \psi$}
      \llab{$\monbox$}
      \uinf{$\Box \phi \imp \Box \psi$}
      \disp,
    \qquad
    \ax{$\phi \imp \psi$}
      \llab{$\mondiam$}
      \uinf{$\diam \phi \imp \diam \psi$}
      \disp,
    \qquad
      \axmnc\ (\Box \phi \land \diam \neg \phi) \imp \bot,
    \qquad
      \axIdiam\ (\Box\top \imp \diam \phi) \imp \diam \phi.
  \end{equation*}

\begin{definition}\label{def:logics}
  Let $\mc{R} = \{ R_1, \ldots, R_n \}$ be a set of axioms and rules.
  Then we write $\IPL \oplus \mc{R}$ or $\IPL \oplus R_1 \oplus \cdots \oplus R_n$
  for the logic over the language $\mc{L}$ obtained by extending an axiomatisation
  for intuitionistic propositional logic with the axioms and rules in $\mc{R}$,
  and closing it under modus ponens and uniform substitution.
  We can define \emph{constructive}, \emph{Wijesekera} and
  \emph{intuitionistic monotone modal logic} as follows:
  \begin{equation*}
    \CM := \IPL \oplus \monbox \oplus \mondiam, \qquad
    \WM := \CM \oplus \axmnc, \qquad
    \IM := \WM \oplus \axIdiam.
  \end{equation*}
\end{definition}

  We are also interested in extensions of $\IM$ with the axioms
  $\axN, \axP, \axT, \axD$ and $\axK$,
  which are well known in the classical modal logic literature.
  For example, in the context of coalition logic $\axN$ and $\axP$ embody the idea
  that coalitions can bring about anything that is already true, but cannot achieve the impossible (because classically $\axP$ is equivalent to $\neg\Box\bot$), and in an epistemic context $\axT$ states that only
  true things can be known.
  The axiom $\axK$ is of interest because it yields intuitionistic normal modal logics
  weaker than $\IK$.
  In this paper, we consider:  
  \begin{equation*}
    \begin{array}{llllll}
      \axN & \Box\top
        & \axP & \Diamond\top
        & \axD & \Box\phi \imp \Diamond\phi \\[.2em]
      \axTbox & \Box\phi \imp \phi
        & \axTdiam & \phi \imp \Diamond\phi
        & \axTbd   & (\axTbox \wedge \axTdiam) \\[.2em]
      \axKbox & \Box(\phi \imp \psi) \imp (\Box \phi \imp \Box \psi) \qquad
        & \axKdiam & \Box(\phi \to \psi) \to (\Diamond\phi \to \Diamond\psi) \qquad
        & \axKbd & (\axKbox \wedge \axKdiam)
    \end{array}
  \end{equation*}
  For any logic $\mathsf{L}$, we write $\Phi \vdash_{\mathsf{L}} \phi$ if there exist
  $\phi_1, \ldots, \phi_n \in \Phi$ such that $\vdash_{\mathsf{L}} (\phi_1 \wedge \cdots \wedge \phi_n) \to \phi$. Since in this paper we focus on the logic $\IM$, if $\Ax$ is a set of axioms we abbreviate
  $\Phi \vdash_{\IM \oplus \Ax} \phi$ to $\Phi \vdash_{\Ax} \phi$.
  
  We will also refer to the following modal axioms, which relate to the preceding ones
  via Lemma~\ref{lemma:inclusions}.
 \begin{equation*}
    \begin{array}{llllll}
      \axCbox & \Box\phi \land \Box\psi \to \Box(\phi \land \psi)\qquad
        & \axW & \Box\phi \land \diam \psi \imp \diam(\phi \land \psi)\qquad
        & \axPbox & \Box\bot \to \bot \\[.4em]
      \str &      \ax{$\phi \land \psi \imp \bot$}
                \uinf{$\Box\phi \land \diam\psi \imp \bot$}
                \disp
        & \RD &   \ax{$\phi\land\psi\imp\bot$}
                \uinf{$\Box\phi\land\Box\psi\imp\bot$}
                \disp
    \end{array}
  \end{equation*}

\begin{lemma}\label{lemma:eq axioms}\label{lemma:inclusions}
  Over $\CM$, $\axmnc$ is equivalent to $\str$.
  Over $\WM$, $\RD$ is derivable from $\axD$.
  Also, we have:
  \begin{align*}
    %% Row 1
      &\vdash_{\WM \oplus \axKbox} \axCbox
      &&\vdash_{\WM \oplus \axCbox} \axKbox
      &&\vdash_{\WM \oplus \axP} \axPbox
      &&\vdash_{\WM \oplus \axT} \axD
      &&\vdash_{\WM \oplus \axN \oplus \axD} \axP \\
    %% Row 2
      &&&\vdash_{\WM \oplus \axKdiam} \axW
      &&\vdash_{\WM \oplus \axN} \axIdiam
      &&\vdash_{\WM \oplus \axT} \axP
      &&\vdash_{\WM \oplus \axK \oplus \axP} \axD
  \end{align*}
  Furthermore, we have $\vdash_{\IM \oplus \axD} \axP$.
  Since $\IM$ is stronger than $\WM$, we may replace every occurrence of $\WM$ with $\IM$.
\end{lemma}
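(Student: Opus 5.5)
Each claim is a short Hilbert-style derivation, so the plan is to verify the items one at a time. Throughout I use only $\monbox$, $\mondiam$, intuitionistic propositional reasoning, and the axioms named in each item. The last sentence of the statement needs no work: $\IM$ extends $\WM$, so every derivation over $\WM$ is also one over $\IM$.

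\emph{Equivalence of $\axmnc$ and $\str$ over $\CM$.} Assume $\axmnc$ and $\vdash \phi\land\psi\imp\bot$. Then $\vdash\psi\imp\neg\phi$. By $\mondiam$ we get $\diam\psi\imp\diam\neg\phi$, and $\axmnc$ then gives $\Box\phi\land\diam\psi\imp\bot$. Conversely, applying $\str$ to the theorem $\phi\land\neg\phi\imp\bot$ yields $\axmnc$ directly. For $\RD$ from $\axD$ over $\WM$: given $\vdash\phi\land\psi\imp\bot$, the axiom $\axD$ turns $\Box\phi\land\Box\psi$ into $\Box\phi\land\diam\psi$. The rule $\str$, available in $\WM$ by the first part, then yields $\bot$.

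\emph{The table.} I would run through the entries as follows.
\begin{itemize}
\item $\axKbox\Rightarrow\axCbox$: apply $\monbox$ to $\phi\imp(\psi\imp\phi\land\psi)$ to get $\Box\phi\imp\Box(\psi\imp\phi\land\psi)$, then apply $\axKbox$.
\item $\axCbox\Rightarrow\axKbox$: use $\Box(\phi\imp\psi)\land\Box\phi\imp\Box((\phi\imp\psi)\land\phi)$, then $\monbox$ to reach $\Box\psi$.
\item $\axP\Rightarrow\axPbox$: apply $\str$ to $\bot\land\top\imp\bot$ to get $\Box\bot\land\diam\top\imp\bot$, then discharge $\diam\top$ using $\axP$.
\item $\axT\Rightarrow\axD$: chain $\Box\phi\imp\phi\imp\diam\phi$.
\item $\axN,\axD\Rightarrow\axP$: $\Box\top$ together with $\Box\top\imp\diam\top$.
\item $\axKdiam\Rightarrow\axW$: use $\monbox$ on $\phi\imp(\psi\imp\phi\land\psi)$, then $\axKdiam$.
\item $\axN\Rightarrow\axIdiam$: this is pure intuitionistic logic, since $\Box\top$ is a theorem.
\item $\axT\Rightarrow\axP$: $\axTdiam$ applied to $\top$.
\item $\axK,\axP\Rightarrow\axD$: from $\phi\imp(\top\imp\phi)$ and $\monbox$ get $\Box\phi\imp\Box(\top\imp\phi)$. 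Then $\axKdiam$ gives $\diam\top\imp\diam\phi$, and $\axP$ discharges $\diam\top$.
\end{itemize}

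\emph{The final claim $\vdash_{\IM\oplus\axD}\axP$.} Instantiate $\axIdiam$ with $\phi:=\top$ to get $(\Box\top\imp\diam\top)\imp\diam\top$. The antecedent is exactly an instance of $\axD$, so modus ponens gives $\diam\top$.

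I expect no real obstacle. The only delicate points are two places where a rule, rather than an implication, is needed: the use of $\str$ in $\axP\Rightarrow\axPbox$, and the correct application of $\monbox$ before $\axKdiam$ in the last two derivations. In both places I will make sure the premises are theorems, not hypotheses.
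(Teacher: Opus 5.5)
Your proposal is correct. Every derivation checks out, including the two rule applications you flagged: the premisses $\bot\land\top\imp\bot$ and $\phi\imp(\top\imp\phi)$ are indeed theorems. The final claim follows exactly as you say, by instantiating $\axIdiam$ and $\axD$ at $\top$. The paper gives no derivations of its own: it defers to \cite{Dal22,Dal25} for the $\WM$ facts and leaves $\vdash_{\IM\oplus\axD}\axP$ implicit. Your self-contained Hilbert-style verification is the standard one and fills in exactly what the paper omits.
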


  Derivations can be found in~\cite{Dal22,Dal25}.
  In light of Lemma \ref{lemma:inclusions}, combinations of axioms
  in $\{ \axN, \axP, \axT, \axD, \axK \}$ give rise to 14 different systems, depicted
  below, where arrows denote inclusion of logical systems.
  Each of the systems arises from extending the corresponding system
  from~\cite{Dal22} with $\axIdiam$.

\begin{equation*}
\begin{tikzpicture}[>=latex, yscale=.9, font=\small]

\def\dx{4}       
\def\dxPD{2.8}    

\def\nx{0.8}
\def\ny{1.6}
\def\kx{1.6}
\def\ky{-0.8}

\def\xM{0}
\def\xP{\dx}
\def\xD{\dx+\dxPD}
\def\xT{\dx+\dxPD+\dx}

\node (M)  at (\xM,0) {$\IM$};
\node (MN) at ({\xM+\nx},{\ny}) {$\IM \oplus \axN$};
\node (MC) at ({\xM+\kx},{\ky}) {$\IM \oplus \axK$};
\node (K)  at ({\xM+\nx+\kx},{\ny+\ky}) {$\IM \oplus \axN \oplus \axK$};

\node (MP)  at (\xP,0) {$\IM \oplus \axP$};
\node (MNP) at ({\xP+\nx},{\ny}) {$\IM \oplus \axN \oplus \axP$};

\node (MD)  at (\xD,0) {$\IM \oplus \axD$};
\node (MND) at ({\xD+\nx},{\ny}) {$\IM \oplus \axN \oplus \axD$};
\node (MCD) at ({\xD+\kx},{\ky}) {$\IM \oplus \axK \oplus \axD$};
\node (KD)  at ({\xD+\nx+\kx},{\ny+\ky})
             {$\IM \oplus \axN \oplus \axK \oplus \axD$};

\node (MT)  at (\xT,0) {$\IM \oplus \axT$};
\node (MNT) at ({\xT+\nx},{\ny}) {$\IM \oplus \axN \oplus \axT$};
\node (MCT) at ({\xT+\kx},{\ky}) {$\IM \oplus \axK \oplus \axT$};
\node (KT)  at ({\xT+\nx+\kx},{\ny+\ky})
             {$\IM \oplus \axN \oplus \axK \oplus \axT$};

\draw[->] (M) -- (MN);
\draw[->] (M) -- (MC);
\draw[->] (MN) -- (K);
\draw[->] (MC) -- (K);

\draw[->, dashed] (MP) -- (MNP);

\draw[->, dashed] (MD) -- (MND);
\draw[->, dashed] (MD) -- (MCD);
\draw[->] (MND) -- (KD);
\draw[->] (MCD) -- (KD);

\draw[->, dashed] (MT) -- (MNT);
\draw[->, dashed] (MT) -- (MCT);
\draw[->] (MNT) -- (KT);
\draw[->] (MCT) -- (KT);

\draw[->, dashed] (M) -- (MP);
\draw[->, dashed] (MP) -- (MD);
\draw[->, dashed] (MD) -- (MT);

\draw[->] (MN) -- (MNP);
\draw[->] (MNP) -- (MND);
\draw[->] (MND) -- (MNT);

\draw[->] (MC) -- (MCD);
\draw[->] (MCD) -- (MCT);

\draw[->] (K) -- (KD);
\draw[->] (KD) -- (KT);

\end{tikzpicture}
\end{equation*}

%%%%%%%%%%%%%%%%%%%%%%%%%%%%%%%%%%%%%%%%%%%%%%%%%%%%%%%%%%%%%%%%%%%%%%%%%%%%%%%%
\subsection{Constructive neighbourhood semantics}
  
  Let $(X, \leq)$ be a preordered set. For $a \subseteq X$ we define
  ${\uparrow}a = \{ y \in X \mid x \leq y \text{ for some } x \in a \}$.
  Then $a$ is called an \emph{upset} if ${\uparrow}a = a$,
  and we write $\up(X, \leq)$ for the collection of upsets of $(X, \leq)$.

\begin{definition}
  A \emph{constructive neighbourhood frame} is a tuple $\mo{F} = (W, \leq, \neigh)$
  consisting of a set $W$, a preorder $\leq$ on $W$, and a
  function $\neigh : W \to \power(\power(W))$ called a \emph{neighbourhood function}.
  A \emph{constructive neighbourhood model} is a pair $\mo{M} = (\mo{F}, V)$
  comprising an constructive neighbourhood frame $\mo{F}$ and a
  valuation $V : \Prop \to \up(X, \leq)$ that assigns an upset of $(X, \leq)$
  to each proposition letter $p$.
  
  The interpretation of $\mc{L}$-formulas at a world $w$ of such a model $\mo{M}$
  is defined recursively by
  \begin{align*}
    \mo{M}, w \Vdash p &\iff w \in V(p) \\
    \mo{M}, w \Vdash \bot &\phantom{\iff}\text{never} \\
    \mo{M}, w \Vdash \phi \wedge \psi &\iff \mo{M}, w \Vdash \phi \text{ and } \mo{M}, w \Vdash \psi \\
    \mo{M}, w \Vdash \phi \vee \psi &\iff \mo{M}, w \Vdash \phi \text{ or } \mo{M}, w \Vdash \psi \\
    \mo{M}, w \Vdash \phi \to \psi
      &\iff \text{for all }v \geq w, \; \mo{M}, v \Vdash \phi \text{ implies } \mo{M}, v \Vdash \psi \\
    \mo{M}, w \Vdash \Box\phi
      &\iff \text{for all } v \geq w
            \text{ there exists } a \in \neigh(v)
            \text{ such that for all } u \in a
            \text{ we have } \mo{M}, u \Vdash \phi \\
    \mo{M}, w \Vdash \Diamond\phi
      &\iff \text{for all } v \geq w
            \text{ and all } a \in \neigh(v)
            \text{ there exists } u \in a
            \text{ such that } \mo{M}, u \Vdash \phi
  \end{align*}
  
  Let $\Phi \cup \{ \psi \} \subseteq \lan$.
  A constructive neighbourhood frame $\mo{F}$ \emph{validates} $\psi$
  if $(\mo{F}, V), w \Vdash \psi$ for every valuation $V$ for $\mo{F}$
  and every world $w$ in $\mo{F}$,
  and it \emph{validates} $\Phi$ if it validates every
  formula in $\Phi$, notation: $\mo{F} \Vdash \psi$ and $\mo{F} \Vdash \Phi$.
  If $\Ax$ is a set of axioms and $\phi \in \lan$,
  then we write $\Vdash_{\Ax} \phi$ if every constructive neighbourhood
  frame that validates the formulas in $\Ax$ also validates $\phi$.
\end{definition}

  It was shown in~\cite{Dal22} that constructive neighbourhood frames provide sound and complete
  semantics with respect to various extensions of~$\WM$.
  They can also be used for (extensions of)~$\IM$.

\begin{lemma}\label{lem:frame-conditions}
  Let $\mo{F} = (W, \leq, \neigh)$ be a constructive neighbourhood frame.
  \begin{enumerate}
    \item If $\neigh(w) \neq \emptyset$ for all $w \in W$,
          then $\mo{F}$ validates $\axN$;
    \item If $\emptyset \notin \neigh(w)$ for all $w \in W$,
          then $\mo{F}$ validates $\axP$;
    \item If $w \in U$ for all $U \in \neigh(w)$ and $w \in W$,
          then $\mo{F}$ validates $\axTbd$;
    \item If $U_1 \cap U_2 \neq \emptyset$ for all $U_1, U_2 \in \neigh(w)$
          and $w \in W$, then $\mo{F}$ validates $\axD$;
    \item If $\neigh(w)$ is closed under binary intersections for all $w \in W$,
          then $\mo{F}$ validates $\axKbd$;
    \item \label{it:full}
          If ($\neigh(w) \neq \emptyset$ and $w \leq v$) implies $\neigh(v) \neq \emptyset$,
          for all $w, v \in W$, then $\mo{F}$ validates $\axIdiam$.
  \end{enumerate}
\end{lemma}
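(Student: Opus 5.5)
We verify each item directly from the satisfaction clauses, fixing an arbitrary valuation $V$ and world $w$. The first thing to establish is that truth is persistent: if $w \Vdash \phi$ and $w \leq v$ then $v \Vdash \phi$. This goes by induction on $\phi$. The atomic case holds because valuations are upsets, and $\wedge,\vee$ are immediate. For $\to$, $\Box$ and $\Diamond$ the clauses quantify over all $v \geq w$, so transitivity of $\leq$ gives persistence. We also use reflexivity of $\leq$ to instantiate these quantifiers at $w$ itself.

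The individual items are then short.
\begin{enumerate}
\item[(1)] To show $w \Vdash \Box\top$, take any $v \geq w$ and any $a \in \neigh(v)$, which exists by nonemptiness. Every $u \in a$ forces $\top$.
\item[(2)] For $\Diamond\top$, every $a \in \neigh(v)$ is nonempty, so it contains some $u$, and $u \Vdash \top$.
\item[(3)] For $\axTbox$, suppose $v \Vdash \Box\phi$. Instantiating at $v$ gives some $a \in \neigh(v)$ with $a \subseteq \llbracket\phi\rrbracket$. By the condition $v \in a$, so $v \Vdash \phi$. For $\axTdiam$, suppose $v \Vdash \phi$. For any $v' \geq v$ and $a \in \neigh(v')$, persistence gives $v' \Vdash \phi$ and $v' \in a$, so $v'$ is the required witness in $a$.
\item[(4)] Suppose $v \Vdash \Box\phi$, and take $v' \geq v$ and $b \in \neigh(v')$. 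Persistence gives $v' \Vdash \Box\phi$, so there is $a \in \neigh(v')$ with $a \subseteq \llbracket\phi\rrbracket$. Then $a \cap b \neq \emptyset$ yields a witness in $b$ forcing $\phi$.
\item[(5)] For $\axKbox$, suppose $v \Vdash \Box(\phi\to\psi)$ and $v' \geq v$ with $v' \Vdash \Box\phi$. Take $v'' \geq v'$. Persistence gives witnesses $a_1, a_2 \in \neigh(v'')$ for $\phi\to\psi$ and $\phi$ respectively. Their intersection lies in $\neigh(v'')$, and every $u$ in it forces $\phi\to\psi$ and $\phi$, hence $\psi$ (instantiating the implication at $u$ itself). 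For $\axKdiam$, suppose $v' \Vdash \Box(\phi\to\psi)$ and $v'' \geq v'$ with $v'' \Vdash \Diamond\phi$. Take $v''' \geq v''$ and $b \in \neigh(v''')$. There is $a \in \neigh(v''')$ with $a \subseteq \llbracket\phi\to\psi\rrbracket$. Closure under intersection gives $a \cap b \in \neigh(v''')$, and $\Diamond\phi$ provides some $u \in a \cap b$ forcing $\phi$. That $u$ forces $\psi$ and lies in $b$.
\end{enumerate}

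Item (6) is the main obstacle, since the witnessing neighbourhood has to be transported along $\leq$. Suppose $v \Vdash \Box\top \to \Diamond\phi$; we must show $v \Vdash \Diamond\phi$. Fix $v' \geq v$ and $a \in \neigh(v')$, and look for $u \in a$ forcing $\phi$.

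We claim $v' \Vdash \Box\top$. For any $v'' \geq v'$, the hypothesis $\neigh(v') \neq \emptyset$ together with the frame condition gives $\neigh(v'') \neq \emptyset$. Any member of $\neigh(v'')$ then witnesses $\top$, which proves the claim. The antecedent now yields $v' \Vdash \Diamond\phi$. Instantiating at $v'$ and $a$ gives the required $u$.

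The essential point is that $\Box\top$ must hold at all successors of $v'$, not merely at $v'$. This is precisely what the condition on $\neigh$ in item (6) guarantees.
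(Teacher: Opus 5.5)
Your proof is correct, and for item (6) it is the paper's argument: the frame condition propagates nonemptiness of $\neigh$ from $v'$ to all its successors, so $v' \Vdash \Box\top$, and the implication then gives $v' \Vdash \Diamond\phi$. Fixing $a \in \neigh(v')$ first just absorbs the paper's separate case $\neigh(v) = \emptyset$. For items (1)--(5) the paper only cites \cite{Dal22}; your direct checks via persistence and the satisfaction clauses are the standard ones and are sound.
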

\begin{proof}
  The first five items can be found in~\cite{Dal22}.
  For the last one, suppose $\mo{F}$ satisfies the given condition,
  and let $\mo{M} = (\mo{F}, V)$ be a model based on $\mo{F}$.
  Let $w \in W$ and suppose $w \Vdash \Box\top \to \Diamond\phi$.
  We show that any neighbourhood $U$ of any $v \geq w$ contains an element
  that satisfies $\phi$. Let $v \geq w$. Then $v \Vdash \Box\top \to \Diamond\phi$.
  If $\neigh(v) \neq \emptyset$, then by the frame condition we have
  $v \Vdash \Box\top$, so that $v \Vdash \Diamond\phi$, which implies that
  every neighbourhood of $v$ contains an element satisfying $\phi$.
  If, on the other hand, $\neigh(v) = \emptyset$, then it is trivially true
  that every neighbourhood of $v$ contains a world satisfying $\phi$.
  Therefore $w \Vdash \Diamond\phi$. Since $w$ was arbitrary, this proves
  that the frame validates $\axIdiam$.
\end{proof}

\begin{definition}\label{def:continual}
  Constructive neighbourhood frames that satisfy
  Item~\eqref{it:full} above are called \emph{continual}.
\end{definition}

%%%%%%%%%%%%%%%%%%%%%%%%%%%%%%%%%%%%%%%%%%%%%%%%%%%%%%%%%%%%%%%%%%%%%%%%%%%%%%%%
\subsection{Canonical model constructions}

  Throughout this subsection, let
  $\Ax \subseteq \{ \axN, \axP, \axTbd, \axD, \axKbd \}$.
  We prove that $\IM \oplus \Ax$ is sound and complete with
  respect to a suitable class of continual constructive neighbourhood frames.
  Akin to~\cite{Dal22,Wij90}, we use so-called segments, rather than prime theories,
  to construct a canonical model.
  The case for $\IM$ without additional axioms was proven (indirectly)
  in~\cite[Theorem~4.8]{Gro25-im}.
  
\begin{definition}
  An \emph{$\Ax$-prime theory} is a subset $\Phi \subseteq \lan$ that is
  \emph{consistent} ($\Phi \not\vdash_{\Ax} \bot$),
  \emph{deductively closed} ($\Phi \vdash_{\Ax} \phi$ implies $\phi \in \Phi$)
  and \emph{prime} ($\phi \vee \psi \in \Phi$ implies $\phi \in \Phi$ or $\psi \in \Phi$).
\end{definition}

  We write $\PF_{\Ax}$ for the set of $\Ax$-prime theories.
  We can prove the Lindenbaum lemma as usual:

\begin{lemma}
  Let $\Phi \cup \{ \psi \} \subseteq \lan$ be such that $\Phi \not\vdash_{\Ax} \psi$.
  Then there exists an $\Ax$-prime theory $\Phi'$ such that
  $\Phi \subseteq \Phi'$ and $\psi \notin \Phi'$.
\end{lemma}
  
  Our canonical model is not built from prime theories, but rather from
  pairs consisting of a prime theory and a family of sets of prime theories
  that encodes the neighbourhoods of the world.

\begin{definition}
  An \emph{$\Ax$-segment} is a pair $(\Phi, \ms{N})$ consisting of an
  $\Ax$-prime theory $\Phi$ and a collection $\ms{N}$ of sets of $\Ax$-prime theories,
  such that:
  \begin{itemize}
    \item If $\Box\phi \in \Phi$ then there exists a $U \in \ms{N}$ such that
          $\Psi \in U$ implies $\phi \in \Psi$;
    \item If $\Diamond\phi \in \Phi$ then for all $U \in \ms{N}$ there exists
          some $\Psi \in U$ such that $\phi \in \Psi$.
  \end{itemize}
\end{definition}

  Next, we consider several particular segments based on a given prime
  theory $\Phi$. This also shows that we can always extend a prime
  theory to a segment.
  For a formula $\phi$, define $\tilde{\phi} := \{ \Phi \in \PF_{\Ax} \mid \phi \in \Phi \}$ and $\tilde{\phi}^c := \PF_{\Ax} \setminus \tilde{\phi}$.
  (When using the notation $\tilde{\phi}$, the set $\Ax$ will be clear from context.)

\begin{lemma}\label{lem:seg-BD}
  Let $\Phi$ be an $\Ax$-prime theory and $\psi \in \lan$ such that
  $\Diamond\psi \notin \Phi$.
  Define
  \begin{equation*}
    \ms{B}_{\Phi} := \{ \tilde{\phi} \mid \Box\phi \in \Phi \}
    \quad\text{and}\quad
    \ms{D}_{\Phi,\psi} :=
      \begin{cases}
        \ms{B}_{\Phi} \cup \{ \tilde{\psi}^c \} &\text{if $\axKbd \notin \Ax$} \\
        \ms{B}_{\Phi} \cup \{ \tilde{\psi}^c \} \cup \{ \tilde{\phi} \cap \tilde{\psi}^c \mid \Box\phi \in \Phi \} &\text{if $\axKbd \in \Ax$}
      \end{cases}
  \end{equation*}
  Then $(\Phi, \ms{B}_{\Phi})$ and $(\Phi, \ms{D}_{\Phi,\psi})$ are $\Ax$-segments.
\end{lemma}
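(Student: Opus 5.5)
The plan is to check the two defining clauses of an $\Ax$-segment directly for each proposed family. Every witness will come from the Lindenbaum lemma, and the modal axioms will be used only to show that the relevant sets of formulas are consistent. In each case $\Phi$ is already an $\Ax$-prime theory, so only the two clauses remain to be checked.

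\emph{The pair $(\Phi,\ms{B}_\Phi)$.} The $\Box$-clause holds by construction: if $\Box\phi\in\Phi$, take $U=\tilde\phi\in\ms{B}_\Phi$; every $\Psi\in U$ contains $\phi$. For the $\Diamond$-clause, let $\Diamond\chi\in\Phi$ and let $U=\tilde\phi$ with $\Box\phi\in\Phi$. We need an $\Ax$-prime theory containing both $\phi$ and $\chi$. Suppose instead that $\{\phi,\chi\}\vdash_{\Ax}\bot$, i.e.\ $\vdash_{\Ax}\phi\land\chi\imp\bot$. By Lemma~\ref{lemma:eq axioms}, $\axmnc$ is equivalent to $\str$ over $\CM$, so $\str$ is available. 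It yields $\vdash_{\Ax}\Box\phi\land\Diamond\chi\imp\bot$. Since $\Phi$ is deductively closed, this puts $\bot$ in $\Phi$, contradicting consistency. Hence $\{\phi,\chi\}\not\vdash_{\Ax}\bot$, and the Lindenbaum lemma (with $\psi:=\bot$) gives the required $\Psi\in U$ with $\chi\in\Psi$.

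\emph{The pair $(\Phi,\ms{D}_{\Phi,\psi})$.} The $\Box$-clause is witnessed exactly as before, because $\ms{B}_\Phi\subseteq\ms{D}_{\Phi,\psi}$. For the same reason, the $\Diamond$-clause for neighbourhoods of the form $\tilde\phi$ is already handled. Two kinds of new neighbourhood remain; throughout, let $\Diamond\chi\in\Phi$.

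(i) For $U=\tilde\psi^c$ we need a prime theory containing $\chi$ but not $\psi$. If $\chi\vdash_{\Ax}\psi$, then $\mondiam$ gives $\vdash_{\Ax}\Diamond\chi\imp\Diamond\psi$, so $\Diamond\psi\in\Phi$, against the hypothesis. Thus $\chi\not\vdash_{\Ax}\psi$, and the Lindenbaum lemma gives the required $\Psi$.

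(ii) When $\axKbd\in\Ax$, consider $U=\tilde\phi\cap\tilde\psi^c$ with $\Box\phi\in\Phi$. Here we need $\{\phi,\chi\}\not\vdash_{\Ax}\psi$. Otherwise $\vdash_{\Ax}\phi\imp(\chi\imp\psi)$, and $\monbox$ gives $\Box\phi\imp\Box(\chi\imp\psi)$, so $\Box(\chi\imp\psi)\in\Phi$. Then $\axKdiam$ gives $\Diamond\chi\imp\Diamond\psi$, whence $\Diamond\psi\in\Phi$, again a contradiction. Lindenbaum then gives $\Psi\ni\phi,\chi$ with $\psi\notin\Psi$.

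This second case is the only step that uses more than $\WM$, and it is where the definition of $\ms{D}_{\Phi,\psi}$ is tailored to need exactly $\axKdiam$. The remaining axioms in $\Ax$ play no role, since they are only used later for the frame conditions.
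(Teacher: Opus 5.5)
Your proof is correct and follows essentially the same route as the paper: you use $\str$ for the neighbourhoods in $\ms{B}_\Phi$, $\mondiam$ for $\tilde\psi^c$, and $\axKdiam$ for $\tilde\phi\cap\tilde\psi^c$, each combined with the Lindenbaum lemma. The only difference is that you spell out the intermediate $\monbox$ step ($\Box\phi \imp \Box(\chi\imp\psi)$), which the paper leaves implicit when it invokes $\axKdiam$.
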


\begin{proof}
  In each of the cases, the $\Box$-condition is clearly satisfied,
  so we prove the $\Diamond$-conditions.
  For $(\Phi, \ms{B}_{\Phi})$, we need to show that for all
  $\Box\phi, \Diamond\chi \in \Phi$ we have
  $\tilde{\phi} \cap \tilde{\chi} \neq \emptyset$.
  Suppose towards a contradiction that this were not the case,
  then $\phi, \chi \vdash_{\Ax} \bot$, hence by $\str$
  $\Box\phi, \Diamond\chi \vdash_{\Ax} \bot$.
  But then $\bot \in \Phi$, a contradiction.
  
  For $(\Phi, \ms{D}_{\Phi,\psi})$, if $\axKbd \notin \Ax$
  then we need to additionally show that
  $\tilde{\psi}^c \cap \tilde{\chi} \neq \emptyset$ for any $\Diamond\chi \in \Phi$.
  If $\tilde{\psi}^c \cap \tilde{\chi} = \emptyset$, then
  $\tilde{\chi} \subseteq \tilde{\psi}$, which implies $\chi \vdash_{\Ax} \psi$.
  But then $\mondiam$ entails $\Diamond\chi \vdash_{\Ax} \Diamond\psi$,
  so that $\Diamond\psi \in \Phi$ by deductive closure of $\Phi$, a contradiction.
  
  Finally, if $\axKbd \in \Ax$ then we need to also prove that
  $\Box\phi, \Diamond\chi \in \Phi$ we have
  $(\tilde{\phi} \cap \tilde{\psi}^c) \cap \tilde{\chi} \neq \emptyset$.
  Suppose towards a contradiction that this intersection is empty.
  Then $\tilde{\phi} \cap \tilde{\chi} \subseteq \tilde{\psi}$,
  hence $\phi, \chi \vdash_{\Ax} \psi$,
  so by $\axKdiam$ also $\Box\phi, \Diamond\chi \vdash_{\Ax} \Diamond\psi$.
  This implies $\Diamond\psi \in \Phi$, a contradiction.
\end{proof}

\begin{lemma}\label{lem:segments}
  Let $(\Phi, \ms{N})$ be an $\Ax$-segment and $\psi$ a formula with $\Diamond\psi \notin \Phi$, such that
  $\ms{N} \in \{ \ms{B}_{\Phi}, \ms{D}_{\Phi,\psi} \}$.
  
  \medskip\noindent\hspace{.5em}
  \begin{tabular}{p{.25em}lp{.25em}l}
    1. & If $\axN \in \Ax$ then $\ms{N} \neq \emptyset$;
       & 4. & $\axD \in \Ax$ then $U_1 \cap U_2 \neq \emptyset$ for all $U_1, U_2 \in \ms{N}$; \\[.4em]
    2. & If $\axP \in \Ax$ then $\emptyset \notin \ms{N}$;
       & 5. & If $\axKbd \in \Ax$ then $U_1 \cap U_2 \in \ms{N}$ for all $U_1, U_2 \in \ms{N}$. \\[.4em]
    3. & If $\axTbd \in \Ax$, then $\Phi \in U$ for all $U \in \ms{N}$;
  \end{tabular}
\end{lemma}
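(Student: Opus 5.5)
We check each item in turn, treating $\ms{N} = \ms{B}_{\Phi}$ and $\ms{N} = \ms{D}_{\Phi,\psi}$ separately. Throughout we use the identity $\tilde{\phi} \cap \tilde{\chi} = \widetilde{\phi \wedge \chi}$, the Lindenbaum lemma, and the fact that $\tilde{\phi} \subseteq \tilde{\chi}$ implies $\phi \vdash_{\Ax} \chi$. We also use that $\ms{D}_{\Phi,\psi}$ contains the extra set $\tilde{\psi}^c$, and, if $\axKbd \in \Ax$, also the sets $\tilde{\phi} \cap \tilde{\psi}^c$.

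Item 1 is immediate. If $\axN \in \Ax$ then $\Box\top \in \Phi$, so $\tilde{\top} \in \ms{B}_{\Phi} \subseteq \ms{D}_{\Phi,\psi}$.

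Item 2 splits by the kind of set.
\begin{itemize}
\item For a set $\tilde{\phi}$ with $\Box\phi \in \Phi$: if $\tilde{\phi} = \emptyset$ then $\phi \vdash_{\Ax} \bot$. By $\monbox$ we get $\Box\phi \vdash \Box\bot$, and Lemma~\ref{lemma:inclusions} gives $\axPbox$. So $\bot \in \Phi$, a contradiction.
\item For $\tilde{\psi}^c$: if it is empty then $\vdash_{\Ax} \psi$. By $\mondiam$ and $\axP$ this gives $\Diamond\psi \in \Phi$, a contradiction.
\item For $\tilde{\phi} \cap \tilde{\psi}^c$ (when $\axKbd \in \Ax$): if it is empty then $\phi \vdash \psi$, so $\Box\phi \vdash \Box\psi$. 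Then $\axD$ (derivable from $\axK \oplus \axP$ by Lemma~\ref{lemma:inclusions}), or alternatively $\axKdiam$ with $\Diamond\top$, yields $\Diamond\psi \in \Phi$, a contradiction.
\end{itemize}

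Item 3 also splits by the kind of set.
\begin{itemize}
\item If $\Box\phi \in \Phi$ then $\axTbox$ gives $\phi \in \Phi$, so $\Phi \in \tilde{\phi}$.
\item Since $\Diamond\psi \notin \Phi$, $\axTdiam$ forces $\psi \notin \Phi$, so $\Phi \in \tilde{\psi}^c$. This is why $\axTbd$ needs both halves.
\item The intersection sets $\tilde{\phi} \cap \tilde{\psi}^c$ then follow from the two previous cases.
\end{itemize}

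Item 4 is the main case analysis.
\begin{itemize}
\item Two box-sets: use $\RD$, which is derivable from $\axD$. If $\tilde{\phi_1} \cap \tilde{\phi_2} = \emptyset$ then $\phi_1 \wedge \phi_2 \vdash \bot$, so $\Box\phi_1 \wedge \Box\phi_2 \vdash \bot$, a contradiction.
\item A box-set with $\tilde{\psi}^c$: if they are disjoint then $\phi \vdash \psi$. Then $\Box\phi \vdash \Box\psi \vdash \Diamond\psi$ by $\axD$, a contradiction.
\item $\tilde{\psi}^c$ with itself: it must be nonempty. This holds because $\axD$ gives $\axP$ over $\IM$ (Lemma~\ref{lemma:inclusions}), so we can argue as in item 2.
\item Intersection sets (when $\axKbd \in \Ax$): if $\tilde{\phi_1} \cap \tilde{\phi_2} \cap \tilde{\psi}^c = \emptyset$ then $\phi_1 \wedge \phi_2 \vdash \psi$. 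By $\axCbox$ we get $\Box(\phi_1 \wedge \phi_2) \in \Phi$, and then $\axD$ and $\mondiam$ give $\Diamond\psi \in \Phi$, a contradiction. The mixed cases reduce to this one by taking $\phi_2 = \phi_1$ or by using the earlier cases.
\end{itemize}

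Item 5 is the expected main obstacle, because closure under intersection must hold exactly, not merely up to inclusion.
\begin{itemize}
\item $\tilde{\phi_1} \cap \tilde{\phi_2} = \widetilde{\phi_1 \wedge \phi_2}$, and $\Box(\phi_1 \wedge \phi_2) \in \Phi$ by $\axCbox$.
\item $\tilde{\phi} \cap \tilde{\psi}^c$ lies in $\ms{D}_{\Phi,\psi}$ by definition.
\item Further intersections such as $(\tilde{\phi_1} \cap \tilde{\psi}^c) \cap \tilde{\phi_2}$ rewrite as $\widetilde{\phi_1 \wedge \phi_2} \cap \tilde{\psi}^c$.
\item $\tilde{\psi}^c \cap \tilde{\psi}^c = \tilde{\psi}^c$.
\end{itemize}
So the family is closed under binary intersections, provided we check all the combinations.
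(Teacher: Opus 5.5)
Your proof is correct and follows the same item-by-item case analysis as the paper, using the Lindenbaum lemma and the derived principles of Lemma~\ref{lemma:inclusions} in the same way. The differences are cosmetic: you invoke $\axPbox$ and $\RD$ where the paper unfolds them into $\axmnc$ and $\str$, and you explicitly treat the case $U_1 = U_2 = \tilde{\psi}^c$ in item 4 (via $\vdash_{\IM \oplus \axD} \axP$), which the paper leaves implicit.
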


\begin{proof}
  (1) Since $\Box\top \in \Phi$ we always get $\PF_{\Ax} \in \ms{N}$.

  (2) By assumption $\Diamond\top \in \Phi$, hence it follows from
      $\axmnc$ that $\Box\bot \notin \Phi$.
      This entails $\emptyset \notin \ms{B}_{\Phi}$.
      If $\emptyset \in \ms{D}_{\Phi,\psi}$ then we must have $\tilde{\psi}^c = \emptyset$
      (if $\axK \notin \Ax$) or $\tilde{\phi} \cap \tilde{\psi}^c = \emptyset$
      for some $\Box\phi \in \Phi$ (if $\axK \in \Ax$).
      In the former case $\psi$ must be equivalent to $\top$,
      but then the $\Diamond\top \in \Phi$ implies $\Diamond\psi \in \Phi$,
      a contradiction.
      In the latter case, combining $\axP$ and $\axKdiam$ yields
      $\axD$, so $\Box\phi \in \Phi$ implies $\Diamond\phi \in \Phi$.
      Moreover, the assumption $\tilde{\phi} \cap \tilde{\psi}^c = \emptyset$
      yields $\phi \vdash_{\Ax} \psi$, hence by monotonicity
      $\Diamond\phi \vdash_{\Ax} \Diamond\psi$ and therefore $\Diamond\psi \in \Phi$,
      and we have reached a contradiction again.
      Therefore we conclude $\emptyset \notin \ms{D}_{\Phi,\psi}$, as desired..

  (3) If $U = \tilde{\phi}$ for some $\Box\phi \in \Phi$,
      then using $\axTbox$ we find $\phi \in \Phi$, hence $\Phi \in \tilde{\phi} = U$.
      If $U = \tilde{\psi}^c$ for some $\Diamond\psi \notin \Phi$
      then using $\axTdiam$ we find $\psi \notin \Phi$, hence
      $\Phi \in \tilde{\psi}^c = U$.
      Finally, if $U$ is of the form $\tilde{\phi} \cap \tilde{\psi}^c$
      for some $\Box\phi \in \Phi$ and $\Diamond\psi \notin \Phi$,
      then combining the previous two cases gives $\Phi \in U$ again.
      
  (4) Suppose $U_1 = \tilde{\phi}$ and $U_2 = \tilde{\chi}$ for some
      $\Box\phi, \Box\chi \in \Phi$.
      Then $\axD$ entails $\Diamond\chi \in \Phi$, hence
      $\Box\phi, \Diamond\chi \not\vdash_{\Ax} \bot$.
      Now $\str$ entails $\phi, \chi \not\vdash_{\Ax} \bot$,
      so that the Lindenbaum lemma tells us that
      $\tilde{\phi} \cap \tilde{\chi} \neq \emptyset$, as desired.
      
      Next, suppose $U_1 = \tilde{\phi}$ and $U_2 = \tilde{\psi}^c$
      for some $\Box\phi \in \Phi$ and $\Diamond\psi \notin \Phi$.
      If $\tilde{\phi} \cap \tilde{\psi}^c = \emptyset$
      then $\tilde{\phi} \subseteq \tilde{\psi}$, so $\phi \vdash_{\Ax} \psi$.
      Now $\monbox$ yields $\Box\phi \vdash_{\Ax} \Box\psi$
      and by $\axD$ we have $\Box\psi \vdash_{\Ax} \Diamond\psi$,
      so that deductive closure of $\Phi$ entails $\Diamond\psi \in \Phi$,
      a contradiction.
      If $\axKbd \notin \Ax$ then this completes the proof for
      $(\Phi, \ms{D}_{\Phi,\psi})$.
      If $\axKbd \in \Ax$ then $\axCbox$ entails that we can reduce
      any intersection of the sets in $\ms{D}_{\Phi,\psi}$
      to a set of the form $\tilde{\phi} \cap \tilde{\chi}$ or
      $\tilde{\phi} \cap \tilde{\psi}^c$, where $\Box\phi, \Box\chi \in \Phi$
      and $\Diamond\psi \notin \Phi$, and we have already seen that
      such intersections are nonempty.

  (5) Both cases follow immediately from the fact that
      derivability of $\axCbox$ entails that $\ms{B}_{\Phi}$ is closed
      under binary intersections.
\end{proof}

\begin{lemma}\label{lem:truth-lemma-helper}
  Let $\Phi$ be an $\Ax$-prime theory such that
  $\Box\top \notin \Phi$ and $\Diamond\phi \notin \Phi$.
  Then there exists an $\Ax$-prime theory $\Phi'$ extending $\Phi$
  that contains $\Box\top$ but not $\Diamond\phi$.
\end{lemma}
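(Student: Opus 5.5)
The plan is to apply the Lindenbaum lemma to the set $\Phi \cup \{\Box\top\}$ and the formula $\Diamond\phi$. It then suffices to show that $\Phi \cup \{\Box\top\} \not\vdash_{\Ax} \Diamond\phi$. Once this is established, the Lindenbaum lemma gives an $\Ax$-prime theory $\Phi' \supseteq \Phi \cup \{\Box\top\}$ with $\Diamond\phi \notin \Phi'$, which is exactly the theory the statement asks for.

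Suppose towards a contradiction that $\Phi \cup \{\Box\top\} \vdash_{\Ax} \Diamond\phi$. By definition there are $\chi_1, \ldots, \chi_n \in \Phi$ (together possibly with $\Box\top$ itself) such that $\vdash_{\Ax} (\chi_1 \wedge \cdots \wedge \chi_n \wedge \Box\top) \to \Diamond\phi$. Writing $\chi := \chi_1 \wedge \cdots \wedge \chi_n$, which lies in $\Phi$ by deductive closure, intuitionistic propositional reasoning (currying) yields $\vdash_{\Ax} \chi \to (\Box\top \to \Diamond\phi)$. Hence $\Phi \vdash_{\Ax} \Box\top \to \Diamond\phi$, and by deductive closure $\Box\top \to \Diamond\phi \in \Phi$.

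We now invoke the axiom $\axIdiam$, namely $(\Box\top \to \Diamond\phi) \to \Diamond\phi$, which belongs to $\IM$ and therefore to $\IM \oplus \Ax$. Modus ponens together with deductive closure gives $\Diamond\phi \in \Phi$, contradicting the hypothesis. Therefore $\Phi \cup \{\Box\top\} \not\vdash_{\Ax} \Diamond\phi$, and the Lindenbaum lemma finishes the argument.

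The assumption $\Box\top \notin \Phi$ is not actually needed for the argument; it only reflects the context in which the lemma will be used, presumably the truth lemma for $\Diamond$ at continual canonical worlds. No step is a real obstacle. The only point needing care is the deduction-theorem step that turns the finite-premise definition of $\vdash_{\Ax}$ into membership of the implication in $\Phi$, and this is routine intuitionistic propositional reasoning.
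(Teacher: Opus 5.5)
Your proposal is correct and follows essentially the same route as the paper. It reduces the claim via the Lindenbaum lemma to $\Phi, \Box\top \not\vdash_{\Ax} \Diamond\phi$, curries the finite derivation to obtain $\Box\top \to \Diamond\phi$ from $\Phi$, and then uses $\axIdiam$ and deductive closure to get $\Diamond\phi \in \Phi$, a contradiction. Your observation that the hypothesis $\Box\top \notin \Phi$ is not used is also accurate; the paper's proof does not use it either.
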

\begin{proof}
  It suffices to prove $\Phi, \Box\top \not\vdash_{\Ax} \Diamond\phi$,
  because then the Lindenbaum lemma yields the desired $\Phi'$.
  If this is not the case, i.e.~if $\Phi, \Box\top \vdash_{\Ax} \Diamond\phi$,
  then there exists $\gamma \in \Phi$ such that
  $\gamma, \Box\top \vdash_{\Ax} \Diamond\phi$,
  hence $\gamma \vdash_{\Ax} \Box\top \to \Diamond\phi$.
  Then $\axIdiam$ entails $\gamma \vdash \Diamond\phi$,
  so that $\Diamond\phi \in \Phi$ by deductive closure of $\Phi$,
  a contradiction.
\end{proof}

  We base our canonical model on segments arising from Lemma~\ref{lem:seg-BD}.
  To ensure that the canonical model is continual (Definition~\ref{def:continual}),
  we impose the restriction that if $\Box\top \notin \Phi$ then
  we only use the segment $(\Phi, \ms{B}_{\Phi})$.

\begin{definition}
  The \emph{canonical model} is the tuple 
  $\mo{M}_{\Ax} = (W_{\Ax}, \subsetsim, \neigh_{\Ax}, V_{\Ax})$, where:
  \begin{align*}
    W_{\Ax} &:= \{ (\Phi, \ms{B}_{\Phi}) \mid \Phi \in \PF_{\Ax} \}
                \cup \{ (\Phi, \ms{D}_{\Phi,\psi}) 
                        \mid \Phi \in \PF_{\Ax}, 
                             \Box\top \in \Phi, 
                             \Diamond\psi \notin \Phi \} \\
    (\Phi, \ms{N}) &\subsetsim (\Psi, \ms{M}) \iff \Phi \subseteq \Psi \\
    \neigh_{\Ax}(\Phi, \ms{N})
      &:= \{ \{ (\Psi, \ms{M}) \in W_{\Ax} \mid \Psi \in \ms{U} \} \mid \ms{U} \in \ms{N} \} \\
    V_{\Ax}(p) &:= \{ (\Phi, \ms{N}) \in W_{\Ax} \mid p \in \Phi \}
  \end{align*}
\end{definition}

  Note that the definition above defines two different canonical
  models, depending on whether or not $\axKbd \in \Ax$ (see Lemma~\ref{lem:seg-BD}).
  Both are continual:
  if $(\Phi, \ms{N})$ is a world such that $\neigh(\Phi, \ms{N}) = \emptyset$,
  then $\ms{N} = \emptyset$ and we must have $\Box\top \notin \Phi$.
  This implies $\ms{N} = \ms{B}_{\Phi}$.
  Then for any $(\Psi, \ms{M}) \subsetsim (\Phi, \ms{N})$ we find
  $\ms{M} = \ms{B}_{\Psi} = \emptyset$.

\begin{lemma}[Truth lemma] \label{lem:truth}
  For all $\phi \in \lan$ and $(\Phi, \ms{N}) \in W_{\Ax}$ we have
  $\mo{M}_{\Ax}, (\Phi, \ms{N}) \Vdash \phi$ iff $\phi \in \Phi$.
\end{lemma}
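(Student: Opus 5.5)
We argue by induction on the complexity $c(\phi)$. The cases $p$, $\bot$, $\wedge$, $\vee$ are routine: atoms by the definition of $V_{\Ax}$, $\bot$ by consistency, $\wedge$ by deductive closure, and $\vee$ by primeness plus deductive closure. For $\phi\to\psi$, the right-to-left direction uses that $(\Phi,\ms{N})\subsetsim(\Psi,\ms{M})$ means $\Phi\subseteq\Psi$, together with modus ponens inside $\Psi$. For the converse, suppose $\phi\to\psi\notin\Phi$. Then $\Phi,\phi\not\vdash_{\Ax}\psi$, and the Lindenbaum lemma gives a prime $\Psi\supseteq\Phi\cup\{\phi\}$ with $\psi\notin\Psi$. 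The world $(\Psi,\ms{B}_\Psi)\in W_{\Ax}$ then refutes $\phi\to\psi$ by the induction hypothesis. Every prime theory underlies some world, namely $(\Psi,\ms{B}_\Psi)$, so the constructions below may always use this canonical choice.

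For $\Box\phi$, assume first $\Box\phi\in\Phi$ and let $(\Psi,\ms{M})$ be any world above $(\Phi,\ms{N})$. Then $\Box\phi\in\Psi$, and since $(\Psi,\ms{M})$ is a segment there is $\ms{U}\in\ms{M}$ all of whose members contain $\phi$. The corresponding neighbourhood in $\neigh_{\Ax}(\Psi,\ms{M})$ consists of worlds $(\Theta,\cdot)$ with $\Theta\in\ms{U}$, and these satisfy $\phi$ by the induction hypothesis. Conversely, suppose $\Box\phi\notin\Phi$; we refute $\Box\phi$ already at $(\Phi,\ms{B}_\Phi)$, which lies above $(\Phi,\ms{N})$ since $\subsetsim$ compares only first components. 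Every neighbourhood there has the form $\tilde\chi$ with $\Box\chi\in\Phi$, lifted to worlds. If all worlds in it satisfied $\phi$, then $\tilde\chi\subseteq\tilde\phi$, because each prime theory in $\tilde\chi$ underlies some world. Hence $\chi\vdash_{\Ax}\phi$ by Lindenbaum, so $\Box\chi\vdash\Box\phi$ by $\monbox$, contradicting $\Box\phi\notin\Phi$.

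For $\Diamond\phi$, assume first $\Diamond\phi\in\Phi$. Any world $(\Psi,\ms{M})$ above has $\Diamond\phi\in\Psi$, and the segment condition gives, in every $\ms{U}\in\ms{M}$, a theory containing $\phi$. That theory underlies a world in the lifted neighbourhood, and by the induction hypothesis this world satisfies $\phi$; note that the lifted neighbourhood contains every world whose first component lies in $\ms{U}$. This direction is fine.

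The main obstacle is the converse for $\Diamond\phi$, where continuality, i.e.\ $\axIdiam$, enters. Suppose $\Diamond\phi\notin\Phi$. If $\Box\top\in\Phi$, then $(\Phi,\ms{D}_{\Phi,\phi})\in W_{\Ax}$ lies above $(\Phi,\ms{N})$ and has the neighbourhood lifted from $\tilde\phi^c$. No world in it satisfies $\phi$ by the induction hypothesis, so $\Diamond\phi$ fails. If $\Box\top\notin\Phi$, we apply Lemma~\ref{lem:truth-lemma-helper} to get a prime $\Phi'\supseteq\Phi$ with $\Box\top\in\Phi'$ and $\Diamond\phi\notin\Phi'$. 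Then $(\Phi',\ms{D}_{\Phi',\phi})$ is a world above $(\Phi,\ms{N})$ with a neighbourhood lifted from $\tilde\phi^c$, and again $\Diamond\phi$ fails. The segment property of the $\ms{D}$-worlds is guaranteed by Lemma~\ref{lem:seg-BD}, so the remaining care is just checking that the induction hypothesis is applied only to $\phi$, which has lower complexity.
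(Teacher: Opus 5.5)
Your proof is correct and follows essentially the same route as the paper. You refute $\Box\phi$ at $(\Phi,\ms{B}_\Phi)$ using $\monbox$ and the Lindenbaum lemma, and you refute $\Diamond\phi$ at $(\Phi,\ms{D}_{\Phi,\phi})$, or at $(\Phi',\ms{D}_{\Phi',\phi})$ from Lemma~\ref{lem:truth-lemma-helper} when $\Box\top\notin\Phi$; you additionally spell out the propositional cases and the upward-persistence steps that the paper leaves implicit.
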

\begin{proof}
  The proof proceeds by induction on the structure of $\phi$.
  The propositional cases are routine.
  
  \medskip\noindent
  \textit{Case $\phi = \Box\psi$.}
    By definition of a segment, $\Box\psi \in \Phi$ implies
    $(\Phi, \ms{N}) \Vdash \Box\psi$.
    Conversely, if $(\Phi, \ms{N}) \Vdash \Box\psi$
    then since $(\Phi, \ms{N}) \subsetsim (\Phi, \ms{B}_{\Phi})$
    we have $(\Phi, \ms{B}_{\Phi}) \Vdash \Box\psi$.
    This means that there exists some $\Box\chi \in \Phi$
    such that every prime theory in $\tilde{\chi}$ satisfies $\psi$,
    hence by induction every such prime theory contains $\psi$,
    i.e.~$\tilde{\chi} \subseteq \tilde{\psi}$.
    Therefore $\chi \vdash \psi$, so combining monotonicity and the fact
    that $\Box\chi \in \Phi$ entails $\Box\psi \in \Phi$.
  
  \medskip\noindent
  \textit{Case $\phi = \Diamond\psi$.}
    By definition $\Diamond\psi \in \Phi$ implies
    $(\Phi, \ms{N}) \Vdash \Diamond\psi$.
    For the converse, suppose $\Diamond\psi \notin \Phi$.
    If $\Box\top \in \Phi$, then
    $(\Phi, \ms{D}_{\Phi,\psi})$ is a segment in $W_{\Ax}$
    such that $\tilde{\psi}^c \in \ms{D}_{\Phi,\psi}$,
    hence $\{ (\Psi, \ms{M}) \mid \Psi \in \tilde{\psi}^c \} \in \neigh_{\Ax}(\Phi, \ms{D}_{\Phi,\psi})$.
    Since $\psi \notin \Psi$ for all such $(\Psi, \ms{M})$,
    by the induction hypothesis no segment in 
    $\{ (\Psi, \ms{M}) \mid \Psi \in \tilde{\psi}^c \}$ satisfies $\psi$.
    Since $(\Phi, \ms{N}) \subsetsim (\Phi, \ms{D}_{\Phi,\psi})$
    this proves $(\Phi, \ms{N}) \not\Vdash \Diamond\psi$.
    If $\Box\top \notin \Phi$ then
    Lemma~\ref{lem:truth-lemma-helper} yields a prime theory $\Phi'$
    containing $\Phi$ and $\Box\top$ but not $\Diamond\psi$,
    so that $(\Phi, \ms{N}) \subsetsim (\Phi', \ms{D}_{\Phi',\psi})$
    witnesses $(\Phi, \ms{N}) \not\Vdash \Diamond\psi$.
\end{proof}

  In the usual way, we can now obtain soundness and completeness:

\begin{theorem}
  Let $\Ax \subseteq \{ \axN, \axP, \axTbd, \axD, \axKbd \}$.
  Then $\Vdash_{\IM \oplus \Ax} \phi$ if and only if $\vdash_{\IM \oplus \Ax} \phi$.
\end{theorem}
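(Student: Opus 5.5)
The theorem splits into soundness and completeness, and I would prove them in that order using the results above.

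\emph{Soundness.} Here I read $\Vdash_{\IM \oplus \Ax}$ as validity on every continual frame satisfying the conditions of Lemma~\ref{lem:frame-conditions} for the axioms in $\Ax$. The argument is by induction on the length of a derivation in $\IM \oplus \Ax$.

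The intuitionistic propositional axioms and modus ponens are handled as usual, using persistence: truth of any formula is upward closed along $\leq$. This follows by a routine induction on formulas, and the $\Box$ and $\Diamond$ clauses are persistent by construction because they quantify over all $v \geq w$. Uniform substitution is harmless because we work with frame validity.

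For $\monbox$ and $\mondiam$, suppose $\phi \imp \psi$ is valid on the frame. Then every world forcing $\phi$ forces $\psi$, so a neighbourhood witnessing $\Box\phi$ also witnesses $\Box\psi$, and a world witnessing $\phi$ inside a neighbourhood also witnesses $\psi$.

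For $\axmnc$, suppose $w \Vdash \Box\phi \land \Diamond\neg\phi$. Applying the $\Box$ clause at $v = w$ gives a neighbourhood $a \in \neigh(w)$ all of whose members force $\phi$. The $\Diamond$ clause then gives some $u \in a$ with $u \Vdash \neg\phi$. This $u$ forces both $\phi$ and $\neg\phi$, which is impossible.

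Validity of $\axIdiam$ on continual frames, and of the axioms in $\Ax$ under the corresponding conditions, is exactly Lemma~\ref{lem:frame-conditions}.

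\emph{Completeness.} Suppose $\not\vdash_{\IM \oplus \Ax} \phi$. By the Lindenbaum lemma (applied with $\Phi = \emptyset$) there is an $\Ax$-prime theory $\Phi$ with $\phi \notin \Phi$. The segment $(\Phi, \ms{B}_\Phi)$ is a world of $\mo{M}_\Ax$, so by the Truth Lemma~\ref{lem:truth} this world does not force $\phi$.

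It remains to check that the underlying frame of $\mo{M}_\Ax$ lies in the right class. Three things need verifying.
\begin{enumerate}
\item $\subsetsim$ is a preorder and each $V_\Ax(p)$ is an upset; both are immediate.
\item The frame is continual; this was observed right before the Truth Lemma.
\item For each axiom in $\Ax$, every $\neigh_\Ax(\Phi,\ms{N})$ satisfies the matching condition of Lemma~\ref{lem:frame-conditions}.
\end{enumerate}
Item 3 is obtained by transferring Lemma~\ref{lem:segments} along the map sending a set $\ms{U} \in \ms{N}$ of prime theories to the set of worlds $\{(\Psi,\ms{M}) \in W_\Ax \mid \Psi \in \ms{U}\}$. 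This map preserves emptiness and nonemptiness, because every prime theory $\Psi$ occurs in some world, namely $(\Psi, \ms{B}_\Psi)$. It also commutes with binary intersection. Hence the conditions for $\axN$, $\axP$, $\axD$ and $\axKbd$ carry over directly from Lemma~\ref{lem:segments}.

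For $\axTbd$, Lemma~\ref{lem:segments}(3) gives $\Phi \in \ms{U}$ for every $\ms{U} \in \ms{N}$. Since membership of a world in the image of $\ms{U}$ depends only on its first component, $(\Phi,\ms{N})$ belongs to every one of its own neighbourhoods.

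I expect the main obstacle to be this last transfer step. The delicate point is that Lemma~\ref{lem:segments} is stated only for segments of the form $\ms{B}_\Phi$ or $\ms{D}_{\Phi,\psi}$. This is precisely why $W_\Ax$ was restricted to such segments, so I would check that the restriction covers every world. In the $\axP$ case I would also confirm that the image of a nonempty $\ms{U}$ is nonempty.
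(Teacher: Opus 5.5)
Your proposal is correct and is exactly the ``usual way'' the paper alludes to without writing out. Soundness comes from Lemma~\ref{lem:frame-conditions} plus the routine checks, and completeness from the canonical model $\mo{M}_{\Ax}$, the Truth Lemma~\ref{lem:truth}, continuality, and transferring Lemma~\ref{lem:segments} along $\ms{U} \mapsto \{(\Psi,\ms{M}) \in W_{\Ax} \mid \Psi \in \ms{U}\}$. The paper's $\Vdash$ ranges over all frames validating the axioms rather than your class of continual frames meeting the conditions. Your argument still covers that reading, because your class validates $\IM \oplus \Ax$ by Lemma~\ref{lem:frame-conditions} and soundness for the larger class is immediate.
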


%%%%%%%%%%%%%%%%%%%%%%%%%%%%%%%%%%%%%%%%%%%%%%%%%%%%%%%%%%%%%%%%%%%%%%%%%%%%%%%%
\section{A sequent calculus for $\IM$}\label{sec:CIM}

  We present the sequent calculus $\CIM$,
  provide a syntactic proof of cut elimination,
  show that it is sound and complete with respect to $\IM$,
  and observe that a small modification yields a calculus for $\WM$.

%%%%%%%%%%%%%%%%%%%%%%%%%%%%%%%%%%%%%%%%%%%%%%%%%%%%%%%%%%%%%%%%%%%%%%%%%%%%%%%%
\subsection{The sequent calculus $\CIM$}

\begin{definition} \
  \begin{enumerate}
    \item A \emph{marked formula} is a formula $\phi \in \lan$ marked with the
          \emph{input polarity} $\bullet$ or the \emph{output polarity} $\circ$.
          Marked formulas $\bl{\phi}$ and $\wh{\phi}$ are called \emph{input} and
          \emph{output formulas}, respectively.
          We write $\MForm$ for the collection of marked formulas.
    \item A \emph{block} $\B$ is a structure of the form $\nest{\D}$, 
          where $\D$ is a set of formulas that contains one or zero formulas.
          We write $|\Delta|$ for the number of formulas in $\Delta$.
	    A block $\nest{\D}$ is an \emph{input block}, denoted $\bl{\B}$, if $\D$ contains 
          an input formula or is empty, and is an \emph{output block}
          if $\D$ contains an output formula.
          The set of blocks is denoted by $\Blocks$.
    \item A \emph{sequent} $\G$ is defined by the following grammar, 
          where $n, k \geq 0$,
          and $\Lambda$ is understood as a multiset:
          \begin{equation*}
            \G ::= \Lambda, \Pi
            \qquad
            \Lambda ::= \phi^\bullet_1, \ldots, \phi^\bullet_n, \B^\bullet_1, \ldots, \B^\bullet_k
            \qquad
            \Pi ::= \emptyset \mid \wh{\phi} \mid \nest{\wh{\phi}}.
          \end{equation*}
  \end{enumerate}
\end{definition}

In other words, we consider (single-succedent) sequents
with at most one output formula, 
that can occur inside or outside a block.
We use $\G, \D, \Sigma$ as variables for sequents, 
$\Lambda$ as variable for input sequents (with only input components),
and $\Pi$ as variable for output sequents (with only output components).

\begin{definition}
For a sequent $\G$, we define its \emph{output pruning} $\pr{\G}$ to be the same sequent
with the output formula removed (if present).
\end{definition}

Note that
if the output formula occurs inside a block,
the block is not deleted but only emptied.
For instance, $\pr{(\bl{\phi}, \nest{\bl{\psi}}, \wh{\chi})} = \pr{(\bl{\phi}, \nest{\bl{\psi}})}
= \bl{\phi}, \nest{\bl{\psi}}$,
and $\pr{(\bl{\phi}, \nest{\bl{\psi}}, \nest{\wh{\chi}})} = \bl{\phi}, \nest{\bl{\psi}}, \enest$,
where in the latter pruning the output block $\nest{\wh{\chi}}$ is turned into the input block $\enest$. %Tiz:changed
  The \emph{formula interpretation}
  $\fint$ maps sequents to formulas in $\lan$:

\begin{definition}\label{def:fint}
  Define $\fintf : \MForm \cup \Blocks \cup \{ \emptyset \} \to \lan$ by
  $\fintf(\bl{\phi}) = \fintf(\wh{\phi}) = \phi$ and 
  \begin{equation*}
    \fintf(\nest{ \ }) = \Box \top, \quad
    \fintf(\nest{\bl{\phi}}) = \Box \phi, \quad
    \fintf(\nest{\wh{\phi}}) = \diam \phi \quad\text{and}\quad
    \fintf(\emptyset) = \bot.
  \end{equation*}
  Then the \emph{formula interpretation} of a sequent $\Gamma = \Lambda, \Pi$
  is given by
  $\fint(\Gamma) := \big( \bigwedge_{x \in \Lambda} \fintf(x) \big) \imp \fintf(\Pi)$,
  where we take the empty meet to be $\top$.
\end{definition}

  For instance, $\fint(\bl{\phi}, \nest{\bl{\psi}}, \nest{}) = \phi \wedge \Box \psi \wedge \Box\top \imp \bot$ and
  $\fint(\nest{\bl{\phi}}, \nest{\wh{\phi}}) = \Box \phi \imp \diam \phi$.

\begin{definition}
The calculus $\CIM$ for $\IM$ is defined by the rules in Figure~\ref{fig:CIM}.
\end{definition}

The calculus is built on a standard G3-style single-succedent calculus for $\IPL$
\cite{TroSch00},
and contains input and output rules for all connectives.
Ignoring the hypersequent structure (which is not needed for completeness), $\CIM$ can be seen as a single-succedent restriction of the calculus $\HM$ for the classical monotone modal classical logic $\EM$ defined in~\cite{DalLelOliPim21}.
Note also that contrary to nested calculi~\cite{AriDasStr15,Str13}, the blocks of $\CIM$ can only contain (sets of) formulas, and therefore cannot be nested.

\begin{figure}[ht]
\centering
\fbox{\begin{minipage}{\textwidth}
\vspace{0.1cm}
\centering
\ax{\phantom{$\Gamma$}}
\llab{$\bbot$}
\uinf{$\G, \bl{\bot}$}
\disp
\quad
\ax{\phantom{$\Gamma$}}
\llab{$\id$}
\uinf{$\Lambda, \bl{p}, \wh{p}$}
\disp
\quad
\ax{$\G, \bl{\phi}, \bl{\psi}$}
\llab{$\bland$}
\uinf{$\G, \bl{\phi \land \psi}$}
\disp
\quad
\ax{$\Lambda, \wh{\phi}$}
\ax{$\Lambda, \wh{\psi}$}
\llab{$\wland$}
\binf{$\Lambda, \wh{\phi \land \psi}$}
\disp

\vspace{0.3cm}
\ax{$\G, \bl{\phi}$}
\ax{$\G, \bl{\psi}$}
\llab{$\blor$}
\binf{$\G, \bl{\phi \lor \psi}$}
\disp
\quad
\ax{$\Lambda, \wh{\phi}$}
\llab{$\wlor$}
\uinf{$\Lambda, \wh{\phi \lor \psi}$}
\disp
\quad
\ax{$\Lambda, \wh{\psi}$}
\llab{$\wlor$}
\uinf{$\Lambda, \wh{\phi \lor \psi}$}
\disp

\vspace{0.3cm}
\ax{$\pr{\G}, \bl{\phi \imp \psi}, \wh{\phi}$}
\ax{$\G, \bl{\psi}$}
\llab{$\bimp$}
\binf{$\G, \bl{\phi \imp \psi}$}
\disp
\quad
\ax{$\Lambda, \bl{\phi}, \wh{\psi}$}
\llab{$\wimp$}
\uinf{$\Lambda, \wh{\phi \imp \psi}$}
\disp

\vspace{0.3cm}
\ax{$\G, \nest{\bl{\phi}}$}
\llab{$\bbox$}
\uinf{$\G, \bl{\Box \phi}$}
\disp
\quad
\ax{$\D, \wh{\phi}$}
\llab{$\wbox$}
\rlab{($|\D| \leq 1$)}
\uinf{$\Lambda, \nest{\D}, \wh{\Box \phi}$}
\disp
\quad
\ax{$\bl{\phi}, \D$}
\llab{$\bdiam$}
\rlab{($|\D| \leq 1$)}
\uinf{$\G, \bl{\diam \phi}, \nest{\D}$}
\disp
\quad
\ax{$\Lambda, \nest{\wh{\phi}}$}
\llab{$\wdiam$}
\uinf{$\Lambda, \wh{\diam \phi}$}
\disp
\end{minipage}}
\caption{\label{fig:CIM}The rules of $\CIM$.}
\end{figure}

\begin{example}\label{exm:der}
  The formulas $\wh{(\axIdiam)}$ and $\wh{((\diam\bot \imp \bot) \imp \bot) \imp\diam\bot}$
  can be derived as follows:
  \begin{equation*}
\begin{small}
\ax{$\wh{\top}$}
\llab{$\wbox$}
\uinf{$\bl{\Box\top \imp \diam \phi}, \nest{\ }, \wh{\Box\top}$}
\ax{$\bl{\phi}, \wh{\phi}$}
\rlab{$\bdiam$}
\uinf{$\bl{\diam \phi}, \nest{\wh{\phi}}$}
\rlab{$\bimp$}
\binf{$\bl{\Box\top \imp \diam \phi}, \nest{\wh{\phi}}$}
\rlab{$\wdiam$}
\uinf{$\bl{\Box\top \imp \diam \phi}, \wh{\diam \phi}$}
\rlab{$\wimp$}
\uinf{$\wh{(\Box\top \imp \diam \phi) \imp \diam \phi}$}
\disp
\quad
\ax{}
\llab{$\bbot$}
\uinf{$\bl{\bot}, \nest{\wh{\bot}}$}
\ax{}
\rlab{$\bbot$}
\uinf{$\bl{\bot}$}
\rlab{$\bdiam$}
\uinf{$\bl{(\diam\bot \imp \bot) \imp \bot}, \enest, \bl{\diam\bot}, \wh{\bot}$}
\rlab{$\wimp$}
\uinf{$\bl{(\diam\bot \imp \bot) \imp \bot}, \enest, \wh{\diam\bot \imp \bot}$}
\rlab{$\bimp$}
\binf{$\bl{(\diam\bot \imp \bot) \imp \bot}, \nest{\wh{\bot}}$}
\rlab{$\wdiam$}
\uinf{$\bl{(\diam\bot \imp \bot) \imp \bot}, \wh{\diam\bot}$}
\rlab{$\wimp$}
\uinf{$\wh{((\diam\bot \imp \bot) \imp \bot) \imp\diam\bot}$}
\disp
\end{small}
\end{equation*}
\end{example}

We prove that $\CIM$ and $\IM$ are equivalent by showing mutual derivability of rules of $\CIM$ and axioms of $\IM$.
Note that the principal block $\nest{\D}$ of the $\bdiam$-rule can be either an input or output block,
so that the rule simultaneously expresses $\str$ and $\mondiam$ of $\IM$. 
Moreover, observe the crucial role of output pruning $\downarrow$ in the derivation of $\axIdiam$.
Conversely, we show that the rule $\bimp$ with pruning $\downarrow$
is only derivable in presence of~$\axIdiam$.
The output pruning $\downarrow$ can be seen as the distinctive feature of the calculus $\CIM$.
We will see in Section~\ref{sec:CWM} that a slight modification of this notion of pruning
yields a calculus for $\WM$.

\begin{remark}\label{rem:sem}
  For a semantic intuition of $\CIM$,
  sequents are interpreted as worlds of a constructive neighbourhood model,
  input formulas are taken to be true and output formulas to be false at
  the corresponding world, and blocks are viewed as neighbourhoods associated
  with that world supporting or falsifying the contained input or output formulas, respectively. 
  Interpreting backward applications of $\bimp$ as transitioning
  from one world to a $\less$-successor, the preservation of (emptied) output
  blocks via $\downarrow$ corresponds to the property of being continual
  that characterises constructive neighbourhood models for~$\IM$.
\end{remark}

A rule $\G_1, \ldots, \G_n / \G$
is \emph{sound} with respect to $\IM$
if $\vdIM \fint(\G_j)$ for all $j \in \{ 1, \ldots, n \}$ implies $\vdIM \fint(\G)$.

\begin{proposition}\label{prop:sound-rules}
All rules of $\CIM$ are sound with respect to $\IM$.
\end{proposition}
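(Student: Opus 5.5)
We verify rule by rule that validity of the formula interpretations of the premisses, under $\fint$ of Definition~\ref{def:fint}, yields validity of the formula interpretation of the conclusion. We may argue either syntactically in $\IM$ or, more conveniently, via the soundness and completeness theorem for continual constructive neighbourhood frames. We proceed syntactically, using Lemma~\ref{lemma:inclusions} (in particular $\str$) where needed. Throughout, write $\Lambda^\wedge$ for the conjunction of the $\fintf$-images of the input components of a sequent.

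\textbf{Propositional and box rules.} The initial rules $\bbot$ and $\id$ and the rules $\bland, \wland, \blor, \wlor, \wimp$ are routine intuitionistic tautologies, since blocks merely contribute conjuncts $\Box\phi$ or $\Box\top$ to the antecedent or the disjunct-free succedent.

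For $\bbox$ we have $\fintf(\nest{\bl{\phi}}) = \Box\phi = \fintf(\bl{\Box\phi})$, so premiss and conclusion have identical interpretations.

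For $\wbox$: if $\D = \bl{\psi}$, the premiss gives $\vdash \psi \imp \phi$, hence $\Box\psi \imp \Box\phi$ by $\monbox$, and we weaken with $\Lambda^\wedge$. If $\D$ is empty, the premiss gives $\vdash \phi$, hence $\vdash \top \imp \phi$ and $\Box\top \imp \Box\phi$ by $\monbox$.

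For $\wdiam$, again $\fintf(\nest{\wh{\phi}}) = \Diamond\phi = \fintf(\wh{\Diamond\phi})$, so the interpretations coincide.

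For $\bdiam$ there are three cases according to $\D$:
\begin{itemize}
\item If $\D = \wh{\psi}$, the premiss yields $\phi \imp \psi$, hence $\Diamond\phi \imp \Diamond\psi$ by $\mondiam$. The conclusion's succedent is $\Diamond\psi$ coming from the block, so we are done.
\item If $\D = \bl{\psi}$, the premiss yields $\phi \wedge \psi \imp \bot$, and $\str$ gives $\Box\psi \wedge \Diamond\phi \imp \bot$. The conclusion then follows by ex falso regardless of its output.
\item If $\D$ is empty, the premiss gives $\neg\phi$. We apply $\str$ with $\top$, reading $\top\wedge\phi\imp\bot$, to get $\Box\top\wedge\Diamond\phi\imp\bot$.
\end{itemize}

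\textbf{The rule $\bimp$.} This is the main obstacle, because of the pruning. Write $\G = \Lambda,\Pi$. The second premiss gives $\Lambda^\wedge \wedge \psi \imp \fintf(\Pi)$, and the first gives $\Lambda'^\wedge \wedge (\phi\imp\psi) \imp \phi$, where $\Lambda' = \pr{\G}$ minus the implication. Note that $\Lambda'$ equals $\Lambda$ plus an extra $\Box\top$ exactly when $\Pi = \nest{\wh{\chi}}$.

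If $\Pi$ is $\emptyset$ or $\wh{\chi}$, then $\Lambda'=\Lambda$, and the usual intuitionistic argument applies: from $\Lambda^\wedge$ and $\phi\imp\psi$ we obtain $\phi$, hence $\psi$, hence $\fintf(\Pi)$.

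The critical case is $\Pi = \nest{\wh{\chi}}$. Here the first premiss reads $\Lambda^\wedge \wedge (\phi\imp\psi)\wedge\Box\top \imp \phi$, and the second reads $\Lambda^\wedge\wedge\psi\imp\Diamond\chi$. Assuming $\Lambda^\wedge$ and $\phi\imp\psi$, we intuitionistically derive $\Box\top\imp\phi$, thus $\Box\top\imp\psi$, thus $\Box\top\imp\Diamond\chi$. Then $\axIdiam$ yields $\Diamond\chi$, as required. This is where $\axIdiam$ is indispensable.

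Finally, the side condition $|\D|\le1$ and the multiset reading of $\Lambda$ cause no further issues, so every rule of $\CIM$ is sound with respect to $\IM$.
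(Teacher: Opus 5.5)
Your proof is correct and follows the paper's argument essentially step by step. For $\bbox$ and $\wdiam$ the interpretations of premiss and conclusion coincide, and $\wbox$ and $\bdiam$ are handled by $\monbox$, $\mondiam$ and $\str$ with the same case split on $\D$. The only non-standard case of $\bimp$, where $\Pi = \nest{\wh{\chi}}$ is pruned to $\enest$, is closed by currying to $\Box\top \imp \Diamond\chi$ and applying $\axIdiam$, exactly as in the paper.
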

\begin{proof}
All propositional rules except $\bimp$ are standard.
Moreover in the rules $\bbox$ and $\wdiam$
the formula interpretations of the premiss and the conclusion coincide, so
they are trivially sound. 

\medskip
\noindent
\textit{Case for rule $\bimp$.} \; Let $\G = \Lambda, \Pi$.
  If $\Pi = \wh{\chi}$ or $\Pi = \emptyset$, 
  then the rule is sound by standard propositional reasoning.
  If not, then we must have $\Pi = \nest{\wh{\chi}}$, so that the rule has premisses
($\Lambda, \bl{\phi \imp \psi}, \nest{ \ }, \wh{\phi}$) and
($\Lambda, \bl{\psi}, \nest{\wh{\chi}}$), and conclusion
($\Lambda, \bl{\phi \imp \psi}, \nest{\wh{\chi}}$).
If the formula interpretations of the premisses hold, then
$\vdIM \fint(\Lambda, \bl{\phi \imp \psi}, \nest{ \ }, \wh{\phi}) = \fint(\Lambda) \land (\phi \imp \psi) \land \Box\top \imp \phi$
and $\vdIM \fint(\Lambda, \bl{\psi}, \nest{\wh{\chi}}) = \fint(\Lambda) \land \psi \imp \diam \chi$.
This implies $\vdIM \fint(\Lambda) \land (\phi \imp \psi) \land \Box\top \imp \diam \chi$,
which can be curried to $\vdIM \fint(\Lambda) \land (\phi \imp \psi) \imp (\Box\top \imp \diam \chi)$.
Now axiom $\axIdiam$ yields
$\vdIM \fint(\Lambda) \land (\phi \imp \psi) \imp \diam \chi = \fint(\Lambda, \bl{\phi \imp \psi}, \nest{\wh{\chi}})$, as desired

\medskip
\noindent
\textit{Case for rule $\wbox$.} \; Either $\D = \bl{\psi}$ for some $\psi$
or $\D = \emptyset$. 
If $\D = \bl{\psi}$, then from $\vdIM \fint(\bl{\psi}, \wh{\phi}) = \psi \imp \phi$,
by $\monbox$ we obtain $\vdIM \Box \psi \imp \Box \phi$,
hence $\vdIM \fint(\G) \land \Box \psi \imp \Box \phi = \fint(\G, \nest{\bl{\psi}}, \wh{\Box \phi})$.
If $\D = \emptyset$, then from $\vdIM \fint(\wh{\phi}) = \phi$ we get $\vdIM \top \imp \phi$, then 
by $\monbox$, $\vdIM \fint(\G) \land \Box \top \imp \Box \phi = \fint(\G, \enest, \wh{\Box \phi})$.

\medskip
\noindent
\textit{Case for rule $\bdiam$.} \; We have $\D = \wh{\psi}$ or $\D = \bl{\psi}$ or $\D = \emptyset$.
If $\D = \wh{\psi}$, then from $\vdIM \fint(\bl{\phi}, \wh{\psi}) = \phi \imp \psi$,
by $\mondiam$ we obtain $\vdIM \fint(\G) \land \diam \phi \imp \diam \psi = \fint(\G, \bl{\diam \phi}, \nest{\wh{\psi}})$.
If $\D = \bl{\psi}$, then from $\vdIM \fint(\bl{\phi}, \bl{\psi}) = \phi \land \psi \imp \bot$,
by the rule \str\ we obtain $\vdIM \diam \phi \land \Box \psi \imp \bot$,
hence $\vdIM \fint(\Lambda) \land \diam \phi \land \Box \psi \imp \fint(\Pi) = \fint(\Lambda, \bl{\diam \phi}, \nest{\bl{\psi}}, \Pi)$.
If $\D = \emptyset$, then from $\vdIM \fint(\bl{\phi}) = \phi \imp \bot$
we get $\vdIM \phi \land \top \imp \bot$,
then by \str,
$\vdIM \diam \phi \land \Box \top \imp \bot$,
hence 
$\vdIM \fint(\Lambda) \land \diam \phi \land \Box \top \imp \fint(\Pi) = \fint(\Lambda, \bl{\diam \phi}, \enest, \Pi)$.
\end{proof}

  A routine induction on the height of the derivation
  now entails the following soundness theorem:

\begin{theorem}[Soundness]
If $\vd_{\CIM} \G$ then $\vdIM \fint(\G)$.
\end{theorem}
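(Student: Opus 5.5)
We argue by induction on the height of a derivation of $\G$ in $\CIM$, in the sense that for every derivation we show that the formula interpretation of its end-sequent is a theorem of $\IM$. The whole argument rests on Proposition~\ref{prop:sound-rules}, which says that for every rule instance $\G_1, \ldots, \G_n / \G$ of $\CIM$, $\vdIM \fint(\G_j)$ for all $j$ implies $\vdIM \fint(\G)$.

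\textbf{Base case.} A derivation of height one is an instance of a zero-premiss rule, namely $\bbot$ or $\id$.
\begin{itemize}
\item For $\bbot$, the conclusion is $\G, \bl{\bot}$. Its interpretation has the form $(\cdots \land \bot \land \cdots) \imp \fintf(\Pi)$, which is an intuitionistic tautology.
\item For $\id$, the conclusion is $\Lambda, \bl{p}, \wh{p}$, with interpretation $\fint(\Lambda)\land p \imp p$. This is also provable in $\IPL$, hence in $\IM$.
\end{itemize}
Equivalently, both cases are the $n=0$ instances of Proposition~\ref{prop:sound-rules}.

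\textbf{Inductive step.} Suppose the last rule applied has premisses $\G_1, \ldots, \G_n$ ($n \in \{1,2\}$) and conclusion $\G$. Each premiss has a derivation of strictly smaller height, so the induction hypothesis gives $\vdIM \fint(\G_j)$ for every $j$. Proposition~\ref{prop:sound-rules} then yields $\vdIM \fint(\G)$.

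The one point needing care is that Proposition~\ref{prop:sound-rules} must cover every instance of each rule, including all side-condition variants. These are:
\begin{itemize}
\item $|\D|\le 1$ in $\wbox$ and $\bdiam$;
\item the pruned premiss in $\bimp$ when the output sits inside a block;
\item context sequents $\G$ with an arbitrary output part $\Pi$.
\end{itemize}
The proof of that proposition already distinguishes these cases. The main potential obstacle is therefore only checking that each rule's case analysis there is exhaustive. In particular, the $\bimp$ case with $\Pi = \nest{\wh{\chi}}$ is the one that uses $\axIdiam$, and it has been treated there. With that confirmed, the induction closes.
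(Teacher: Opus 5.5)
Your proposal is correct and matches the paper's argument: a routine induction on derivation height, with the zero-premiss rules $\bbot$ and $\id$ as base cases, and Proposition~\ref{prop:sound-rules} (including its $\axIdiam$-based case for $\bimp$ with an output block) supplying the inductive step. The only difference is cosmetic: you count axiom instances as height one where the paper uses height zero, and this does not affect the argument.
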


%==============================================================================%
\subsection{Structural properties and cut elimination}

We prove standard structural properties of the calculus $\CIM$, most importantly
admissibility of cut.
As a consequence, we obtain that $\CIM$ is complete with respect to $\IM$ and that
$\IM$ is decidable.
As usual, a rule is called \emph{admissible} in $\CIM$ if derivability of
the premises implies derivability of the conclusion.
A one-premiss rule is \emph{height-preserving admissible} (hp-admissible)
if derivability of the premiss via a derivation of height $h$
implies derivability of the conclusion with a derivation of height $\leq h$.
Finally, a rule $\Rule = (\G_1, ..., \G_n / \G)$ is \emph{height-preserving invertible} (hp-invertible)
if the rule $\Rulein = (\G/ \G_k)$ is hp-admissible for all $1 \leq k \leq n$.

\begin{proposition}\label{prop:gen ax}
The sequent $\Lambda, \bl{\phi}, \wh{\phi}$ is derivable in $\CIM$ for all $\Lambda, \phi$.
\end{proposition}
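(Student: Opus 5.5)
We prove the claim by induction on the complexity $c(\phi)$, simultaneously for all input sequents $\Lambda$. Keeping $\Lambda$ arbitrary matters, because in the modal cases the rules $\wbox$ and $\bdiam$ discard the context. In every step we first apply the output rule for the main connective of $\phi$ and then the corresponding input rule, or the other way round where that is needed. The aim is that the premisses become instances of the induction hypothesis for immediate subformulas, or axioms.

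\emph{Base cases.} If $\phi = p$, the sequent $\Lambda, \bl{p}, \wh{p}$ is an instance of $\id$. If $\phi = \bot$, the sequent $\Lambda, \bl{\bot}, \wh{\bot}$ is an instance of $\bbot$, since $\G$ there is arbitrary.

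\emph{Propositional cases.}
\begin{itemize}
\item For $\phi\land\psi$: apply $\bland$ backwards to get $\Lambda, \bl{\phi}, \bl{\psi}, \wh{\phi\land\psi}$, then $\wland$. Its premisses $\Lambda,\bl{\phi},\bl{\psi},\wh{\phi}$ and $\Lambda,\bl{\phi},\bl{\psi},\wh{\psi}$ are covered by the induction hypothesis with enlarged input context.
\item For $\phi\lor\psi$: apply $\blor$ first, then $\wlor$ choosing the matching disjunct in each branch.
\item For $\phi\imp\psi$: apply $\wimp$ first to reach $\Lambda, \bl{\phi\imp\psi}, \bl{\phi}, \wh{\psi}$. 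Then apply $\bimp$ with $\G = \Lambda,\bl{\phi},\wh{\psi}$. The left premiss is $\pr{\G}, \bl{\phi\imp\psi}, \wh{\phi} = \Lambda, \bl{\phi}, \bl{\phi\imp\psi}, \wh{\phi}$, which is an instance of the induction hypothesis. The right premiss is $\Lambda,\bl{\phi},\bl{\psi},\wh{\psi}$, also an instance of the induction hypothesis. Pruning here merely deletes an output formula, so nothing special happens.
\end{itemize}

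\emph{Modal cases.}
\begin{itemize}
\item For $\Box\phi$: apply $\bbox$ backwards to get $\Lambda, \nest{\bl{\phi}}, \wh{\Box\phi}$. Now $\nest{\bl{\phi}}$ is an input block, so this has the shape $\Lambda', \nest{\D}, \wh{\Box\phi}$ with $\D = \bl{\phi}$. Rule $\wbox$ then reduces it to $\bl{\phi}, \wh{\phi}$, which is an instance of the induction hypothesis with empty context.
\item For $\Diamond\phi$: apply $\wdiam$ to get $\Lambda, \bl{\diam\phi}, \nest{\wh{\phi}}$. Then apply $\bdiam$ with principal block $\nest{\wh{\phi}}$ (so $\D=\wh{\phi}$, which has $|\D|\le 1$) to reduce it to $\bl{\phi}, \wh{\phi}$, again an instance of the induction hypothesis.
\end{itemize}

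\emph{Main obstacle.} The only delicate point is the order of rule applications in the modal cases, together with checking that the side conditions and sequent shapes are respected. The rule $\wbox$ requires its context to be an input sequent $\Lambda$, and after $\wdiam$ the output lives in a block. So one must apply $\bbox$ before $\wbox$, and $\wdiam$ before $\bdiam$, and then confirm that the $\bdiam$ conclusion $\G, \bl{\diam\phi}, \nest{\D}$ matches with $\G = \Lambda$. Everything else is routine bookkeeping.
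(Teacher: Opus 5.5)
Your proof is correct and follows the paper's argument: induction on $c(\phi)$, with the $\Box$ case handled as in the paper (bottom-up $\bbox$ then $\wbox$, down to $\bl{\phi}, \wh{\phi}$), and your remaining cases correctly fill in what the paper dismisses as ``similar''. One small correction to your motivation: generalising over $\Lambda$ is really needed for the propositional cases, where you apply the induction hypothesis with enlarged input contexts, whereas the modal cases only use it with empty context.
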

\begin{proof}
  By induction on the complexity $c$ of $\phi$.
  If $c = 0$, then $\Lambda, \bl{\phi}, \wh{\phi}$ is derived via $\id$ or $\bbot$.
  If $\phi = \Box \psi$, then
  by the induction hypothesis, ($\bl{\psi}, \wh{\psi}$) is derivable in $\CIM$.
  By applying $\wbox$ to this sequent we can then obtain
  ($\Lambda, \nest{\bl{\psi}}, \wh{\Box \psi}$),
  and by $\bbox$, ($\Lambda, \bl{\Box \psi}, \wh{\Box \psi}$),
  which means that
  $\Lambda, \bl{\Box \psi}, \wh{\Box \psi}$ is derivable as well.
  All other cases are similar.
\end{proof}

\begin{proposition}\label{prop:hp-inv-rules}
The rules $\bland$, $\wland$, $\blor$, $\wlor$, $\wimp$, $\bbox$, $\wdiam$ are height-preserving invertible.
The rule $\bimp$ is height-preserving invertible with respect to the right premiss.
\end{proposition}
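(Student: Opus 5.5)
We prove all claims together by induction on the height $h$ of the derivation of the conclusion of the rule in question. For each rule $\Rule$ with principal formula $\chi$ (e.g.\ $\bl{\phi \land \psi}$ for $\bland$), we show that if $\vd_{\CIM} \G, \chi$ with height $h$, then each premiss of $\Rule$ (for $\bimp$ only the right premiss $\G, \bl{\psi}$) is derivable with height at most $h$. The base case $h = 0$ is when $\G, \chi$ is an initial sequent. Since $\id$ only involves atoms, $\chi$ is not principal in it. So the premiss is again an instance of $\id$ or $\bbot$ with the same active atoms. For output rules we must check that the side condition that the context is an input sequent $\Lambda$ persists. This holds because the premiss replaces an output formula by an output formula: $\wimp$ gives $\Lambda, \bl{\phi}, \wh{\psi}$, and $\wdiam$ gives $\Lambda, \nest{\wh{\phi}}$, where an output block counts as output.

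In the inductive step we distinguish whether $\chi$ is principal in the last rule. If it is, the premiss of the last rule is exactly the required sequent, with height $h-1 \le h$. For $\wlor$, the rule $\wlor$ has two forms, so "invertible" must mean that the last step supplies the appropriate disjunct. So for $\wlor$ the claim should be read as hp-admissibility of whichever premiss was used. I suspect the intended statement lists $\wlor$ loosely, and I would check this against the paper's usage.

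If $\chi$ is not principal, we apply the induction hypothesis to the premisses of the last rule $\Rule'$ and reapply $\Rule'$. The delicate points are the rules that modify the context non-uniformly. For $\Rule' = \bimp$, the left premiss has the pruned context $\pr{\G}$. If $\chi$ is an input formula, it survives pruning, and we apply the induction hypothesis there. If $\chi$ is the output formula ($\wland$, $\wlor$, $\wimp$, $\wdiam$), it disappears in $\pr{\G}$. For $\wdiam$, $\wh{\diam\phi}$ is pruned to nothing, while $\nest{\wh{\phi}}$ is pruned to $\enest$. This is a genuine mismatch: the required left premiss contains $\enest$ where the original has none. I would handle it by proving first that weakening with input blocks, in particular $\enest$, is hp-admissible, or by weakening by $\bl{\Box\top}$ followed by inversion of $\bbox$. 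I expect this to be the main obstacle, and I would establish an hp-admissible input weakening lemma (including blocks) beforehand.

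For the modal rules $\wbox$ and $\bdiam$, the conclusion context $\Lambda$ or $\G$ is arbitrary and is discarded in the premiss. So when $\chi$ sits in the discarded context, we reapply the rule with the modified context: the premiss is unchanged, and the height does not increase. One side condition needs care: if $\chi = \wh{\diam\phi}$ is in the weakened context of $\bdiam$, replacing it by $\nest{\wh{\phi}}$ is allowed, because $\G$ in $\bdiam$ may contain outputs. If $\Rule' = \wbox$, then $\chi$ cannot be an output, since $\wh{\Box\phi}$ is already the unique output, and any input $\chi$ lies in $\Lambda$. For the right premiss of $\bimp$, the cases are the same, using $\bl{\psi}$ in place of $\bl{\phi\imp\psi}$. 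Where $\bl{\phi\imp\psi}$ appears in the pruned left premiss, we only use the right premiss of the last rule together with the unchanged left premiss when it is principal. When it is not principal, we simply carry $\bl{\psi}$ through by the induction hypothesis.
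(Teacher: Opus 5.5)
Your method is the paper's. You induct on the height of the derivation of the conclusion and take the premiss directly when the formula is principal. Otherwise you either re-apply $\wbox$ or $\bdiam$ with the modified context, or transform the premisses and re-apply the last rule.

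In the non-principal $\bimp$ case you are more careful than the paper. Its ``apply the induction hypothesis to the premiss(es)'' silently needs hp-weakening on the pruned left premiss. This is needed for $\wimp$ too, where the left premiss must gain $\bl{\phi}$, and not only for $\wdiam$, where it must gain $\enest$ because $\pr{(\Lambda,\nest{\wh{\phi}})}$ keeps the emptied block. Proving hp-weakening first is legitimate, since the paper's proof of hp-admissibility of $\bwk$ does not use invertibility. Use direct block weakening: your alternative via $\bl{\Box\top}$ and inversion of $\bbox$ produces $\nest{\bl{\top}}$, not $\enest$.

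The point you leave open, $\wlor$, should be settled negatively rather than reinterpreted, because $\wlor$ is not invertible at all. The sequent $\bl{p \lor q}, \wh{p \lor q}$ is derivable (by $\blor$, then $\wlor$ and $\id$ on each branch). Neither $\bl{p \lor q}, \wh{p}$ nor $\bl{p \lor q}, \wh{q}$ is derivable: by soundness this would give $\vdIM (p \lor q) \imp p$ or $\vdIM (p \lor q) \imp q$, and both fail in a one-world model with $q$ true and $p$ false.

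Your reading as ``admissibility of whichever premiss was used'' does not describe a rule, because the derivation need not end with $\wlor$. In this example it ends with $\blor$, and the two branches choose different disjuncts. So $\wlor$ must simply be removed from the statement. This is an error in the paper's formulation, whose proof treats all listed rules uniformly, not in your method. It is also harmless: the later structural proofs only invoke inversion of input rules, for instance $\bbox$ and the right premiss of $\bimp$ in the proof for $\bctr$.
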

\begin{proof}
For all rules, the proof is by induction on the height $h$ of the derivation of the conclusion.
We show as an example the case of $\bbox$.
Suppose ($\G, \bl{\Box \phi}$) is derivable with height $h$.
If $h = 0$, then ($\G, \nest{\bl{\phi}}$) is also derivable with height $0$.
If $h > 0$, we have two cases.
(1) $\bl{\Box \phi}$ is not principal in the last rule application $\Rule$ in the derivation of ($\G, \bl{\Box \phi}$).
If $\Rule = \wbox$ or $\Rule = \bdiam$, we apply $\Rule$ to its premiss introducing $\nest{\bl{\phi}}$ instead of $\bl{\Box \phi}$.
Otherwise we apply the induction hypothesis to the premiss(es) of $\Rule$, then we apply $\Rule$, obtaining ($\G, \nest{\bl{\phi}}$).
(2) $\bl{\Box \phi}$ is principal in $\Rule$. Then ($\G, \bl{\Box \phi}$) is obtained via $\bbox$ from ($\G, \nest{\bl{\phi}}$),
which means that ($\G, \nest{\bl{\phi}}$) is derivable with height $h' < h$.
\end{proof}

\begin{proposition}\label{prop:adm struct}
The following rules are height-preserving admissible in $\CIM$,
where $x$ is either a formula or a block:
\begin{equation*}
\ax{$\G$}
\llab{$\bwk$}
\uinf{$\G, \bl{x}$}
\disp
\qquad
\ax{$\Lambda$}
\llab{$\wwk$}
\uinf{$\Lambda, \wh{x}$}
\disp
\qquad
\ax{$\G, \bl{x}, \bl{x}$}
\llab{$\bctr$}
\uinf{$\G, \bl{x}$}
\disp
\qquad
\ax{$\G, \enest, \B$}
\llab{$\nelim$}
\uinf{$\G, \B$}
\disp
\end{equation*}
\end{proposition}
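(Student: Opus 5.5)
We prove the four rules in the order $\bwk$, $\wwk$, $\bctr$, $\nelim$. Each is handled by induction on the height $h$ of the derivation of the premiss, with a case analysis on the last rule $\Rule$ applied. The general pattern is that the added, duplicated or removed component is a context component, so it can be pushed through the last rule. The exceptions are rules that discard the context ($\wbox$, $\bdiam$), rules that act on it (pruning in $\bimp$), and instances where the component is principal.

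\emph{Weakening.} For $\bwk$: in the base case, $\bbot$ and $\id$ remain axioms after adding $\bl{x}$, since their contexts $\G$ and $\Lambda$ are arbitrary. In the inductive step, $\wbox$ and $\bdiam$ already allow an arbitrary side context $\Lambda$ or $\G$ in the conclusion, so we simply add $\bl{x}$ there without touching the premiss. For all other rules, including $\bimp$, we apply the induction hypothesis to each premiss and reapply $\Rule$. This works for $\bimp$ because $\pr{(\G,\bl{x})} = \pr{\G},\bl{x}$. For $\wwk$ the premiss $\Lambda$ has no output component, so the last rule is $\bbot$, $\bland$, $\blor$, $\bimp$, $\bbox$ or $\bdiam$. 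Each of these either keeps an arbitrary output part $\Pi$ in its context or, in the case of $\bdiam$, has an arbitrary $\G$. We add $\wh{x}$ to every premiss that carries the full context and use the induction hypothesis. The left premiss of $\bimp$ is unchanged, because $\pr{(\Lambda,\wh{x})}=\pr{\Lambda}=\Lambda$. Note that $\wh{x}$ may be the output block $\nest{\wh\phi}$, which is allowed by the grammar $\Pi ::= \nest{\wh{\phi}}$.

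\emph{Contraction.} We proceed by induction on $h$, with a subinduction on the structure of $x$ where needed. If neither copy of $\bl{x}$ is principal, we apply the induction hypothesis to the premisses and reapply $\Rule$. For $\wbox$ and $\bdiam$ we instead drop a copy directly from the context. If $\bl{x}$ is principal in $\Rule$, we use hp-invertibility (Proposition~\ref{prop:hp-inv-rules}) on the other copy and then apply the induction hypothesis, as in standard G3 arguments. For example, with $\bland$ we invert the second copy of $\bl{\phi\land\psi}$ and contract twice at the components $\bl\phi$ and $\bl\psi$. For $\bbox$ we invert $\bl{\Box\phi}$ into $\nest{\bl\phi}$ and contract blocks. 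For $\bimp$ we use invertibility in the right premiss; the left premiss keeps a copy of $\bl{\phi\imp\psi}$, so we contract there directly. In $\bdiam$, when the principal block $\nest{\D}$ is one of two copies, the other copy sits in $\G$ and can be dropped; likewise for $\wbox$.

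\emph{Block elimination.} For $\nelim$ we take $\B$ to be any block. The empty block $\enest$ is never principal except as the $\nest{\D}$ with $\D=\emptyset$ in $\wbox$ or $\bdiam$. In that case the premiss $\wh\phi$ or $\bl\phi$ respectively, strengthened by hp-weakening by $\D'$ where $\B=\nest{\D'}$, lets us reapply the same rule with $\B$ as principal block. For $\bdiam$ the weakened premiss is $\bl\phi,\D'$. In $\wbox$ we need $\B$ to be an input block, and it is, since the conclusion has output $\wh{\Box\phi}$ and so $\B$ lies in $\Lambda$. For $\bimp$ the pruning of $\G,\enest,\B$ is $\pr{\G},\enest,\pr{\B}$. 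Here we need the induction hypothesis in the slightly more general form ``$\G,\enest,\B$ derivable implies $\G,\B$ derivable'' for input or output $\B$, which covers $\pr{\B}$ as well.

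The step I expect to be the main obstacle is the principal case of $\bctr$ for $\bdiam$ and $\bimp$. With $\bdiam$, $\bl{\diam\phi}$ is not invertible, but the premiss $\bl\phi,\D$ does not contain the context, so the duplicate $\bl{\diam\phi}$ lies in $\G$ and can be discarded when reapplying the rule. With $\bimp$, we must check that pruning commutes with contraction of input components, which it does because pruning affects only the output component.
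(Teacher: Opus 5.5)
Your strategy is the same as the paper's, and apart from one step it goes through. Like the paper, you induct on height and push the side component through the last rule. You exploit the fact that $\wbox$ and $\bdiam$ discard their context, use hp-invertibility for the principal cases of $\bctr$, and use weakening to make $\B$ the principal block when $\enest$ is principal in $\nelim$.

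The step that is wrong as stated is the $\bimp$ case of $\wwk$. You claim the left premiss is unchanged because $\pr{(\Lambda,\wh{x})}=\Lambda$. This holds when $\wh{x}$ is an output formula. It fails when $\wh{x}$ is an output block $\nest{\wh{\chi}}$, a case you explicitly allow. Output pruning empties such a block rather than deleting it, so $\pr{(\Lambda',\nest{\wh{\chi}})} = \Lambda',\enest$. The left premiss you need is therefore $\Lambda',\enest,\bl{\phi\imp\psi},\wh{\phi}$, not the original one. Your identity is the one for output thinning $\thi{(\cdot)}$, so it would be correct in $\CWM$ but not in $\CIM$.

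The repair is immediate. Obtain the required left premiss from the original one by an hp-application of $\bwk$ with $x=\enest$. You have already established this before treating $\wwk$, so height is still preserved.

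With this correction your proof is fine. Elsewhere you account for pruning correctly: in $\bwk$ via $\pr{(\G,\bl{x})}=\pr{\G},\bl{x}$, in $\bctr$, and in $\nelim$ via $\pr{\G},\enest,\pr{\B}$.
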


\begin{proof}
For all rules, we use induction on the height $h$ of the derivation of the premiss.

\medskip
\noindent
\textit{Cases for $\bwk$ and $\wwk$.} \;
If $h = 0$, then ($\G, \bl{x}$) (resp.~($\Lambda, \wh{x}$)) is derivable with height 0.
If $h > 0$, consider the last rule application $\Rule$ in the derivation of $\G$ ($\Lambda$).
If $\Rule = \wbox$ or $\Rule = \bdiam$, we apply $\Rule$ to its premiss introducing also $\bl{x}$ ($\wh{x}$).
Otherwise we apply $\bwk$ ($\wwk$) to the premisses of $\Rule$ (which is hp-admissible by the induction hypothesis), then we apply $\Rule$.

\medskip
\noindent
\textit{Case for $\bctr$.} \;
If $h = 0$, then ($\G, \bl{x}$) is derivable with height 0.
If $h > 0$, we consider the last rule application $\Rule$ in the derivation of ($\G, \bl{x}, \bl{x}$).
We distinguish two cases.
(1) $\bl{x}$ is not principal in $\Rule$.
If $\Rule = \wbox$ or $\Rule = \bdiam$, we apply $\Rule$ to its premiss introducing one occurrence of $\bl{x}$ only.
Otherwise we apply $\bctr$ to the premiss(es) of $\Rule$ (hp-admissible by the induction hypothesis), then we apply $\Rule$.
(2) $\bl{x}$ is principal in $\Rule$.
We consider the following three examples.

(1.~$\bl{x} = \bl{\phi \imp \psi}$)
We have ($\G, \bl{\phi \imp \psi}, \bl{\phi \imp \psi}$) obtained via $\bimp$ from
($\pr{\G}, \bl{\phi \imp \psi}, \bl{\phi \imp \psi}, \wh{\psi}$)
and ($\G, \bl{\psi}, \bl{\phi \imp \psi}$).
On the first premiss, we apply $\bctr$ (hp-admissible by the induction hypothesis)
and obtain
$\G_1 = (\pr{\G}, \bl{\phi \imp \psi}, \wh{\psi})$.
On the second premiss,
we first consider the hp-invertibility of $\bimp$ with respect to this premiss (Proposition~\ref{prop:hp-inv-rules}),
obtaining ($\G, \bl{\psi}, \bl{\psi}$),
then we apply $\bctr$ (hp-admissible by the induction hypothesis)
and obtain $\G_2 = (\G, \bl{\psi}$).
With an application of $\bimp$ on $\G_1$ and $\G_2$
we then obtain ($\G, \bl{\phi \imp \psi}$).

(2.~$\bl{x} = \bl{\Box \phi}$)
We have ($\G, \bl{\Box \phi}, \bl{\Box \phi}$) obtained via $\bbox$ from
($\G, \nest{\bl{\phi}}, \bl{\Box \phi}$).
Considering the hp-invertibility of $\bbox$ (Proposition~\ref{prop:hp-inv-rules}), 
from the premiss we obtain  ($\G, \nest{\bl{\phi}}, \nest{\bl{\phi}}$),
then by $\bctr$ (hp-admissible by the induction hypothesis)
we get ($\G, \nest{\bl{\phi}}$),
finally by $\bbox$ we have ($\G, \bl{\Box \phi}$). 

(3.~$\bl{x} = \nest{\bl{\phi}}$) 
There are two cases to consider.
(3.1)
We have ($\G', \nest{\bl{\phi}}, \nest{\bl{\phi}}, \wh{\Box \psi}$)
obtained via $\wbox$ from ($\bl{\phi}, \wh{\psi}$).
We can apply $\wbox$ on this premiss without introducing the second occurrence of $\nest{\bl{\phi}}$,
obtaining ($\G', \nest{\bl{\phi}}, \wh{\Box \psi}$).
(3.2)
We have ($\G', \nest{\bl{\phi}}, \nest{\bl{\phi}}, \bl{\diam \psi}$) 
obtained via $\bdiam$ from ($\bl{\phi}, \bl{\psi}$).
Again, we can apply $\bdiam$ on this premiss and obtain
($\G', \nest{\bl{\phi}}, \bl{\diam \psi}$).

\medskip
\noindent
\textit{Case for $\nelim$.} \;
If $h = 0$, then $(\G, \B)$ is derivable with height 0.
If $h > 0$, we consider the last rule application $\Rule$ in the derivation of $(\G, \enest, \B)$.
We distinguish three cases.

(1) If $\enest$ is principal in $\Rule$, then $\Rule = \wbox$ or $\Rule = \bdiam$.
If $\Rule = \wbox$, we have 
($\G', \enest, \B, \wh{\Box \psi}$)
obtained from ($\wh{\psi}$),
where $(\G', \wh{\Box \psi}) = \G$ and $\B = \nest{\bl{\phi}}$ or $\B = \enest$.
If $\B = \nest{\bl{\phi}}$,
then from ($\wh{\psi}$), by $\bwk$ we obtain $(\bl{\phi}, \wh{\psi})$, and by $\wbox$, 
($\G', \nest{\bl{\phi}}, \wh{\Box \psi}$).
If $\B = \enest$, 
then from ($\wh{\psi}$), 
by $\wbox$ we immediately obtain ($\G', \enest, \wh{\Box \psi}$).
In either case, the consequence ($\G, \B$) of $\nelim$ is derivable with height $h$.
The case $\Rule = \bdiam$ is similar.

(2) If $\B$ is principal in $\Rule$, then again $\Rule = \wbox$ or $\Rule = \bdiam$.
We obtain $\G, \B$ by applying the same rule $\Rule$ but without introducing $\enest$. 

(3) If neither $\enest$ nor $\B$ is principal in $\Rule$,
then in case $\Rule = \wbox$ or $\Rule = \bdiam$, we apply $\Rule$ without introducing $\enest$.
Otherwise we reverse the order of the applications of $\Rule$ and $\nelim$,
applying $\nelim$ on the premiss(es) of $\Rule$ 
(with $\nelim$ hp-admissible by the induction hypothesis).
For instance, suppose that ($\G, \enest, \B$) is obtained via $\bimp$.
Then $\G = (\G', \bl{\phi \imp \psi})$ and $\B = \nest{\wh{\chi}}$,
and the premisses of $\bimp$ have the form
($\G', \bl{\phi \imp \psi}, \enest, \enest, \wh{\phi}$)
and 
($\G', \bl{\psi}, \enest, \nest{\wh{\chi}}$).
By applying $\nelim$ to each premiss we then obtain
($\G', \bl{\phi \imp \psi}, \enest, \wh{\phi}$)
and 
($\G', \bl{\psi}, \nest{\wh{\chi}}$),
thus by $\bimp$ we get
$(\G', \bl{\phi \imp \psi}, \nest{\wh{\chi}}) = (\G, \B)$.
\end{proof}

\begin{theorem}[Cut admissibility]\label{th:cut}
The following rules are admissible in $\CIM$,
where $|\Lambda| \leq 1$ in $\bsub$,
and $|\D| \leq 1$ in $\wsub$:
\begin{equation*}
\ax{$\Lambda, \wh{\phi}$}
\ax{$\bl{\phi}, \D$}
\llab{$\cut$}
\binf{$\Lambda, \D$}
\disp 
\qquad
\ax{$\Lambda, \wh{\phi}$}
\ax{$\nest{\bl{\phi}}, \D$}
\llab{$\bsub$}
\binf{$\nest{\Lambda}, \D$}
\disp
\qquad
\ax{$\Lambda, \nest{\wh{\phi}}$}
\ax{$\bl{\phi}, \D$}
\llab{$\wsub$}
\binf{$\Lambda, \nest{\D}$}
\disp
\end{equation*}
\end{theorem}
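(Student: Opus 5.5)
We prove admissibility of $\cut$, $\bsub$ and $\wsub$ simultaneously, by a primary induction on the complexity $c(\phi)$ of the cut formula and a secondary induction on the sum of the heights of the two premiss derivations. The three rules have to be treated together because each reduces to the others. A principal $\wbox$ against a principal $\bbox$ turns a $\cut$ on $\Box\phi$ into a $\bsub$ on $\phi$ (at the block level). A $\bsub$ whose block $\nest{\bl{\phi}}$ is principal in $\wbox$ or $\bdiam$ reduces to a $\cut$ on $\phi$ between $\Lambda, \wh{\phi}$ and the premiss $\bl{\phi}, \D'$ of that rule, after which the rule is reapplied with the new block $\nest{\Lambda}$; this is why $|\Lambda| \leq 1$ is required. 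Dually, $\wsub$ with $\nest{\wh{\phi}}$ principal in $\bdiam$ (premiss $\bl{\psi}, \wh{\phi}$) is settled by a $\cut$ on $\phi$ giving $\bl{\psi}, \D$, followed by $\bdiam$ with block $\nest{\D}$; this uses $|\D| \leq 1$. The only other way $\nest{\wh{\phi}}$ enters a derivation is through $\wdiam$, giving a $\cut$ on $\diam\phi$.

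For each rule, we first handle the case where the cut formula (or block) is not principal in the derivation of the left premiss. For $\cut$, the left premiss $\Lambda, \wh{\phi}$ has an output formula, so its last rule is a left rule acting on $\Lambda$, or $\wbox$/$\bdiam$/$\bbot$. Left rules are permuted upward by the secondary IH. The delicate one is $\bimp$ with $\wh{\phi}$ in the context. Its left premiss is $\pr{(\Lambda', \wh{\phi})}, \bl{\alpha\imp\beta}, \wh{\alpha}$, where pruning has deleted $\wh{\phi}$, so no cut is needed there: we take $\pr{(\Lambda', \D)}$, obtained by weakening with $\Lambda'$-material and $\bwk$ (Proposition~\ref{prop:adm struct}). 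Pruning the composite sequent $\Lambda', \D$ removes $\D$'s output and leaves its input part, which is exactly what $\bwk$ supplies. On the right premiss we cut and then reapply $\bimp$. If the output of the left premiss is $\wh{\phi}$ but the last rule is $\bdiam$ or $\wbox$ acting on other components, or $\bbot$, the conclusion follows directly by $\bwk$. We then do the same on the right premiss, permuting its last rule when $\bl{\phi}$ (or $\nest{\bl{\phi}}$) is not principal. For a $\bimp$ on the right, the pruned premiss contains $\bl{\phi}$ and $\pr{\D}$, and we cut there; the output of $\Lambda$ is already the cut formula, so pruning is consistent. For $\wbox$ or $\bdiam$ on the right with the cut formula in the weakened context, we conclude immediately by reapplying the rule with weakened context.

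The principal–principal cases are the standard G3 reductions. For $\imp$ there are two cuts on smaller formulas, plus one at the same formula but lower height. The left premiss there is $\Lambda, \bl{\alpha}, \wh{\beta}$ from $\wimp$, and the right premiss is obtained via $\bimp$ from $\pr{\D}, \bl{\alpha\imp\beta}, \wh{\alpha}$ and $\D, \bl{\beta}$. We cut $\Lambda, \wh{\alpha\imp\beta}$ against the left $\bimp$-premiss to get $\Lambda, \pr{\D}, \wh{\alpha}$. We then cut on $\alpha$ to get $\Lambda, \Lambda, \pr{\D}, \wh{\beta}$, cut on $\beta$ against $\D, \bl{\beta}$, and finish with $\bctr$. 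The $\diam$ case is $\wdiam$ against $\bdiam$. Here $\Lambda, \nest{\wh{\phi}}$ meets $\D, \bl{\diam\phi}, \nest{\D''}$, whose premiss is $\bl{\phi}, \D''$, and $\wsub$ yields $\Lambda, \nest{\D''}$. This is an instance of $\wsub$ at the same $\phi$ but with the complexity dropped from $\diam\phi$; we then weaken.

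The step we expect to be the main obstacle is the interaction of pruning with $\wsub$ and $\bsub$ when a $\bimp$ sits above. Consider a $\wsub$ whose left premiss $\Lambda, \nest{\wh{\phi}}$ ends in $\bimp$. Its left premiss $\pr{\Lambda}, \enest, \bl{\alpha\imp\beta}, \wh{\alpha}$ now contains an emptied block where a block $\nest{\D}$ should appear. If $\D$ is an input formula, we need $\nest{\D}$ in place of $\enest$. We would replace $\enest$ by $\nest{\D}$ via a monotonicity-style argument, using $\bwk$ to add $\nest{\D}$ and then $\nelim$ to remove $\enest$. This relies on the fact that every use of $\enest$ in $\wbox$/$\bdiam$ can be simulated by a nonempty block, which is what the $\nelim$ proof in Proposition~\ref{prop:adm struct} established. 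We must also verify that all inductive calls stay within the measures, so that the ordering by formula complexity, rule type, and height is well founded.
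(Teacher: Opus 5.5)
Your proposal follows the paper's proof: the same reductions ($\cut$ on $\Box\psi$ to $\bsub$ on $\psi$, $\cut$ on $\diam\psi$ to $\wsub$ on $\psi$, and a principal $\bsub$/$\wsub$ back to a $\cut$ on the same formula at smaller height, then re-applying $\wbox$/$\bdiam$ thanks to $|\Lambda|\le 1$, $|\D|\le 1$), organised as one lexicographic induction instead of the claims \textbf{(A)}--\textbf{(D)}; your $\bwk$-plus-$\nelim$ treatment of $\wsub$ over $\bimp$ makes explicit a step the paper leaves implicit. Two small repairs are needed:
\begin{enumerate}
\item In the principal $\imp$ case with an output block $\nest{\wh{\theta}}$ in $\D$, the sequent $\Lambda,\Lambda,\pr{\D},\D$ contains an extra $\enest$ that $\bctr$ cannot remove, so you need one application of $\nelim$, as the paper does.
\item For $\bsub$ you must not permute rules on the premiss $\Lambda,\wh{\phi}$, contrary to your ``first handle the left premiss'' order. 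A $\bland$ there breaks $|\Lambda|\le 1$, and a $\blor$ leaves blocks $\nest{\bl{\chi_1}}$, $\nest{\bl{\chi_2}}$ that cannot be recombined. As in the paper, case-split only on the block premiss, which is $\nest{\bl{\phi}},\D$ for $\bsub$ and $\Lambda,\nest{\wh{\phi}}$ for $\wsub$. Your analysis of that premiss is already complete.
\end{enumerate}
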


\begin{proof} %[Proof sketch]
For an application of $\cut$, we call \emph{cut formula} the formula $\phi$ deleted by that application,
while for $\bsub$, $\wsub$ we call \emph{subs-formula} the formula $\phi$ inside the block replaced by $\Lambda$, resp.~$\D$.
Moreover, we call \emph{cut height} the sum of the heights of the derivations of the premisses of $\cut$.
The theorem is a consequence of the following four claims, where
$Cut(c, h)$ means that all applications of $\cut$ of height $h$ with cut formula of complexity $c$ are admissible, and
$Sub(c)$ means that all applications of $\bsub$ and $\wsub$ with subs-formula of complexity $c$ are admissible:

\begin{center}
\begin{tabular}{l}
\textbf{(A)} $\forall c. Cut(c,0)$.
\quad
\textbf{(B)} $\forall h. Cut(0,h)$.
\quad
\textbf{(C)} $\forall c. (\forall h. Cut(c, h) \to Sub(c))$.
\\
\textbf{(D)} $\forall c. \forall h. (\forall c' < c. (Sub(c') \land \forall h'. Cut(c', h')) \land \forall h'' < h. Cut(c, h'') \to Cut(c, h))$.
\end{tabular}
\end{center}

\medskip
\noindent
\textbf{(A)} \;
If $h = 0$, then the premisses of $\cut$ are instances of $\id$ or $\bbot$.
In either case the consequence of $\cut$ is an instance of $\id$ or $\bbot$.

\medskip
\noindent
\textbf{(B)} \; 
If $c = 0$, then the cut formula if $\bot$ or a propositional variable $p$.
The proof proceeds by complete induction on $h$.
For the base case $h = 0$, then the claim follows from \textbf{(A)}.
For the inductive step, we distinguish three cases:
(i) The cut formula is not principal in the last rule $\Rule$ applied in the derivation of the left premiss of $\cut$.
We consider the following three examples.

\medskip
($\Rule = \bimp$) 
%\vspace{-1.7em}
%\begin{center}
\qquad
\begin{small}
\ax{$\Lambda', \bl{\psi \imp \chi}, \wh{\psi}$}
\ax{$\Lambda', \bl{\chi}, \wh{p}$}
\llab{$\bimp$}
\binf{$\Lambda', \bl{\psi \imp \chi}, \wh{p}$}
\ax{$\bl{p}, \D$}
\llab{$\cut$}
\binf{$\Lambda', \bl{\psi \imp \chi}, \D$}
\disp
\quad
(where $\Lambda' = \pr{(\Lambda')}$)
\end{small}

\smallskip
\noindent
The derivation is transformed as follows.
From ($\Lambda', \bl{\psi \imp \chi}, \wh{\psi}$),
by some applications of $\bwk$ we introduce $\pr{\D}$,
obtaining $\G_1 = (\Lambda', \pr{\D}, \bl{\psi \imp \chi}, \wh{\psi})$.
Moreover, from ($\Lambda', \bl{\chi}, \wh{p}$) and ($\bl{p}, \D$),
by $\cut$ we obtain $\G_2 = (\Lambda', \bl{\chi}, \D)$.
Finally, from $\G_1$ and $\G_2$, 
since $\Lambda', \pr{\D} = \pr{(\Lambda', \D)}$,
by $\bimp$ we obtain ($\Lambda', \bl{\psi \imp \chi}, \D$).

\medskip
($\Rule = \bbox$) 
\qquad
%\vspace{-1.7em}
%\begin{center}
\begin{small}
\ax{$\Lambda', \nest{\bl{\psi}}, \wh{p}$}
\llab{$\bbox$}
\uinf{$\Lambda', \bl{\Box \psi}, \wh{p}$}
\ax{$\bl{p}, \D$}
\llab{$\cut$}
\binf{$\Lambda', \bl{\Box \psi}, \D$}
\disp
\end{small}
%\end{center}

\smallskip\noindent
The derivation is transformed as follows.
From ($\Lambda',  \nest{\bl{\psi}}, \wh{p}$) and ($\bl{p}, \D$),
by $\cut$ we obtain ($\Lambda', \nest{\bl{\psi}}, \D$),
then by $\bbox$ we obtain ($\Lambda', \bl{\Box \psi}, \D$).

\medskip
($\Rule = \bdiam$)
\qquad
%\vspace{-1.7em}
%\begin{center}
\begin{small}
\ax{$\bl{\psi}, \Sigma$}
\llab{$\bdiam$}
\uinf{$\Lambda', \bl{\diam \psi}, \nest{\Sigma}, \wh{p}$}
\ax{$\bl{p}, \D$}
\llab{$\cut$}
\binf{$\Lambda', \bl{\diam \psi}, \nest{\Sigma}, \D$}
\disp
\end{small}
%\end{center}

\smallskip
\noindent
The derivation is transformed as follows.
From ($\bl{\psi}, \Sigma$), by $\bdiam$ we obtain ($\Lambda', \bl{\diam \psi}, \nest{\Sigma}, \D$).

\bigskip
\noindent
(ii) The cut formula is not principal in the last rule $\Rule$ applied in the derivation of the right premiss of $\cut$.
We consider the following examples.

\medskip
($\Rule = \bimp$) 
\qquad
%\vspace{-1.7em}
%\begin{center}
\begin{footnotesize}
\ax{$\Lambda, \wh{p}$}
\ax{$\bl{p}, \pr{(\D')}, \bl{\psi \imp \chi}, \wh{\psi}$}
\ax{$\bl{p}, \D', \bl{\chi}$}
\rlab{$\bimp$}
\binf{$\bl{p}, \D', \bl{\psi\imp \chi}$}
\rlab{$\cut$}
\binf{$\Lambda, \D', \bl{\psi\imp \chi}$}
\disp
\end{footnotesize}
%\end{center}

\smallskip\noindent
The derivation is transformed as follows.
From ($\Lambda, \wh{p}$) and ($\bl{p}, \pr{(\D')}, \bl{\psi \imp \chi}, \wh{\psi}$),
by $\cut$ we obtain $\G_1 = (\Lambda, \pr{(\D')}, \bl{\psi \imp \chi}, \wh{\psi})$.
Moreover, from 
($\Lambda, \wh{p}$) and ($\bl{p}, \D', \bl{\chi}$),
by $\cut$ we obtain $\G_2 = (\Lambda, \D', \bl{\chi})$.
Finally, from $\G_1$ and $\G_2$, by $\bimp$
we obtain ($\Lambda, \D', \bl{\psi \imp \chi}$)

\medskip
($\Rule = \wdiam$) 
\qquad
%\vspace{-1.7em}
%\begin{center}
\begin{small}
\ax{$\Lambda, \wh{p}$}
\ax{$\bl{p}, \D', \nest{\wh{\psi}}$}
\rlab{$\wdiam$}
\uinf{$\bl{p}, \D', \wh{\diam \psi}$}
\rlab{$\cut$}
\binf{$\Lambda, \D', \wh{\diam \psi}$}
\disp
\end{small}
%\end{center}

\smallskip\noindent
The derivation is transformed as follows.
From ($\Lambda, \wh{p}$) and ($\bl{p}, \D', \nest{\wh{\psi}}$),
by $\cut$ we obtain ($\Lambda, \D', \nest{\wh{\psi}}$),
then by $\wdiam$ we obtain ($\Lambda, \D', \wh{\diam \psi}$).

\bigskip
\noindent
(iii) The cut formula is principal on both sides.
Then 
both premisses of $\cut$ are obtained via $\id$, 
hence $h = 0$ and we are back to the case \textbf{(A)}.

\medskip
\noindent
\textbf{(C)} \;
Assume $\forall h. Cut(c, h)$. We need to prove that all applications of $\bsub$ and $\wsub$ 
where $\phi$ has complexity $c$ are admissible.
The proof is by induction on the height $j$ of the derivation $\Der$ of ($\nest{\bl{\phi}}, \D$), for $\bsub$,
and of ($\Lambda, \nest{\wh{\phi}}$), for $\wsub$.
If $j = 0$ or $\nest{\bl{\phi}}$ (resp. $\nest{\wh{\phi}}$) is not principal in the last rule application $\Rule$ of $\Der$, we proceed %as before. \nb{T: specify: case (B)?}
as in \textbf{(A)} or \textbf{(B)}.
If $j > 0$ and $\nest{\bl{\phi}}$ (resp. $\nest{\wh{\phi}}$) is principal in $\Rule$,
then $\Rule = \wbox$ or $\Rule = \bdiam$.
%The case $\Rule = \wbox$ was shown in the main text (proof sketch of Th~\ref{th:cut}).

Suppose $\Rule = \wbox$.
Then, only the premiss ($\nest{\bl{\phi}}, \D$) of $\bsub$ is derivable via $\Rule = \wbox$ 
(the premiss ($\Lambda, \nest{\wh{\phi}}$) of $\wsub$ is not).
In this case,
($\nest{\bl{\phi}}, \D$) has the form ($\nest{\bl{\phi}}, \D', \wh{\Box \chi}$),
derived via $\wbox$ from ($\bl{\phi}, \wh{\chi}$).
We can then obtain the consequence ($\nest{\Lambda}, \D', \wh{\Box \chi}$) of $\bsub$
as follows.
If $\Lambda = \emptyset$, then
the other premiss of $\bsub$ is ($\wh{\phi}$). 
From ($\wh{\phi}$) and ($\bl{\phi}, \wh{\chi}$),
by $\cut$ we obtain ($\wh{\chi}$), and by $\wbox$ we obtain ($\enest, \D', \wh{\Box \chi}$).
If instead $\Lambda = \bl{\psi}$, then
the other premiss of $\bsub$ is ($\bl{\psi}, \wh{\phi}$). 
From ($\bl{\psi}, \wh{\phi}$) and ($\bl{\phi}, \wh{\chi}$), 
by $\cut$ we obtain ($\bl{\psi}, \wh{\chi}$), and by $\wbox$ we obtain ($\nest{ \bl{\psi}}, \D', \wh{\Box \chi}$).
In both cases, the application of $\cut$ is admissible by the induction hypothesis.

Now suppose $\Rule = \bdiam$.
Let us consider first
the premiss ($\nest{\bl{\phi}}, \D$) of $\bsub$,
that in this case has the form ($\nest{\bl{\phi}}, \D', \bl{\diam \chi}$),
derived via $\bdiam$ from ($\bl{\phi}, \bl{\chi}$).
We obtain the consequence ($\nest{\Lambda}, \D', \bl{\diam \chi}$) of $\bsub$ as follows.
If $\Lambda = \emptyset$, then from ($\wh{\phi}$) and ($\bl{\phi}, \bl{\chi}$), by $\cut$
we obtain ($\bl{\chi}$), then by $\bdiam$ we obtain ($\enest, \D', \bl{\diam \chi}$).
If $\Lambda = \bl{\psi}$, then from ($\bl{\psi}, \wh{\phi}$) and ($\bl{\phi}, \bl{\chi}$),
by $\cut$ we obtain ($\bl{\psi}, \bl{\chi}$), then by $\bdiam$
we obtain ($\nest{ \bl{\psi}}, \D', \bl{\diam \chi}$).
In both cases, the application of $\cut$ is admissible by the induction hypothesis.

Let us now consider
the premiss ($\Lambda, \nest{\wh{\phi}}$) of $\wsub$,
that in this case has the form ($\Lambda', \bl{\diam \chi}, \nest{\wh{\phi}}$),
derived via $\bdiam$ from ($\bl{\chi}, \wh{\phi}$).
We obtain  the consequence ($\Lambda', \bl{\diam \chi}, \nest{\D}$) of $\wsub$ as follows.
If $\D = \emptyset$, then from ($\bl{\chi}, \wh{\phi}$) and ($\bl{\phi}$), by $\cut$
we obtain ($\bl{\chi}$), then by $\bdiam$ we obtain ($\Lambda', \bl{\diam \chi}, \enest$).
If $\D = \wh{\psi}$, then from ($\bl{\chi}, \wh{\phi}$) and ($\bl{\phi}, \wh{\psi}$),
by $\cut$ we obtain ($\bl{\chi}, \wh{\psi}$), then by $\bdiam$
we obtain ($\Lambda', \bl{\diam \chi}, \nest{\wh{\psi}}$).
In both cases, the application of $\cut$ is admissible by the induction hypothesis.

\medskip
\noindent
\textbf{(D)} \;
Assume $\forall c' < c. (Sub(c') \land \forall h'. Cut(c', h')) \land \forall h'' < h. Cut(c, h'')$. 
We show that all applications of $\cut$ of height $h$ on a cut formula of complexity  $c$
can be replaced by different applications of $\cut$ of smaller height or on a cut formula of smaller complexity. 
We can assume $h,c > 0$ as the cases $h = 0$ and $c = 0$ have been already considered in \textbf{(A)} and \textbf{(B)}. 
We distinguish two cases:
(i) The cut formula is not principal in the last rule application in the derivation of at least one of  the two premisses of $\cut$. This case is analogous to (i) or (ii) in \textbf{(B)}.
(ii) The cut formula $\phi$ is principal in the last rule application in the derivation of both premisses.
We show the cases $\phi = \psi \imp \chi, \Box \psi, \diam \psi$.
For $\phi = \psi \imp \chi$, we consider two subcases, depending on whether the output component is a formula or a block.

\medskip
($\phi = \psi \imp \chi$, i) 
\qquad
\begin{small}
\ax{$\Lambda, \bl{\psi}, \wh{\chi}$}
\llab{$\wimp$}
\uinf{$\Lambda, \wh{\psi \imp \chi}$}
\ax{$\bl{\psi \imp \chi}, \D', \wh{\psi}$}
\ax{$\bl{\chi}, \D', \wh{\theta}$}
\rlab{$\bimp$ ($\D' = \pr{(\D')}$)}
\binf{$\bl{\psi \imp \chi}, \D', \wh{\theta}$}
\rlab{$\cut$}
\binf{$\Lambda, \D', \wh{\theta}$}
\disp
\end{small}
%\end{center}

\smallskip\noindent
The derivation is transformed as follows.
First, we cut $\psi \imp \chi$
from 
($\Lambda, \wh{\psi \imp \chi}$) and ($\bl{\psi \imp \chi}, \D' \wh{\psi}$),
obtaining ($\Lambda, \D', \wh{\psi}$).
From this and ($\Lambda, \bl{\psi}, \wh{\chi}$),
we cut $\psi$ and obtain ($\Lambda, \Lambda, \D', \wh{\chi}$).
From this and ($\bl{\chi}, \D', \wh{\theta}$) we cut $\chi$ and obtain
($\Lambda, \Lambda, \D', \D', \wh{\theta}$).
Finally, by some applications of $\bctr$
we obtain ($\Lambda, \D', \wh{\theta}$).

\medskip
($\phi = \psi \imp \chi$, ii) 
\qquad
\begin{small}
\ax{$\Lambda, \bl{\psi}, \wh{\chi}$}
\llab{$\wimp$}
\uinf{$\Lambda, \wh{\psi \imp \chi}$}
\ax{$\bl{\psi \imp \chi}, \D', \enest, \wh{\psi}$}
\ax{$\bl{\chi}, \D', \nest{\wh{\theta}}$}
\rlab{$\bimp$}
\binf{$\bl{\psi \imp \chi}, \D', \nest{\wh{\theta}}$}
\rlab{$\cut$}
\binf{$\Lambda, \D', \nest{\wh{\theta}}$}
\disp
\end{small}
%\end{center}

\smallskip\noindent
The derivation is transformed as follows.
First, we cut $\psi \imp \chi$
from 
($\Lambda, \wh{\psi \imp \chi}$) and ($\bl{\psi \imp \chi}, \D', \enest, \wh{\psi}$),
obtaining ($\Lambda, \D', \enest, \wh{\psi}$).
From this and ($\Lambda, \bl{\psi}, \wh{\chi}$),
we cut $\psi$ and obtain ($\Lambda, \Lambda, \D', \enest, \wh{\chi}$).
From this and ($\bl{\chi}, \D', \nest{\wh{\theta}}$) we cut $\chi$ and obtain
($\Lambda, \Lambda, \D', \D', \enest, \nest{\wh{\theta}}$).
Finally, by one application of $\nelim$ and some applications of $\bctr$,
we obtain ($\Lambda, \D', \nest{\wh{\theta}}$).

\medskip
($\phi = \Box \psi$) 
%\vspace{-1.7em}
%\begin{center}
\qquad
\begin{small}
\ax{$\Sigma, \wh{\psi}$}
\llab{$\wbox$}
\uinf{$\Lambda', \nest{\Sigma}, \wh{\Box \psi}$}
\ax{$\nest{\bl{\psi}}, \D$}
\rlab{$\bbox$}
\uinf{$\bl{\Box \psi}, \D$}
\llab{$\cut$}
\binf{$\Lambda', \nest{\Sigma}, \D$}
\disp
\end{small}
%\end{center}

\smallskip\noindent
The derivation is transformed as follows.
From ($\Sigma, \wh{\psi}$) and ($\nest{\bl{\psi}}, \D$),
by $\bsub$ we obtain ($\nest{\Sigma}, \D$),
then by $\swk$ we obtain ($\Lambda', \nest{\Sigma}, \D$).

\medskip
($\phi = \diam \psi$) 
%\vspace{-1.7em}
%\begin{center}
\qquad
\begin{small}
\ax{$\Lambda, \nest{\wh{\psi}}$}
\llab{$\wdiam$}
\uinf{$\Lambda, \wh{\diam \psi}$}
\ax{$\bl{\psi}, \Sigma$}
\rlab{$\bdiam$}
\uinf{$\bl{\diam \psi}, \D', \nest{\Sigma}$}
\llab{$\cut$}
\binf{$\Lambda, \D', \nest{\Sigma}$}
\disp
\end{small}
%\end{center}

\smallskip\noindent
The derivation is transformed as follows.
From ($\Lambda, \nest{\wh{\psi}}$) and ($\bl{\psi}, \Sigma$),
by $\wsub$ we obtain ($\Lambda, \nest{\Sigma}$),
then by $\swk$ we obtain ($\Lambda, \D', \nest{\Sigma}$).
\end{proof}

\begin{theorem}[Completeness]
If $\vdIM \phi$, then $\vd_{\CIM} \wh{\phi}$.
\end{theorem}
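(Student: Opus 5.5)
We argue by induction on the length of a Hilbert-style derivation of $\phi$ in $\IM$. We show that every axiom $\phi$ of $\IM$ yields a derivable sequent $\wh{\phi}$ in $\CIM$, and that derivability of $\wh{(\cdot)}$ is preserved by the rules of $\IM$. A useful auxiliary fact comes first. If $\vd_{\CIM} \wh{(\phi_1\wedge\cdots\wedge\phi_n\imp\psi)}$, then by hp-invertibility of $\wimp$ and $\bland$ (Proposition~\ref{prop:hp-inv-rules}) we get $\vd_{\CIM} \bl{\phi_1},\ldots,\bl{\phi_n},\wh{\psi}$. The converse is immediate by $\bland$ and $\wimp$.

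\textbf{Axioms.} The axioms of $\IPL$ are derivable as usual in a G3-style single-succedent calculus, using Proposition~\ref{prop:gen ax} for generalised initial sequents. The axiom $\axIdiam$ is derived in Example~\ref{exm:der}. For $\axmnc$, i.e.\ $(\Box\phi\wedge\diam\neg\phi)\imp\bot$, we start from $\bl{\neg\phi},\bl{\phi}$. This is obtained by $\bimp$ from $\bl{\neg\phi},\bl{\phi},\wh{\phi}$ (Proposition~\ref{prop:gen ax}) and $\bl{\bot},\bl{\phi}$. Then $\bdiam$ with the input block $\nest{\bl{\phi}}$ gives $\bl{\diam\neg\phi},\nest{\bl{\phi}}$. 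Finally $\bbox$, $\bland$ and $\wimp$ give $\wh{(\Box\phi\wedge\diam\neg\phi\imp\bot)}$; since the sequent has no output, the final $\wimp$ is applied to $\Lambda,\wh{\bot}$, which we obtain by $\wwk$ (Proposition~\ref{prop:adm struct}).

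\textbf{Rules.} For modus ponens, suppose $\wh{\phi}$ and $\wh{\phi\imp\psi}$ are derivable. Inverting $\wimp$ gives $\bl{\phi},\wh{\psi}$, and $\cut$ with $\wh{\phi}$ (Theorem~\ref{th:cut}) yields $\wh{\psi}$. For $\monbox$, from $\wh{\phi\imp\psi}$ we invert to get $\bl{\phi},\wh{\psi}$. Then $\wbox$ gives $\nest{\bl{\phi}},\wh{\Box\psi}$, and $\bbox$ followed by $\wimp$ gives $\wh{\Box\phi\imp\Box\psi}$. For $\mondiam$, from $\bl{\phi},\wh{\psi}$ the rule $\bdiam$ with the output block gives $\bl{\diam\phi},\nest{\wh{\psi}}$, and then $\wdiam$ and $\wimp$ give $\wh{\diam\phi\imp\diam\psi}$. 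Uniform substitution needs no separate treatment: we may take the axioms as schemata, and each derivation above is uniform in the formulas involved.

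\textbf{Main obstacle.} The $\IPL$ part is routine, and the modal cases are direct. The only step that relies on the nontrivial work is modus ponens, which needs cut admissibility (Theorem~\ref{th:cut}) together with hp-invertibility. The other point to check carefully is the $\axmnc$ derivation: there the sequent has empty output, so the side conditions ($|\D|\le 1$, input block) of $\bdiam$ must be respected.
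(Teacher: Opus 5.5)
Your proposal is correct and follows essentially the paper's route: induction on the $\IM$-derivation, with the axioms derived in $\CIM$ (including $\axIdiam$ via Example~\ref{exm:der}), modus ponens simulated via $\cut$, and the monotonicity rules simulated via $\wbox$/$\bbox$ and $\bdiam$/$\wdiam$. The only minor differences are that you obtain $\bl{\phi},\wh{\psi}$ from $\wh{\phi\imp\psi}$ by hp-invertibility of $\wimp$ rather than by cutting against $\bl{\phi\imp\psi},\bl{\phi},\wh{\psi}$, and that you spell out the derivation of $\axmnc$, which the paper leaves implicit.
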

\begin{proof}
  This follows from induction on the height of the derivation of $\phi$ in $\IM$, 
using the fact that the axioms of $\IM$ are derivable in $\CIM$,
and the rules $\monbox$, $\mondiam$ and modus ponens are simulated in $\CIM$ via
$\cut$.
For instance,
from the premiss ($\wh{\phi \imp \psi}$)
and the derivable sequent ($\bl{\phi \imp \psi}, \bl{\phi}, \wh{\psi}$),
cutting on $\phi \imp \psi$ gives ($\bl{\phi}, \wh{\psi}$).
Applying $\bdiam$ and $\wdiam$ yields ($\bl{\diam \phi}, \wh{\diam \psi}$),
so that $\wimp$ entails ($\wh{\diam \phi \imp \diam \psi}$).
\end{proof}

As is often the case, one can infer important properties of a logic based on its analytic cut-free calculus. As a first application of the calculus, we prove here that $\IM$ is decidable.

\begin{theorem}[Decidability]\label{th:decidability}
  For any $\phi \in \mc{L}$, it is decidable whether or not $\IM \vdash \phi$.
\end{theorem}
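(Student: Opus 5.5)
By Soundness and Completeness, $\IM \vdash \phi$ holds iff the sequent $\wh{\phi}$ is derivable in $\CIM$, and by Theorem~\ref{th:cut} we may restrict attention to cut-free derivations. The plan is therefore to show that backward proof search in $\CIM$ terminates, so that derivability of $\wh{\phi}$ can be decided by exhaustively exploring a finite search tree.

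\textbf{Subformula property.} The first step is to observe that every rule of $\CIM$ has the subformula property. Every formula occurring in a premiss, whether as an input or output formula or inside a block, is a subformula of a formula occurring in the conclusion. The only formulas introduced beyond the conclusion are $\top$-free: empty blocks are permitted, and pruning only deletes material. Hence every sequent in a cut-free derivation of $\wh{\phi}$ is built from the finite set $\mathrm{Sub}(\phi)$ of subformulas of $\phi$.

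\textbf{Bounding multiplicities.} The second step handles the remaining source of infinity, namely the multiset $\Lambda$, which could contain repeated input formulas or blocks. Here I would invoke the hp-admissibility of $\bctr$ and the hp-invertibility results (Propositions~\ref{prop:hp-inv-rules} and~\ref{prop:adm struct}). Together these let us restrict to derivations in which no input formula or input block occurs twice in a sequent, and $\nelim$ lets us drop an empty block whenever another block is present. Sequents are then essentially sets drawn from the finite collection $\mathrm{Sub}(\phi) \cup \{\nest{\bl{\psi}}, \nest{\wh{\psi}} \mid \psi \in \mathrm{Sub}(\phi)\} \cup \{\enest\}$. This leaves only finitely many candidate sequents for $\phi$.

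\textbf{Termination.} With finitely many sequents, it suffices to forbid repetitions of a sequent along a branch: if a sequent recurs, the upper derivation can replace the lower one. Every branch of a loop-free search tree then has bounded length, and since each rule has finitely many instances with a given conclusion, König-style reasoning yields a finite search space to check exhaustively. Alternatively, one could seek a decreasing measure directly. For every rule except $\bimp$, the premisses have strictly smaller total complexity, where blocks are counted with weight $c(\psi)+1$ for their content. The rule $\bimp$, however, keeps its principal formula in the left premiss, so a naive measure fails. This is the usual issue for G3-style intuitionistic calculi, and it is why I would rely on the loop-checking argument.

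\textbf{Main obstacle.} The hardest part is making the contraction step rigorous: we must show that derivations can be taken to use only sequents with duplicate-free $\Lambda$. This needs $\bctr$ to commute correctly with the left premiss of $\bimp$, which retains $\bl{\phi\imp\psi}$ together with the pruned context $\pr{\G}$. I would handle this by working with a set-based variant of the calculus, in which every rule absorbs contraction implicitly, and proving it equivalent to $\CIM$ using Proposition~\ref{prop:adm struct}. The finiteness argument above then applies directly.
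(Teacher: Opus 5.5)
Your proposal is correct and follows essentially the paper's route. Both adapt the $\mathsf{G3ip}$ decidability argument via the subformula property, admissibility of contraction, and a loop check; the only difference is that you forbid any repeated (set-)sequent along a branch, whereas the paper uses a measure that decreases under every rule except $\bimp$ and loop-checks only $\bimp$-premisses modulo their supports. One small correction to your aside on the measure: with $c(\Box\psi)=c(\psi)+1$ and blocks weighted $c(\psi)+1$, the rules $\bbox$ and $\wdiam$ do not strictly decrease it, which is why the paper gives modal formulas weight $+2$; your argument does not rely on that measure, so nothing breaks.
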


\begin{proof}
By adapting the decidability argument for $\mathsf{G3ip}$ in \cite{TroSch00}, 
we can prove that for any sequent $\G$, it is decidable whether or not $\G$ is derivable in $\CIM$.
 To this end, consider that
 (1) for all rules $\Rule$ of $\CIM$, 
 the premiss(es) of $\Rule$ only contain subfomulas
 of formulas in the conclusion;
 (2) with the exception of $\bimp$, the backward application
 of all rules of $\CIM$ reduces the sequent complexity 
 according to the following measure
 $C(\G) = \sum_{x \in \G} c'(x)$, where
 for $\ast \in \{\bullet, \circ\}$ and
 $\star \in \{ \wedge, \vee, \imp \}$,
 $c'(\phi^\ast) = c(\phi)$
 if $\phi = p, \bot, \psi \star \chi$;
 $c'(\Box\phi^\ast) = c'(\diam\phi^*) = c'(\phi^*) + 2$;
 $c'(\enest) = 1$;
 $c'(\nest{\phi^*}) = c'(\phi^*) + 1$.
 Moreover, for a sequent $\G$, let us by denote $Set(\G)$
 the support of $\G$, that is,
 the set of formulas and blocks in $\G$ disregarding multiplicities.
 We can then consider bottom-up proof search in $\CIM$
 respecting the following \emph{loop checking} condition LC:
 the backward application of $\bimp$ is not allowed 
 if a premiss $\G$ of $\bimp$ is such that
 $Set(\G) = Set(\D)$
 for a sequent $\D$ already occurring at a lower depth in the same proof-search branch.
Because of the subformula property, only a finite number of applications of $\bimp$ are possible respecting LC.
Therefore, bottom-up proof search eventually terminates on every $\G$.
As usual, the completeness of such a proof-search procedure is ensured by the admissibility of contraction in $\CIM$.
\end{proof}

%==============================================================================%
\subsection{A calculus for $\WM$}\label{sec:CWM}

  Interestingly, $\CIM$ can be turned into a calculus
  for $\WM$ by changing from
  \emph{output pruning} $\Gamma^{\downarrow}$ in $\bimp$ to \emph{output thinning}%
    \footnote{``A more drastic form of pruning, a thinning out cut, is the
                removal of an entire shoot, limb, or branch at its point of
                origin'' (https://en.wikipedia.org/wiki/Pruning).}
  $\thi{\G}$,
  which is defined by not only removing output formulas, but also output blocks.
  For instance, $\thi{(\bl{\phi}, \nest{\bl{\psi}}, \wh{\chi})} = \thi{(\bl{\phi}, \nest{\bl{\psi}}, \nest{\wh{\chi}})} = \bl{\phi}, \nest{\bl{\psi}}$.
  The calculus $\CWM$ is defined by the rules in Figure~\ref{fig:CIM}, where context pruning in $\bimp$ is replaced with context thinning.

$\CWM$ can be shown equivalent to $\WM$ by adjusting the structural analysis of $\CIM$
in all cases where pruning comes into play.
Note that empty blocks never occur in derivations of formulas, and could therefore be disallowed.
If empty blocks are preserved, we observe that the rule $\nelim$ is no longer admissible
(one can verify that the soundness of $\nelim$ depends on the presence of $\axIdiam$) 
but it is not needed for cut elimination.
Alternatively, completeness of $\CWM$ can be proved by translating derivations in the sequent calculus $\mathsf{S.WM}$ \cite{Dal22} into derivations in $\CWM$,
with applications of the single $\Box$-rule (respectively $\diam$-rule) of $\mathsf{S.WM}$ corresponding to an application of $\wbox$ ($\bdiam$)
immediately followed by an application of $\bbox$ ($\wdiam$).

It was shown in \cite{Gro25-im} that $\IM$ and $\WM$ share the same $\diam$-free fragment, but have different $\Box$-free fragments.
We can re-prove the same results using the calculi $\CIM$ and $\CWM$.
First, note that for every $\diam$-free formula $\phi$ derivable in $\IM$,
a bottom-up derivation $\Der$ in $\CIM$ of $\wh{\phi}$ never introduces output blocks, 
which means that context pruning and thinning are indistinguishable in $\Der$,
and $\Der$ can be therefore seen as a derivation in $\CWM$.
On the other hand, 
using the bottom-up proof-search strategy for $\CIM$ given in the proof of Theorem~\ref{th:decidability}, 
which is equally applicable to $\CWM$,
one can verify that the formula $((\diam\bot \imp \bot) \imp \bot) \imp\diam\bot$ is not derivable in $\WM$,
as the proof-search deadlocks on a sequent of the form 
($\bl{(\diam\bot \imp \bot) \imp \bot}, \bl{\diam\bot}, \wh{\diam\bot\imp\bot}$).
Example~\ref{exm:der} shows that this formula is derivable in $\IM$.

%%%%%%%%%%%%%%%%%%%%%%%%%%%%%%%%%%%%%%%%%%%%%%%%%%%%%%%%%%%%%%%%%%%%%%%%%%%%%%%%
\section{Extensions}

We now present calculi for the axiomatic extensions of $\IM$ with 
axioms from $\{ \axN, \axP, \axT, \axD, \axK \}$. 
The design of these follows \emph{Dosen's principle} \cite{Wan94}:
calculi for extensions are defined by modularly adding rules dealing with structural components, 
while the rules for logical connectives remain untouched.
As for $\CIM$, these calculi can be seen as 
single-succedent restrictions of corresponding classical calculi from~\cite{DalLelOliPim21}.

\subsection{Calculi for extensions without $\axKbd$}

Calculi for extensions of $\IM$ are characterised by the following rules: 

\begin{center}
\ax{$\G, \nest{ \ }$}
\llab{$\rulen$}
\uinf{$\G$}
\disp
\quad
\ax{$\Dp$}
\llab{$\rulep$}
\uinf{$\G, \nest{\Dp}$}
\disp
\quad
\ax{$\Dp, \Sp$}
\llab{$\ruled$}
\uinf{$\G, \nest{\Dp}, \nest{\Sp}$}
\disp
\quad
\ax{$\G, \pr{\nest{\Dp}}, \Dp$}
\llab{$\rulet$}
\uinf{$\G, \nest{\Dp}$}
\disp
\quad
($|\Dp| = |\Sp| = 1$)
\end{center}

\begin{definition}
For any logic $\IMstar$, 
with $\Ax \subseteq \{ \axN, \axP, \axD, \axTbd \}$,
the calculus $\CIMstar$ for $\IMstar$ is defined as the extension of $\CIM$
with the corresponding rules 
among $\rulen$, $\rulep$, $\ruled$, $\rulet$ above.
If $\axD \in \Ax$, the corresponding calculus additionally contains the rule $\rulep$.
\end{definition}

\begin{example}
  Sequents corresponding to some of the axioms can be derived in $\CIMstar$ as follows:
  \begin{center}
\ax{$\wh{\top}$}
\rlab{$\wbox$}
\uinf{$\nest{ \ }, \wh{\Box\top}$}
\rlab{$\rulen$}
\uinf{$\wh{\Box\top}$}
\disp
\quad
\ax{$\wh{\top}$}
\rlab{$\rulep$}
\uinf{$\nest{\wh{\top}}$}
\rlab{$\wdiam$}
\uinf{$\wh{\diam\top}$}
\disp
\quad
\ax{$\bl{\phi}, \wh{\phi}$}
\rlab{$\ruled$}
\uinf{$\nest{\bl{\phi}}, \nest{\wh{\phi}}$}
\rlab{$\wdiam$}
\uinf{$\nest{\bl{\phi}}, \wh{\diam \phi}$}
\rlab{$\bbox$}
\uinf{$\bl{\Box \phi}, \wh{\diam \phi}$}
\rlab{$\wimp$}
\uinf{$\wh{\Box \phi \imp \diam \phi}$}
\disp
\quad
\ax{$\nest{\bl{\phi}}, \bl{\phi}, \wh{\phi}$}
\rlab{$\rulet$}
\uinf{$\nest{\bl{\phi}}, \wh{\phi}$}
\rlab{$\bbox$}
\uinf{$\bl{\Box \phi}, \wh{\phi}$}
\rlab{$\wimp$}
\uinf{$\wh{\Box \phi \imp \phi}$}
\disp
\quad
\ax{$\bl{\phi}, \enest, \wh{\phi}$}
\rlab{$\rulet$}
\uinf{$\bl{\phi}, \nest{\wh{\phi}}$}
\rlab{$\wdiam$}
\uinf{$\bl{\phi}, \wh{\diam \phi}$}
\rlab{$\wimp$}
\uinf{$\wh{\phi \imp \diam \phi}$}
\disp
  \end{center}
\end{example}

\begin{remark} \
  \begin{enumerate}
    \item The rule $\rulep$ is added to the calculus in case $\axD \in \Ax$ 
          to ensure the admissibility of contraction.
          This addition is sound since $\vd_{\IM \oplus \axD} \axP$.
    \item Adding the rule $\rulen$ to either $\CIM$ or $\CWM$ generates
          the same calculus.
          Indeed, for every output block deleted via output thinning 
          in a backward application of $\bimp$,
          a corresponding empty block can be reintroduced using $\rulen$.
          This matches the axiomatic equivalence between $\IM \oplus \axN$ and $\WM \oplus \axN$.
  \end{enumerate}
\end{remark}

We can prove their soundness and completeness with respect to the corresponding logics $\IMstar$
by extending the proofs for $\CIM$ from Section~\ref{sec:CIM} 
with the additional cases determined by the new rules.

\begin{theorem}[Soundness]\label{thm:ext-sound}
If $\vd_{\CIMstar} \G$ then $\vd_{\IMstar} \fint(\G)$.
\end{theorem}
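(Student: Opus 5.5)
The argument is an induction on the height of the derivation of $\G$ in $\CIMstar$, exactly as for $\CIM$. All rules of $\CIM$ are sound with respect to $\IM$ by Proposition~\ref{prop:sound-rules}, hence also with respect to $\IMstar$, since the latter extends $\IM$ and its theorems are closed under the same reasoning. It therefore remains to show that each of the new rules $\rulen$, $\rulep$, $\ruled$, $\rulet$ is sound with respect to $\IMstar$ whenever it belongs to $\CIMstar$. In the cases below, write $\G = \Lambda, \Pi$ and $\fint(\G) = \bigwedge\fintf(\Lambda) \imp \fintf(\Pi)$.

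\textbf{Rule $\rulen$.} From $\vd \fint(\G, \enest) = \bigwedge\fintf(\Lambda) \land \Box\top \imp \fintf(\Pi)$ and the axiom $\axN$ we discharge $\Box\top$.

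\textbf{Rule $\rulep$.} There are three shapes for $\Dp$.
\begin{itemize}
\item If $\Dp = \wh{\phi}$, then $\vd\phi$ gives $\vd \top\imp\phi$. Monotonicity $\mondiam$ then yields $\diam\top \imp \diam\phi$, and $\axP$ gives $\vd\diam\phi$, which weakens to $\fint(\G,\nest{\wh\phi})$.
\item If $\Dp = \bl{\phi}$, then $\vd\neg\phi$, so $\monbox$ gives $\Box\phi\imp\Box\bot$. By $\axPbox$ (Lemma~\ref{lemma:inclusions}) we get $\Box\phi \imp \bot$, and this weakens to the conclusion.
\item If $\Dp$ is empty (allowed if $|\Dp|\le 1$; the stated side condition says $=1$, so possibly unnecessary), the premiss is $\vd\bot$, which is impossible in a consistent logic; alternatively it is vacuous.
\end{itemize}
When $\axD\in\Ax$, this rule is still sound because $\vd_{\IM\oplus\axD}\axP$ by Lemma~\ref{lemma:inclusions}.

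\textbf{Rule $\ruled$.} The combinations are as follows.
\begin{itemize}
\item With $\bl\phi,\bl\psi$: from $\phi\land\psi\imp\bot$ the rule $\RD$ (derivable from $\axD$) gives $\Box\phi\land\Box\psi\imp\bot$.
\item With $\bl\phi,\wh\psi$: from $\phi\imp\psi$, $\monbox$ gives $\Box\phi\imp\Box\psi$, and $\axD$ then gives $\Box\phi\imp\diam\psi$.
\end{itemize}
Both conclusions weaken to $\fint$ of the conclusion sequent.

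\textbf{Rule $\rulet$.} This case uses $\axTbox$ and $\axTdiam$, and I expect it to need the most care because of the pruning.
\begin{itemize}
\item If $\nest{\Dp} = \nest{\bl\phi}$, then $\pr{\nest{\Dp}}=\nest{\bl\phi}$. The premiss gives $\bigwedge\fintf(\Lambda)\land\Box\phi\land\phi\imp\fintf(\Pi)$, and $\axTbox$ supplies $\phi$ from $\Box\phi$.
\item If $\nest{\Dp}=\nest{\wh\phi}$, then $\Pi$ is this block and the premiss is $\Lambda,\enest,\wh\phi$, i.e.\ $\bigwedge\fintf(\Lambda)\land\Box\top\imp\phi$. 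We want $\bigwedge\fintf(\Lambda)\imp\diam\phi$. Currying gives $\bigwedge\fintf(\Lambda)\imp(\Box\top\imp\phi)$. Then $\axTdiam$ gives $(\Box\top\imp\phi)\imp(\Box\top\imp\diam\phi)$, and $\axIdiam$ turns this into $\diam\phi$.
\end{itemize}

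The main obstacle is precisely this last combination of $\axTdiam$ with $\axIdiam$ to discharge the empty block created by pruning. This mirrors the $\bimp$ case of Proposition~\ref{prop:sound-rules}, and I would present it in the same style.
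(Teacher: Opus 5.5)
Your proposal is correct and follows the paper's proof essentially step by step: you prove soundness rule by rule, handling $\rulen$ with $\axN$, $\rulep$ with monotonicity plus $\axPbox$ or $\axP$, $\ruled$ with $\RD$ or $\axD$, and $\rulet$ with $\axTbox$, resp.\ $\axTdiam$ combined with $\axIdiam$ to discharge the empty block created by pruning. The only difference is cosmetic: in the mixed-polarity case of $\ruled$ you apply $\monbox$ and then $\axD$ to $\psi$, whereas the paper applies $\mondiam$ and then $\axD$ to $\phi$, and both routes are valid.
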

\begin{proof} %[Proof of Theorem~\ref{thm:ext-sound}]
For any pair $(\varax, \varrule) \in 
\{(\axN, \rulen), (\axP, \rulep), (\axD, \ruled), (\axT, \rulet)\}$,
we show that the rule $\varrule$ is sound with respect to $\IM \oplus \varax$.
We let $\G = \Lambda, \Pi$, and 
shorten $ ( \bigwedge_{x \in \Lambda} \fintf(x)) = \lambda$
and $\fintf(\Pi) = \pi$.

\medskip
\noindent
\textit{Case for rule $\rulen$.} \;
Suppose $\vd_{\IMN} \fint(\G, \enest) = \lambda \land \Box\top \imp \pi$.
Then by axiom $\axN$,
 $\vd_{\IMN} \lambda \imp \pi = \fint(\G)$.
 
\medskip
\noindent 
\textit{Case for rule $\rulep$.} \;
If $\Dp = \bl{\phi}$,
then from $\vd_{\IMP} \fint(\bl{\phi}) = \phi \imp \bot$,
by $\monbox$ we get $\vd_{\IMP} \Box\phi \imp \Box\bot$,
then by $\axPbox$,  $\vd_{\IMP} \Box\phi \imp \bot$,
hence 
$\vd_{\IMN} \lambda \land \Box\phi \imp \pi = \fint(\G, \nest{\bl{\phi}})$.
If $\Dp = \wh{\phi}$,
then $\G = \Lambda$,
moreover from $\vd_{\IMP} \fint(\wh{\phi}) = \phi$,
we get $\vd_{\IMP} \fint(\wh{\phi}) = \top\imp\phi$,
hence by $\mondiam$,
$\vd_{\IMP} \fint(\wh{\phi}) = \diam\top\imp\diam\phi$.
Thus, being $\lambda \imp \diam\top$ derivable from $\axPdiam$,
we get 
$\vd_{\IMP} \lambda \imp \diam\phi = \fint(\Lambda, \nest{\wh{\phi}})$.

\medskip
\noindent
\textit{Case for rule $\ruled$.} \;
If $\Dp = \bl{\phi}$ and $\Sp = \bl{\psi}$, 
then from $\vd_{\IMD} \fint(\bl{\phi}, \bl{\psi}) = \phi \land \psi \imp \bot$, by the rule $\RD$ (Lem\-ma~\ref{lemma:eq axioms}),
we get $\vd_{\IMD} \Box\phi \land \Box\psi \imp \bot$, hence
 $\vd_{\IMD} \lambda \land \Box\phi \land \Box\psi \imp \pi = \fint(\G, \nest{\bl{\phi}}, \nest{\bl{\psi}})$.
If (w.l.o.g.) $\Dp = \bl{\phi}$ and $\Sp = \wh{\psi}$, 
then $\G = \Lambda$,
moreover from $\vd_{\IMD} \fint(\bl{\phi}, \wh{\psi}) = \phi \imp \psi$,
by $\mondiam$ we get $\vd_{\IMD} \diam\phi \imp \diam\psi$, then by the axiom $\axD$,
$\vd_{\IMD} \Box\phi \imp \diam\psi$,
therefore 
$\vd_{\IMD} \lambda \land \Box\phi \imp \diam\psi = \fint(\Lambda, \nest{\bl{\phi}}, \nest{\wh{\psi}})$.

\medskip
\noindent
\textit{Case for rule $\rulet$.} \; 
If $\Dp = \bl{\phi}$,
then from
$\vd_{\IMT} \fint(\Lambda, \nest{\bl{\phi}}, \bl{\phi}, \Pi) = \lambda \land \Box\phi \land \phi \imp \pi$,
by axiom $\axTbox$ we get 
$\vd_{\IMT} \lambda \land \Box\phi \imp \pi = \fint(\Lambda, \nest{\bl{\phi}}, \Pi)$.
If $\Dp = \wh{\phi}$,
then $\G = \Lambda$, moreover from
$\vd_{\IMT} \fint(\Lambda, \enest, \wh{\phi}) = \lambda \land \Box\top \imp \phi$,
by axiom $\axTdiam$ we get 
$\vd_{\IMT} \lambda \land \Box\top \imp \diam\phi$,
thus  
$\vd_{\IMT} \lambda \imp (\Box\top \imp \diam\phi)$,
hence by $\axIdiam$, 
$\vd_{\IMT} \lambda \imp \diam\phi =  \fint(\Lambda, \nest{\wh{\phi}})$.
\end{proof}

\begin{proposition}\label{prop:ext-adm-rules}
The rules $\bwk$, $\wwk$, $\bctr$ and $\nelim$ are height-preserving admissible in $\CIMstar$.
\end{proposition}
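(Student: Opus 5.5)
We extend the proof of Proposition~\ref{prop:adm struct}, again by induction on the height $h$ of the derivation of the premiss. The cases where the last rule $\Rule$ is a rule of $\CIM$ go through exactly as before, since the added rules $\rulen$, $\rulep$, $\ruled$, $\rulet$ do not alter the rules of $\CIM$. Hp-invertibility of $\bbox$, $\bimp$ (right premiss) and the other rules listed in Proposition~\ref{prop:hp-inv-rules} also still holds in $\CIMstar$. For each new rule we check three things: it permutes with the structural rule (apply the induction hypothesis to the premiss, then reapply the rule), or it absorbs the structural rule because its conclusion has an arbitrary context $\G$, or the structural rule acts on principal structure and needs separate treatment. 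We prove hp-invertibility first where it is needed, so that it is available for the contraction case.

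\textbf{Weakening.} For $\bwk$ and $\wwk$, the rules $\rulep$ and $\ruled$ have a free context $\G$ in the conclusion, so we simply reapply them with the extra component. For $\rulen$ and $\rulet$, the premiss contains the context $\G$, so we weaken the premiss by the induction hypothesis and reapply the rule. There is one subtlety in $\wwk$: if the last rule is $\rulet$ with $\Dp = \bl{\phi}$ and no output, then weakening the premiss with $\wh{x}$ preserves the shape of the rule. If $\Dp = \wh{\phi}$, the sequent already has an output, so $\wwk$ does not apply.

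\textbf{Elimination.} For $\nelim$, if $\enest$ or $\B$ is the principal block of $\rulep$ or $\ruled$, we argue case by case. In $\rulep$, if $\enest$ is principal, the premiss is the empty sequent, so the conclusion $(\G, \B)$ is again obtained by $\rulep$ when $\B$ is an input block (weakening the premiss by $\bwk$), or by $\wwk$ followed by $\rulep$ when $\B$ is an output block. If $\enest$ is principal in $\ruled$ (recall $|\Dp| = |\Sp| = 1$, so an empty block is not principal there), the situation does not arise. For $\rulet$ with principal $\enest$, the premiss is $\G, \enest, \B$ with the empty contents added, that is, the premiss is $\G, \enest, \B$ itself. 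Then the induction hypothesis applies directly, yielding $(\G, \B)$ at lower height. For $\rulen$, the premiss has an extra $\enest$, and we apply the induction hypothesis to it.

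\textbf{Contraction.} This is the main obstacle, specifically contraction of blocks principal in $\ruled$ and $\rulet$. For $\ruled$, take the case where both principal blocks are copies of the contracted block $\nest{\bl{\phi}}$, coming from the premiss $(\bl{\phi}, \bl{\phi})$. Here we contract that premiss by the induction hypothesis, obtaining $\bl{\phi}$, and apply $\rulep$ to get $\G, \nest{\bl{\phi}}$. This is exactly why $\rulep$ is included whenever $\axD \in \Ax$. If only one copy is principal, we reapply $\ruled$ keeping a single copy in the context.

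For $\rulet$ applied to $\nest{\bl{\phi}}$ with premiss $\G, \nest{\bl{\phi}}, \nest{\bl{\phi}}, \bl{\phi}$, we contract the two blocks by the induction hypothesis and reapply $\rulet$. The case $\rulet$ applied to $\nest{\wh{\phi}}$ cannot produce a duplicated output. For $\rulen$, we contract in the premiss by the induction hypothesis; if the contracted item is $\enest$, we use $\nelim$ together with the induction hypothesis.

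Principal formula cases for $\bimp$ and $\bbox$ are handled as in Proposition~\ref{prop:adm struct}, using hp-invertibility. We check that this invertibility still holds in the extended calculi, where the new rules permute trivially.
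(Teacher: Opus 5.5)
Your proposal is correct and follows essentially the same route as the paper. Both extend the height induction of Proposition~\ref{prop:adm struct} to the new rules, and the key contraction case is the same: two copies of $\nest{\bl{\phi}}$ both principal in $\ruled$ are handled by contracting $(\bl{\phi}, \bl{\phi})$ and applying $\rulep$.

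One small correction: the side condition $|\Dp| = |\Sp| = 1$ applies to all four new rules, not only to $\ruled$. So an empty block is never principal in $\rulep$ or $\rulet$, and your cases treating it as principal are vacuous. The $\nelim$ case the paper actually singles out is $\rulet$ with $\nest{\wh{\phi}}$ principal, which has premiss $(\Lambda, \enest, \enest, \wh{\phi})$. Your generic permutation handles it: eliminate one empty block by the induction hypothesis, then reapply $\rulet$.
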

\begin{proof}
By extending the proof of Proposition~\ref{prop:adm struct} with the additional cases determined by the new rules.
We show only a few significant cases.
For $\bctr$, if its premiss ($\G, \nest{\bl{\phi}}, \nest{\bl{\phi}}$) is derived via $\ruled$ from ($\bl{\phi}, \bl{\phi}$),
then by a hp-application of $\bctr$ to the premiss of $\ruled$ we obtain ($\bl{\phi}$), then by $\rulep$ we have ($\G, \nest{\bl{\phi}}$).
For $\nelim$, note that $\nest$ is never principal in (bottom-up) applications of the rules $\rulen, \rulep, \ruled, \rulet$.
If the premiss ($\G, \enest, \B$) of $\nelim$ is obtained via $\rulen$ from ($\G, \enest, \enest, \B$),
then by two hp-applications of $\nelim$ to this sequent we obtain ($\G, \B$).
If the premiss ($\Lambda, \enest, \nest{\wh{\phi}}$) of $\nelim$ is obtained via $\rulet$ from ($\Lambda, \enest, \enest, \wh{\phi}$),
then by a hp-application of $\nelim$ to this sequent we obtain ($\Lambda, \enest, \wh{\phi}$), and by $\rulet$
we obtain ($\Lambda, \nest{\wh{\phi}}$).
\end{proof}

\begin{theorem}\label{thm:cut-ext}
The rules $\cut$, $\bsub$ and $\wsub$ are admissible in $\CIMstar$.
\end{theorem}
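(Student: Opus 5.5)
We extend the proof of Theorem~\ref{th:cut} for $\CIM$, keeping the four-part induction (A)--(D) with $Cut(c,h)$ and $Sub(c)$ unchanged. The base cases (A) and (B)(iii) are as before, since the new rules $\rulen,\rulep,\ruled,\rulet$ introduce no formulas and have no zero-premiss instances. We may use hp-admissibility of $\bwk$, $\wwk$, $\bctr$ and $\nelim$ in $\CIMstar$ (Proposition~\ref{prop:ext-adm-rules}). Since the new rules are never principal for a cut formula, the only new cases in (B) and (D) are those where the last rule above a premiss of $\cut$ is a structural block rule. Here we permute the cut upwards.

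\emph{Cut permutations.} For $\rulen$ this is immediate: cut the premiss $(\G,\enest)$ and reapply $\rulen$. For $\rulet$ with $\Dp=\bl{\psi}$ on either side, cut on the premiss $(\G,\nest{\bl\psi},\bl\psi)$ and reapply $\rulet$. If $\rulet$ has an output block $\nest{\wh\psi}$ in the right premiss $(\bl\phi,\D',\enest,\wh\psi)$, we cut there to get $(\Lambda,\D',\enest,\wh\psi)$ and reapply $\rulet$. The rules $\rulep$ and $\ruled$ discard their context. When such a rule derives a premiss in which the cut formula lies in the discarded context, we simply reapply the rule with the new context; the output side of the conclusion is compatible because the rule's principal blocks are untouched. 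The delicate case is the left premiss $(\Lambda,\wh\phi)$ derived by $\rulet$ with $\Dp=\bl\psi$, i.e.\ from $(\Lambda',\nest{\bl\psi},\bl\psi,\wh\phi)$. Here the cut goes above, and any output in $\D$ is harmless, since $\rulet$ with an input block allows an arbitrary $\G$.

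\emph{Substitution cases (C).} The new work is in (C), where the block $\nest{\bl\phi}$ or $\nest{\wh\phi}$ of $\bsub$/$\wsub$ may now be principal in $\rulep$, $\ruled$ or $\rulet$. We handle each case with a cut on $\phi$, which has complexity $c$ and is admissible by the hypothesis $\forall h.\,Cut(c,h)$, followed by the same rule.
\begin{itemize}
\item For $\rulep$: from $(\bl\phi)$ and $(\Lambda,\wh\phi)$ we get $(\Lambda)$ and then $\nest{\Lambda}$ by $\rulep$. If $\Lambda$ is empty we get the empty sequent, i.e.\ $\bot$ provable, which is impossible; alternatively $\Lambda=\emptyset$ yields $\enest$ and we use $\wbox$/$\rulen$ style reasoning. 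The $\wsub$ case is analogous.
\item For $\ruled$: from $(\bl\phi,\Sp)$ we get $(\Lambda,\Sp)$ by cut, then apply $\ruled$. If $\Lambda=\emptyset$, we get $(\Sp)$ and use $\rulep$ instead. This is exactly why $\rulep$ is included whenever $\axD\in\Ax$.
\item For $\rulet$ with $\nest{\bl\phi}$ principal: the premiss is $(\D,\nest{\bl\phi},\bl\phi)$. We apply $\bsub$ inductively on the height to obtain $(\D,\nest\Lambda,\bl\phi)$, cut $\phi$ against $(\Lambda,\wh\phi)$ to obtain $(\D,\nest\Lambda,\Lambda)$ after contraction, and then apply $\rulet$, or the identity on $\nest{}$ when $\Lambda=\emptyset$, using $\nelim$/$\bwk$.
\item For $\rulet$ with $\nest{\wh\phi}$ in $\wsub$: from $(\Lambda,\enest,\wh\phi)$ we cut with $(\bl\phi,\D)$ to get $(\Lambda,\enest,\D)$. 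We then reapply $\rulet$ when $\D=\wh\psi$; when $\D=\emptyset$ we need $(\Lambda,\enest)$, which is exactly what we have.
\end{itemize}

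I expect the main obstacle to be the $\rulet$ interactions in (C), and to a lesser degree the $\ruled$ case with empty $\Lambda$. These require the interplay of the height induction within (C), contraction, $\nelim$, and the auxiliary rule $\rulep$, and must be checked carefully so that every cut used is on complexity $c$ (licensed) or on smaller complexity.
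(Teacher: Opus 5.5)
Your plan follows the paper's proof. You keep the induction (A)--(D), permute cuts over $\rulen$ and $\rulet$ or absorb them into a fresh instance of $\rulep$/$\ruled$, and in (C) you handle each principal occurrence of the substituted block by a cut on $\phi$ (licensed by $\forall h.\,Cut(c,h)$) followed by the same block rule. Two of your cases are correct and cover ground the paper leaves implicit: the $\bsub$/$\rulet$ case (first $\bsub$ via the height induction, then the cut) and the $\bsub$/$\ruled$ case with $\Lambda=\emptyset$ handled via $\rulep$.

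The gap is in your last bullet. Suppose $(\Lambda,\nest{\wh{\phi}})$ is obtained by $\rulet$ from $(\Lambda,\enest,\wh{\phi})$. Then the other premiss of $\wsub$ can also be $(\bl{\phi},\bl{\psi})$, with target $(\Lambda,\nest{\bl{\psi}})$, but you only treat $\D=\wh{\psi}$ and $\D=\emptyset$. In this case the cut gives $(\Lambda,\enest,\bl{\psi})$, and ``reapplying $\rulet$'' does not work. The target block is now an input block, and the input instance of $\rulet$ needs the premiss $(\Lambda,\nest{\bl{\psi}},\bl{\psi})$, not one containing $\enest$. The paper closes this case in three steps: $\bwk$ adds $\nest{\bl{\psi}}$, $\nelim$ deletes $\enest$ (hp-admissible by Proposition~\ref{prop:ext-adm-rules}), and then $\rulet$ applies. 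This is exactly where $\nelim$ is needed. By contrast, the $\bsub$/$\rulet$ bullet, where you invoke contraction and $\nelim$/$\bwk$, needs neither. The cut directly yields $(\Lambda,\D,\nest{\Lambda})$, and for $\Lambda=\emptyset$ this is already the target $(\D,\enest)$. Finally, you do not spell out $\wsub$ against $\ruled$. It goes through as in the paper, except that for $\D=\emptyset$ you again need $\rulep$ rather than $\ruled$, just as in your $\bsub$/$\ruled$ case.
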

\begin{proof}
We extend the proofs of the claims \textbf{(A)}, \textbf{(B)}, \textbf{(C)}, \textbf{(D)} in the proof of Theorem~\ref{th:cut}.
\textbf{(A)} is as before.
\textbf{(B, D)} Suppose the last rule $\Rule$ applied in the derivation of a premiss of $\cut$ is among $\rulen, \rulep, \ruled, \rulet$.
Then the cut formula is not principal in $\Rule$.
If $\Rule \in \{\rulen, \rulet\}$, we first apply $\cut$ on the premiss of $\Rule$ and the other premiss of $\cut$, then we apply $\Rule$. 
If $\Rule \in \{\rulep, \ruled\}$, we obtain the conclusion of $\cut$ by considering a different application of $\Rule$.
\textbf{(C)}
Suppose the premiss ($\nest{\bl{\phi}}, \D$) of $\bsub$, resp. the premiss ($\Lambda, \wh{\phi}$) of $\bsub$, is obtained by $\Rule \in \{\rulen, \rulep, \ruled, \rulet\}$.
If the block $\nest{\bl{\phi}}$, resp. $\wh{\phi}$, is not principal in $\Rule$, then the analysys is essentially as before.
Otherwise, we consider the following representative cases.

(i)
Assume the premiss ($\nest{\bl{\phi}}, \D$) of $\bsub$ is derived via $\rulep$ from ($\bl{\phi}$). 
Then $\Lambda$ is not empty in the other premiss ($\Lambda, \wh{\phi}$) of $\bsub$, otherwise both ($\bl{\phi}$) and ($\wh{\phi}$) would be derivable, against the soundness of the calculi.
Let $\Lambda = \bl{\psi}$.
By cutting on ($\bl{\psi}, \wh{\phi}$) and ($\bl{\phi}$), we obtain ($\bl{\psi}$), then by $\rulep$ we get ($\nest{\bl{\psi}}, \D$).

(ii)
Assume the premiss ($\Lambda', \nest{\bl{\psi}}, \nest{\wh{\phi}}$) of $\wsub$ is derived via $\ruled$ from ($\bl{\psi}, \wh{\phi}$). 
Then by cutting on ($\bl{\psi}, \wh{\phi}$) and the other premiss of $\wsub$ ($\bl{\phi}, \D$), we obtain ($\bl{\psi}, \D$), then by $\ruled$ we get ($\Lambda', \nest{\bl{\psi}}, \nest{\D}$).

(iii)
Assume the premiss ($\Lambda, \nest{\wh{\phi}}$) of $\wsub$ is derived via $\rulet$ from  ($\Lambda, \enest, \wh{\phi}$).
Consider the other premiss of $\wsub$ ($\bl{\phi}, \D$).
If $\D = \wh{\psi}$, then by cutting on ($\Lambda, \enest, \wh{\phi}$) and ($\bl{\phi}, \wh{\psi}$) we obtain ($\Lambda, \enest, \wh{\psi}$),
then by $\rulet$ we get ($\Lambda, \nest{\wh{\psi}}$).
If $\D = \bl{\psi}$, then by cutting on ($\Lambda, \enest, \wh{\phi}$) and ($\bl{\phi}, \bl{\psi}$) we obtain ($\Lambda, \enest, \bl{\psi}$),
then by $\bwk$, ($\Lambda, \enest, \nest{\bl{\psi}}, \bl{\psi}$), by $\nelim$, ($\Lambda, \nest{\bl{\psi}}, \bl{\psi}$),
finally by $\rulet$, ($\Lambda, \nest{\bl{\psi}}$).
Finally, if $\D = \emptyset$,  by cutting on ($\Lambda, \enest, \wh{\phi}$) and ($\bl{\phi}$) we obtain ($\Lambda, \enest$).
\end{proof}

\begin{theorem}[Completeness]
If $\vd_{\IMstar} \phi$, then $\vd_{\CIMstar} \wh{\phi}$.
\end{theorem}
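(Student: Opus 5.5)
We argue by induction on the length of a Hilbert-style derivation of $\phi$ in $\IMstar$, exactly as for the completeness theorem for $\CIM$. Because the calculi $\CIMstar$ extend $\CIM$, the earlier proof already handles all axioms and rules of $\IM$: the propositional axioms, $\axmnc$, $\axIdiam$, $\monbox$, $\mondiam$ and modus ponens. The rules are simulated through $\cut$, which is admissible in $\CIMstar$ by Theorem~\ref{thm:cut-ext}. Uniform substitution needs no separate treatment if we work with axiom schemata. It therefore only remains to show that for each $\varax \in \Ax$ the sequent $\wh{\varax}$ (for every instance) is derivable in $\CIMstar$.

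The preceding example already gives derivations for $\axN$ (via $\wbox$ and $\rulen$), $\axP$ (via $\rulep$ and $\wdiam$), $\axD$ (via $\ruled$ applied to $\bl{\phi}, \wh{\phi}$, which is derivable by Proposition~\ref{prop:gen ax}), $\axTbox$ and $\axTdiam$ (via $\rulet$). In each case the leaves are generalised initial sequents given by Proposition~\ref{prop:gen ax}. The derivation for $\axD$ also goes through when $\axP$ is absent, since $\rulep$ is present whenever $\axD$ is. For $\axTbd = \axTbox \wedge \axTdiam$, we combine the two derivations with $\wland$. I would state these facts briefly and cite the example.

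With the axioms in hand, the induction goes as follows. If $\phi$ is an axiom, $\wh{\phi}$ is derivable by the previous paragraph. If $\phi$ comes by modus ponens from $\psi$ and $\psi \imp \phi$, we take $\bl{\psi \imp \phi}, \bl{\psi}, \wh{\phi}$, which is derivable using $\bimp$ and Proposition~\ref{prop:gen ax}. Cutting twice with the inductively given $\wh{\psi \imp \phi}$ and $\wh{\psi}$ then yields $\wh{\phi}$. If $\phi$ comes by $\monbox$ or $\mondiam$, we cut $\wh{\psi\imp\chi}$ against $\bl{\psi\imp\chi},\bl{\psi},\wh{\chi}$ to get $\bl{\psi},\wh{\chi}$. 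We then apply $\wbox$ and $\bbox$, respectively $\bdiam$ and $\wdiam$, and finish with $\wimp$.

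The main obstacle is that everything relies on Theorem~\ref{thm:cut-ext}, that is, on $\cut$ remaining admissible in the presence of the new rules. In particular, the case for $\ruled$ requires $\rulep$ in order to keep contraction admissible; this is exactly why the definition adds $\rulep$. Granting those earlier results, the completeness argument itself is routine. The only care needed is to check that every axiom instance, with arbitrary $\phi$, is covered by the generalised identity of Proposition~\ref{prop:gen ax}. That proposition was proved for $\CIM$ and transfers to $\CIMstar$ because the calculus only gains rules.
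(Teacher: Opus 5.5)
Your proposal is correct and follows the paper's own route. The paper also extends the completeness proof for $\CIM$: it simulates modus ponens, $\monbox$ and $\mondiam$ via $\cut$, which is admissible in $\CIMstar$ by Theorem~\ref{thm:cut-ext}, and it derives the new axioms exactly as in the example preceding the theorem. One minor point: your derivation of $\axD$ uses only $\ruled$, so the remark about $\rulep$ there is beside the point, since $\rulep$ matters only for contraction and cut admissibility (and is sound because $\vd_{\IM \oplus \axD} \axP$).
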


\subsection{Calculi for extensions with $\axKbd$}\label{subsec:CIMK}

So far, blocks contained at most one formula. 
While not strictly necessary, it allowed us to simplify the calculus.
To formulate calculi for logics with $\axKbd$,
we need to adopt a more general formulation of the calculi
where blocks contain finite \emph{multisets} of marked formulas.
We sketch the required modifications.

We now call \emph{input blocks} those containing only input formulas
(or no formula at all),
and \emph{output blocks} those containing an output formula
(possibly together with some input formulas).
As before, sequents are single-succedent, meaning that their only output component is either an output formula,
or an output block, or is not present.
We extend the formula interpretation from Definition~\ref{def:fint} via:
  \begin{equation*}
    \fintf(\nest{ \ }) = \Box \top, \quad
    \fintf(\nest{\phi^\bullet_1, \ldots, \phi^\bullet_k}) = \Box (\phi_1 \land \dots \land \phi_k), \quad
    \fintf(\nest{\phi^\bullet_1, \ldots, \phi^\bullet_k, \wh{\psi}}) = \diam (\phi_1 \land \dots \land \phi_k \imp \psi)
  \end{equation*}
  For instance, $\fint(\nest{\bl{\phi}, \bl{\psi}}, \nest{\bl{\chi}, \wh{\theta}}) = \Box(\phi \land \psi) \imp \diam(\chi \imp \theta)$.
  
  Furthermore, in this setting the output pruning $\pr\G$ deletes output formulas
  that do not occur inside a block, and empties the \emph{entire} block if an
  output formula occurs inside it.
  So for instance we have  
  $\pr{(\bl{\phi}, \nest{\bl{\psi}}, \bl{\chi}, \wh{\theta})} = \bl{\phi}, \nest{\bl{\psi}}, \bl{\chi}$
  and
  $\pr{(\bl{\phi}, \nest{\bl{\psi}}, \nest{\bl{\chi}, \wh{\theta}})} = \bl{\phi}, \nest{\bl{\psi}}, \enest$.

\begin{definition}
The calculus $\CIMpK$ for $\IMpK$
is defined by the rules in Figure~\ref{fig:CIM},
where the restriction $|\D| \leq 1$ for the rules $\wbox$ and $\bdiam$ is dropped,
together with the following rule:
\begin{equation*}
\ax{$\G, \nest{\D, \Sigma}$}
\llab{$\rulek$}
\uinf{$\G, \nest{\D}, \nest{\Sigma}$}
\disp
\quad
\end{equation*}
\end{definition}

\begin{example}
  The axioms $\axKbox$ and $\axKdiam$ can be derived in $\CIMpK$ as follows:
  \begin{equation*}
\begin{small}
\ax{$\bl{\phi \to \psi}, \bl{\phi}, \wh{\phi}$}
\ax{$\bl{\psi}, \bl{\phi}, \wh{\psi}$}
\rlab{$\bimp$}
\binf{$\bl{\phi \to \psi}, \bl{\phi}, \wh{\psi}$}
\rlab{$\wbox$}
\uinf{$\nest{\bl{\phi \to \psi}, \bl{\phi}}, \wh{\Box\psi}$}
\rlab{$\rulek$}
\uinf{$\nest{\bl{\phi \to \psi}}, \nest{\bl{\phi}}, \wh{\Box\psi}$}
\rlab{$\bbox$}
\uinf{$\nest{\bl{\phi \to \psi}}, \bl{\Box\phi}, \wh{\Box\psi}$}
\rlab{$\bbox$}
\uinf{$\bl{\Box(\phi \to \psi)}, \bl{\Box\phi}, \wh{\Box\psi}$}
\rlab{$\wimp$}
\uinf{$\bl{\Box(\phi \to \psi)}, \wh{\Box\phi \to \Box\psi}$}
\rlab{$\wimp$}
\uinf{$\wh{\Box(\phi \to \psi) \to (\Box\phi \to \Box\psi)}$}
\disp
\qquad\quad
\ax{$\bl{\phi}, \bl{\phi \to \psi}, \wh{\phi}$}
\ax{$\bl{\phi}, \bl{\psi}, \wh{\psi}$}
\rlab{$\bimp$}
\binf{$\bl{\phi}, \bl{\phi \to \psi}, \wh{\psi}$}
\rlab{$\bdiam$}
\uinf{$\bl{\Diamond\phi}, \nest{\bl{\phi \to \psi}, \wh{\psi}}$}
\rlab{$\rulek$}
\uinf{$\nest{\bl{\phi \to \psi}}, \bl{\Diamond\phi}, \nest{\wh{\psi}}$}
\rlab{$\wdiam$}
\uinf{$\nest{\bl{\phi \to \psi}}, \bl{\Diamond\phi}, \wh{\Diamond\psi}$}
\rlab{$\bbox$}
\uinf{$\bl{\Box(\phi \to \psi)}, \bl{\Diamond\phi}, \wh{\Diamond\psi}$}
\rlab{$\wimp$}
\uinf{$\bl{\Box(\phi \to \psi)}, \wh{\Diamond\phi \to \Diamond\psi}$}
\rlab{$\wimp$}
\uinf{$\wh{\Box(\phi \to \psi) \to (\Diamond\phi \to \Diamond\psi)}$}
\disp
\end{small}
  \end{equation*}
\end{example}

\begin{proposition}\label{prop:sound k}
The rule $\rulek$ is sound with respect to $\IMpK$.
\end{proposition}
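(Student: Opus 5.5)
We have to show that if $\vd_{\IMpK} \fint(\G, \nest{\D, \Sigma})$ then $\vd_{\IMpK} \fint(\G, \nest{\D}, \nest{\Sigma})$. We write $\G = \Lambda, \Pi$, $\lambda = \bigwedge_{x\in\Lambda}\fintf(x)$ and $\pi = \fintf(\Pi)$. Since the sequent is single-succedent, the combined block $\nest{\D,\Sigma}$ contains at most one output formula. The plan is to show, in each case, that $\fintf(\nest{\D}) \land \fintf(\nest{\Sigma})$ entails $\fintf(\nest{\D,\Sigma})$ in $\IMpK$, or the appropriate curried variant. The case split follows where the output formula lies.

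\textbf{Case 1: both $\D$ and $\Sigma$ are input multisets.} Say $\D = \phi_1^\bullet,\dots,\phi_k^\bullet$ and $\Sigma = \psi_1^\bullet,\dots,\psi_m^\bullet$. Then $\fintf(\nest{\D,\Sigma}) = \Box(\bigwedge\phi_i \land \bigwedge\psi_j)$. By $\axCbox$ (derivable from $\axKbox$ by Lemma~\ref{lemma:inclusions}) we have $\Box\bigwedge\phi_i \land \Box\bigwedge\psi_j \imp \Box(\bigwedge\phi_i\land\bigwedge\psi_j)$. An empty block has interpretation $\Box\top$, so if $\D$ or $\Sigma$ is empty we still get the claim, using $\monbox$ to drop the $\top$ conjunct. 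Hence $\lambda \land \fintf(\nest\D)\land\fintf(\nest\Sigma) \imp \pi$ follows from the premiss.

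\textbf{Case 2: the output formula is in $\Sigma$ (the case of $\D$ is symmetric).} Say $\Sigma = \psi_1^\bullet,\dots,\psi_m^\bullet,\wh\theta$. Then $\G=\Lambda$, and the premiss gives $\lambda \imp \diam(\bigwedge\phi_i\land\bigwedge\psi_j\imp\theta)$. We must derive $\lambda\land\Box\bigwedge\phi_i \imp \diam(\bigwedge\psi_j\imp\theta)$. Write $\alpha=\bigwedge\phi_i$ and $\beta=\bigwedge\psi_j$. Intuitionistically $\alpha \imp ((\alpha\land\beta\imp\theta)\imp(\beta\imp\theta))$, so $\monbox$ gives $\Box\alpha \imp \Box((\alpha\land\beta\imp\theta)\imp(\beta\imp\theta))$. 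Then $\axKdiam$ yields $\Box\alpha\land\diam(\alpha\land\beta\imp\theta)\imp\diam(\beta\imp\theta)$, which finishes this case. If $\D$ is empty its interpretation is $\Box\top$ and the step is trivial via $\mondiam$. The remaining subcase is an empty $\Sigma$-part together with an output formula in $\D$, where $\Sigma=\emptyset$ has interpretation $\Box\top$ and plays the role of $\Box\alpha$ with $\alpha=\top$; the same argument applies.

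The main obstacle is Case 2. Here the interpretation of a mixed block is a $\diam$ of an implication, and the input conjuncts have to be moved out of it using exactly $\axKdiam$ (equivalently $\axW$) together with monotonicity. The key is to choose the right intuitionistic tautology, as above, for the boxed implication. Care is also needed with degenerate empty blocks, which may require monotonicity applied to $\top$, but no use of $\axIdiam$.
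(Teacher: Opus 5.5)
Your proof is correct and essentially matches the paper's. In both, $\axCbox$ handles the case where both blocks are input blocks, and in the mixed case the input conjuncts are moved out of the $\Diamond$ by a box--diamond interaction principle plus monotonicity; the only difference is that you apply $\axKdiam$ to a boxed tautology obtained via $\monbox$, whereas the paper uses the derived axiom $\axW$ followed by $\mondiam$ (and $\axW$ is itself obtained from $\axKdiam$ in essentially your way).
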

\begin{proof}
If both $\nest{\D}$ and $\nest{\Sigma}$ are input blocks, then soundness is immediate using $\axCbox$ (Lemma \ref{lemma:eq axioms}).
Suppose $\Sigma = (\Theta, \wh{\phi})$. Then $\G = \Lambda$. 
Let us shorten $(\bigwedge_{x \in \Lambda} \fintf(x)) = \lambda$;
$(\bigwedge_{y \in \Delta} \fintf(y)) = \delta$; and
$ (\bigwedge_{z \in \Theta} \fintf(z)) = \theta$.
We assume $\vd_{\IMpK} \fint(\Lambda, \nest{\D, \Sigma}) = 
\lambda \imp \diam (\delta \land \theta \imp \phi)$.
By the formula $\axW$ (Lemma \ref{lemma:eq axioms}), we have 
$\vd_{\IMpK} \Box\delta \land  \diam (\delta \land \theta \imp \phi)
\imp \diam (\delta \land (\delta \land  \theta \imp \phi))$.
Moreover, by $\mondiam$, 
$\vd_{\IMpK} \diam (\delta \land (\delta \land  \theta \imp \phi))
\imp \diam(\theta \imp \phi)$.
Combining everything, we get 
$\vd_{\IMpK} \lambda \land \Box\delta \imp \diam(\theta \imp \phi) =  \fint(\Lambda, \nest{\D}, \nest{\Sigma})$.
\end{proof}

The generalisation of the formalism to multisets within blocks slightly complicates the structural analysis of the calculus
(which motivates keeping it simple when possible).
Apart from adjusting all cases from the previous proofs with blocks possibly containing more than one formula
(which is in most cases simply a routine task that does not affect them significantly),
the main changes to the previous analysis are the following: 
(1) The rule $\nctr$ below is hp-admissible in $\CIMpK$;
in particular, the rules $\bctr$ and $\nctr$ are simultaneously proved admissible
by mutual induction on the height of the derivation of their premiss.
(2)
The rules $\bsub$ and $\wsub$ now take the form below.
\begin{equation*}
\ax{$\G, \nest{\D, \bl{\phi}, \bl{\phi}}$}
\llab{$\nctr$}
\uinf{$\G, \nest{\D, \bl{\phi}}$}
\disp
\qquad 
\ax{$\Lambda, \wh{\phi}$}
\ax{$\nest{\bl{\phi}, \Sigma}, \D$}
\llab{$\bsub$}
\binf{$\nest{\Lambda, \Sigma}, \D$}
\disp
\qquad
\ax{$\Lambda, \nest{\Sigma, \wh{\phi}}$}
\ax{$\bl{\phi}, \D$}
\llab{$\wsub$}
\binf{$\Lambda, \nest{\Sigma, \D}$}
\disp
\end{equation*}

\begin{proposition}\label{prop:k-hp-adm}
The rule $\rulek$ is height-preserving invertible in $\CIMpK$.
\end{proposition}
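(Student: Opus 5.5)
We show that if $\G, \nest{\D}, \nest{\Sigma}$ has a $\CIMpK$-derivation of height $h$, then $\G, \nest{\D, \Sigma}$ has one of height $\leq h$. The argument is by induction on $h$, following the pattern of Proposition~\ref{prop:hp-inv-rules}. Since the conclusion is single-succedent, at most one of $\nest{\D}$ and $\nest{\Sigma}$ is an output block, so the merged block $\nest{\D, \Sigma}$ is again well formed.

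\textbf{Base case ($h = 0$).} The sequent is an instance of $\id$ or $\bbot$. Neither axiom depends on blocks, since their principal formulas lie outside blocks. Hence the merged sequent is an instance of the same axiom.

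\textbf{Inductive step, neither block principal.} Suppose the last rule $\Rule$ has neither $\nest{\D}$ nor $\nest{\Sigma}$ as principal.
\begin{enumerate}
\item If $\Rule$ is one of $\wbox$, $\bdiam$ (or, in extensions, a rule like $\rulep$ or $\ruled$ whose principal block is some other block), both blocks belong to the weakened context. We reapply $\Rule$ to the same premiss with $\nest{\D, \Sigma}$ in the context instead.
\item If $\Rule$ is $\rulek$ producing some other pair of blocks, we apply the induction hypothesis to its premiss and then reapply $\rulek$.
\item For the propositional rules and $\bbox$, $\wdiam$, we apply the induction hypothesis to the premisses and reapply $\Rule$.
\end{enumerate}
The one delicate rule here is $\bimp$, because its left premiss contains $\pr{\G}$. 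If the output block is $\nest{\D}$ (say), then $\pr{(\G, \nest{\D}, \nest{\Sigma})}$ contains $\enest, \nest{\Sigma}$, whereas $\pr{(\G, \nest{\D,\Sigma})}$ contains only $\enest$. The induction hypothesis turns the left premiss into one containing $\nest{\Sigma}$, since merging $\enest$ with $\nest{\Sigma}$ gives $\nest{\Sigma}$. We then still need to pass to $\enest$. For this we would show that the rule deleting input formulas from inside an input block, namely from $\G, \nest{\D', \bl{\psi}}$ to $\G, \nest{\D'}$, is hp-admissible. This is the analogue of monotonicity of $\Box$. Its proof is a separate induction: in $\wbox$, $\bdiam$ and $\rulek$, deleting an input formula from a block shrinks the premiss's input context, so the step needs hp-admissibility of input-formula weakening $\bwk$, which is available. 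With this lemma we obtain the pruned left premiss as required and reapply $\bimp$.

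\textbf{Inductive step, a block principal.}
\begin{enumerate}
\item If $\Rule = \rulek$ splits exactly $\nest{\D, \Sigma}$ into $\nest{\D}$ and $\nest{\Sigma}$, then its premiss is the desired sequent, of height $h-1$.
\item If $\Rule = \wbox$ with premiss $\D, \wh{\phi}$ and $\nest{\D}$ principal, then $\nest{\Sigma}$ is an input block in the context. By $\bwk$ we pass to $\D, \Sigma, \wh{\phi}$ with the same height and reapply $\wbox$ with principal block $\nest{\D, \Sigma}$. The case where $\nest{\Sigma}$ is principal is symmetric.
\item If $\Rule = \bdiam$ with premiss $\bl{\phi}, \D$, we weaken by $\Sigma$ in the same way. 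When the output formula lies in $\Sigma$, we use $\wwk$ together with $\bwk$; this is legitimate because $\D$ is then purely input.
\end{enumerate}

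The main obstacle is the $\bimp$ case: output pruning interacts with merging because pruning empties an entire block. This is exactly where the auxiliary admissibility of deleting input formulas from blocks is needed, and I expect its proof to require the most care.
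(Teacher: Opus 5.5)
Your approach is the paper's: induction on height, with the exact-split $\rulek$ case immediate and all other rules reapplied. The real issue is the $\bimp$ case you flagged, and the step you propose there fails.

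\textbf{The auxiliary lemma is false, and the $\bimp$ case cannot be repaired.} Deleting an input formula from an input block (from $\G, \nest{\D', \bl{\psi}}$ to $\G, \nest{\D'}$) is not even sound. The block $\fintf(\nest{\D', \bl{\psi}}) = \Box(\delta' \land \psi)$ is a \emph{stronger} assumption than $\Box\delta'$, so monotonicity of $\Box$ only licenses \emph{adding} input formulas to an input block. For example, $\nest{\bl{p}}, \wh{\Box p}$ is derivable by $\wbox$ from $\bl{p}, \wh{p}$, but $\enest, \wh{\Box p}$ (i.e.\ $\Box\top \imp \Box p$) is not. Your justification is also reversed: removing a formula from the input context of the premiss of $\wbox$ or $\bdiam$ is the converse of $\bwk$, not an instance of it.

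No other repair exists, because the statement itself fails when one of the two blocks is an output block. The sequent $\bl{\Box p \imp \diam q}, \nest{\bl{p}}, \nest{\wh{q}}$ has a derivation of height $2$. Apply $\bimp$ with left premiss $\nest{\bl{p}}, \enest, \bl{\Box p \imp \diam q}, \wh{\Box p}$, obtained by $\wbox$ from $\bl{p}, \wh{p}$. Take as right premiss $\nest{\bl{p}}, \nest{\wh{q}}, \bl{\diam q}$, obtained by $\bdiam$ from $\bl{q}, \wh{q}$.

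Its $\rulek$-inverse $\bl{\Box p \imp \diam q}, \nest{\bl{p}, \wh{q}}$ is not derivable at all. Bottom-up only $\bimp$ applies. Its left premiss $\enest, \bl{\Box p \imp \diam q}, \wh{\Box p}$ can only come from $\wbox$ applied to the underivable $\wh{p}$, or from $\bimp$ with itself as left premiss.

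This is not an artefact of proof search. The interpretation $(\Box p \imp \diam q) \imp \diam(p \imp q)$ fails at $w$ in a continual model with intersection-closed neighbourhoods, hence in a model of $\IMpK$ by Lemma~\ref{lem:frame-conditions}. Take $W = \{w, u, u'\}$ with $u \leq u'$, $N(w) = \{\{u\}\}$, $N(u) = N(u') = \emptyset$, $V(p) = \{u'\}$ and $V(q) = \emptyset$. Then $\Box p$ fails at $w$, so $\Box p \imp \diam q$ holds vacuously there, while $u \not\Vdash p \imp q$.

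So the proposition as stated is false. The paper's sketch (``apply the induction hypothesis to the premiss of $\Rule$, then apply $\Rule$'') passes over exactly the pruning problem you spotted.

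What your argument does prove, once the lemma is dropped, is hp-invertibility of $\rulek$ when both blocks are input blocks. In that case pruning leaves $\nest{\D}, \nest{\Sigma}$ untouched, and the induction hypothesis applies directly to the left premiss of $\bimp$.

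Even for this restricted version, your case split misses one case, which is the one the paper writes out. The last rule may be $\rulek$ with only $\nest{\D}$ principal, producing $\nest{\D}, \nest{\Theta}$ from $\G', \nest{\D, \Theta}, \nest{\Sigma}$. This is neither an ``other pair'' nor an exact split. Handle it by applying the induction hypothesis to merge $\nest{\D, \Theta}$ with $\nest{\Sigma}$, then splitting off $\nest{\Theta}$ by $\rulek$.

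Your treatment of the principal $\wbox$ and $\bdiam$ cases via $\bwk$ and $\wwk$ is correct. It is also more explicit than the paper's sketch, which does not address those cases.
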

\begin{proof}
By induction on the height $h$ of the derivation of the conclusion $C = (\G, \nest{\D}, \nest{\Sigma})$ of $\rulek$.
As an example, suppose that $h > 0$ and consider the last rule $\Rule$ applied in the derivation of $C$.
If $\R = \rulek$ with $\nest{\D}, \nest{\Sigma}$ principal, then the claim trivially follows.
If $\R = \rulek$ with only $\nest{\D}$ principal, then $C = (\G', \nest{\D}, \nest{\Sigma}, \nest{\Theta})$ is
obtained from ($\G', \nest{\D, \Theta}, \nest{\Sigma}$).
Applying the induction hypothesis on this sequent we get ($\G', \nest{\D, \Theta,\Sigma}$), then by $\rulek$, ($\G', \nest{\D, \Sigma}, \nest{\Theta}$).
If $\R \neq \rulek$, we apply the induction hypothesis on the premiss of $\R$, then we apply $\R$.
\end{proof}

\begin{proposition}\label{prop:ext-K-hp-adm}
The rules $\bwk$, $\wwk$, $\bctr$, $\nctr$, $\nelim$ are height-preserving admissible in $\CIMpK$.
\end{proposition}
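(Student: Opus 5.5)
We argue by induction on the height $h$ of the derivation of the premiss, extending the proofs of Propositions~\ref{prop:adm struct} and~\ref{prop:ext-adm-rules} to the multiset setting. For $\bwk$ and $\wwk$ the argument is unchanged. If the last rule $\Rule$ is $\wbox$ or $\bdiam$, we reapply $\Rule$ and let the added component appear in the context; $\Rule$ has an arbitrary context, and for $\wwk$ the conclusion $\Lambda$ contains no output component, so this is legitimate. Otherwise, including $\Rule = \rulek$, we weaken the premiss(es) by the induction hypothesis and then reapply $\Rule$. For $\bimp$ we also use that $\pr{(\G, \bl{x})} = \pr{\G}, \bl{x}$, and $\pr{}$ is idempotent. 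Height is preserved in every case.

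For $\bctr$ and $\nctr$ we use a simultaneous induction on $h$, as announced in the paper. When the contracted component is not principal, we either reapply the rule with one fewer copy ($\wbox$, $\bdiam$), or contract in the premiss by the induction hypothesis and reapply $\Rule$. When it is principal, the propositional and $\bbox$ cases are handled as in Proposition~\ref{prop:adm struct}, using hp-invertibility (Proposition~\ref{prop:hp-inv-rules}, which still holds here). The two genuinely new interactions are the following.

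\begin{itemize}
\item $\bctr$ on an input block $\nest{\D}$ where one copy is principal in $\rulek$. The premiss is $\G, \nest{\D, \Sigma}, \nest{\D}$. We first apply $\rulek$-invertibility (Proposition~\ref{prop:k-hp-adm}) to obtain $\G, \nest{\D, \Sigma, \D}$ at the same height. We then apply $\nctr$ repeatedly to the duplicated input formulas of $\D$, by the induction hypothesis, giving $\G, \nest{\D, \Sigma}$ at height $< h$. Finally $\rulek$ yields $\G, \nest{\D}, \nest{\Sigma}$.
\item $\nctr$ where the block is principal in $\rulek$. If the two copies of $\bl{\phi}$ come from the two different blocks being merged, the premiss is $\G, \nest{\D, \bl{\phi}}, \nest{\Sigma, \bl{\phi}}$, obtained from $\G, \nest{\D, \bl{\phi}, \Sigma, \bl{\phi}}$. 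We contract there by the induction hypothesis and then apply $\rulek$ to split the block as $\nest{\D, \bl{\phi}}, \nest{\Sigma}$. The result is $\G, \nest{\D, \bl{\phi}}, \nest{\Sigma}$, which is not exactly the target. The target conclusion is the one with a single occurrence of $\bl{\phi}$ in the block being contracted, so the split must be chosen according to which block is contracted. Since $\nctr$ acts inside one block, we merge, contract, and then split so as to keep the other block intact, with $\bl{\phi}$ placed where needed.
\end{itemize}

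When the block is principal in $\wbox$ or $\bdiam$, we apply $\bctr$ to the premiss by the induction hypothesis. The premiss contains $\D$ unpacked at top level, so $\nctr$ reduces to $\bctr$ there. This is why the two rules must be proved together.

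For $\nelim$ we follow Proposition~\ref{prop:adm struct}. If $\enest$ is principal in $\wbox$ or $\bdiam$, its content is empty, so we reapply the rule using $\B$ instead, weakening the premiss with $\B$'s input formulas via $\bwk$. If $\enest$ is principal in $\rulek$, the premiss has the form $\G, \nest{\Sigma}, \B$ with $\enest$ split off from $\nest{\Sigma}$ (that is, $\nest{\emptyset, \Sigma}$), which is already the desired sequent at lower height. For $\bimp$, pruning may create extra empty blocks, and these are eliminated by the induction hypothesis as in the original proof.

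The main obstacle is the bookkeeping in the principal $\rulek$ cases of $\bctr$ and $\nctr$: choosing the merge and split so that heights strictly decrease before the induction hypothesis is applied. This is where hp-invertibility of $\rulek$ is essential.
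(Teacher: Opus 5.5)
Your route is the paper's: a simultaneous induction on height for $\bctr$ and $\nctr$. A block copy principal in $\rulek$ is handled by hp-invertibility of $\rulek$ followed by $|\D|$ applications of $\nctr$, and $\nctr$ on a block principal in $\wbox$ or $\bdiam$ reduces to $\bctr$ on the premiss. Two of the $\rulek$ cases need repair, but neither needs a new idea.

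\textbf{Missing $\bctr$ case.} For $\bctr$ on a block you only treat the situation where \emph{one} copy of $\nest{\D}$ is principal in $\rulek$. You also need the case where both copies are the two principal blocks, i.e.\ $\G, \nest{\D}, \nest{\D}$ is obtained from $\G, \nest{\D, \D}$. Here there is no $\nest{\Sigma}$ to split off afterwards. Instead, $|\D|$ height-preserving applications of $\nctr$ (induction hypothesis) give $\G, \nest{\D}$ directly. This is the first case in the paper's proof, and the most direct place where $\bctr$ depends on $\nctr$.

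\textbf{Impossible $\nctr$ configuration.} Your $\nctr$ bullet starts from a configuration that cannot occur. $\nctr$ contracts two copies of $\bl{\phi}$ inside a single block. If that block is principal in $\rulek$, it is one of the two blocks produced by the split, so both copies lie in the same part of the merged block. The premiss of $\rulek$ is therefore $\G_0, \nest{\D, \bl{\phi}, \bl{\phi}, \Sigma}$, with $\G = \G_0, \nest{\Sigma}$. Contract there by the induction hypothesis and reapply $\rulek$ to split off $\nest{\Sigma}$. This yields $\G, \nest{\D, \bl{\phi}}$ immediately, with no choice of split involved.

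\textbf{Minor point.} In $\nelim$, when $\enest$ is principal in $\bdiam$ and $\B$ is an output block, the premiss must be weakened with $\wwk$ as well as $\bwk$.
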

\begin{proof}
Let us just consider a few most significant cases in the proof of hp-admissibility of $\bctr$ and $\nctr$.
Suppose the premiss of $\bctr$, resp. $\nctr$, is derivable with height $h$, and that
$\bctr$ and $\nctr$ are hp-admissible up to height $h-1$.
If the premiss ($\G, \nest{\D}, \nest{\D}$) of $\bctr$ is obtained by $\rulek$ from ($\G, \nest{\D,\D}$),
then by $|\D|$ hp-applications of $\nctr$ (induction hypothesis) we obtain ($\G, \nest{\D}$).
If the premiss ($\G, \nest{\D}, \nest{\D}, \nest{\Sigma}$) of $\bctr$ is obtained by $\rulek$ from ($\G, \nest{\D, \Sigma}, \nest{\D}$),
then by hp-invertibility of $\rulek$ we have ($\G, \nest{\D, \Sigma, \D}$),
and by $|\D|$ hp-applications of $\nctr$ (induction hypothesis), ($\G, \nest{\D, \Sigma}$),
finally by $\rulek$, ($\G, \nest{\D}, \nest{\Sigma}$).
If the premiss ($\Lambda, \nest{\D, \bl{\phi}, \bl{\phi}}, \wh{\Box\psi}$) of $\nctr$ is obtained by $\wbox$ from  
($\D, \bl{\phi}, \bl{\phi}, \wh{\psi}$),
then by a hp-application of $\bctr$ (induction hypothesis) we obtain 
($\D, \bl{\phi}, \wh{\psi}$), and by $\wbox$ we get
($\Lambda, \nest{\D, \bl{\phi}}, \wh{\Box\psi}$).
\end{proof}

\begin{theorem}\label{th:cut k}
The rules $\cut$, $\bsub$, $\wsub$ are admissible in $\CIMpK$.
\end{theorem}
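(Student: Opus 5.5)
I will follow the structure of Theorem~\ref{th:cut} and Theorem~\ref{thm:cut-ext}. The plan is to prove the four claims \textbf{(A)}--\textbf{(D)}, with the same simultaneous induction on cut-formula complexity $c$ and cut height $h$. Now $Sub(c)$ refers to the generalised rules $\bsub$ and $\wsub$, in which the block may carry a multiset context $\Sigma$. The structural tools are available: hp-admissibility of $\bwk$, $\wwk$, $\bctr$, $\nctr$ and $\nelim$ (Proposition~\ref{prop:ext-K-hp-adm}), hp-invertibility of $\rulek$ (Proposition~\ref{prop:k-hp-adm}), and the invertibilities of Proposition~\ref{prop:hp-inv-rules}, which carry over routinely.

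Claims \textbf{(A)} and \textbf{(B)} are as before. When the cut formula is not principal, the cut is permuted upwards. For $\rulek$ this is immediate: cutting on a formula outside the blocks commutes with merging the two blocks. For $\wbox$ and $\bdiam$ the conclusion is simply re-derived with a different context. The $\bimp$ case needs the observation $\pr{(\Lambda',\D)} = \Lambda', \pr{\D}$. This still holds with the new pruning, which empties a whole output block, since $\Lambda'$ contains no output component.

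For the principal cases of \textbf{(D)}, the $\Box$ and $\diam$ reductions are as before. We replace the cut by $\bsub$ or $\wsub$ at the same complexity, which is available from \textbf{(C)}. For instance, from $\Sigma, \wh{\psi}$ (premiss of $\wbox$) and $\nest{\bl{\psi}}, \D$ (premiss of $\bbox$) we obtain $\nest{\Sigma}, \D$ by $\bsub$ with empty block context, and then weaken. The $\imp$ case with an output block is unchanged in shape: the emptied block $\enest$ produced by pruning is removed at the end by $\nelim$.

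The main work is \textbf{(C)}. We proceed by induction on the height of the derivation of the block premiss $\nest{\bl{\phi},\Sigma},\D$, respectively $\Lambda,\nest{\Sigma,\wh{\phi}}$. Suppose the principal block is introduced by $\wbox$, from $\bl{\phi},\Sigma,\D_0$ with $\D_0$ an output formula. We cut on $\phi$ against $\Lambda,\wh{\phi}$, using $Cut(c,\cdot)$. This gives $\Lambda,\Sigma,\D_0$, and $\wbox$ then yields $\nest{\Lambda,\Sigma},\D$. The $\bdiam$ cases are analogous. Note that the output formula may now sit inside the block together with inputs, so $\wsub$ with $\bdiam$ cuts $\bl{\chi},\Sigma,\wh{\phi}$ against $\bl{\phi},\D$.

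The genuinely new and hardest case is when the principal block of the sub-premiss was produced by $\rulek$. Take $\bsub$ with the premiss $\nest{\bl{\phi},\Sigma_1},\nest{\Sigma_2},\D'$ obtained from $\nest{\bl{\phi},\Sigma_1,\Sigma_2},\D'$. Here I apply the induction hypothesis on height to the premiss of $\rulek$, with block context $\Sigma_1,\Sigma_2$. This gives $\nest{\Lambda,\Sigma_1,\Sigma_2},\D'$, and reapplying $\rulek$ splits off $\nest{\Sigma_2}$. The same pattern handles $\wsub$ with $\rulek$.

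This is why I state $\bsub$ and $\wsub$ with an arbitrary block context $\Sigma$: the induction would not close otherwise. A subtle point is that in $\wsub$ the substituted $\D$ may itself contain the output. After merging, the block must still contain at most one output, which holds because the sequent is single-succedent. I expect this bookkeeping, together with ensuring that $\nctr$ absorbs the duplicated $\Lambda$ arising from iterated $\bsub$, to be the main obstacle. If a direct induction on height fails, my fallback is to first invert all $\rulek$ applications using Proposition~\ref{prop:k-hp-adm}, so that the principal block is maximal, and then argue as above.
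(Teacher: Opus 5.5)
Your proposal is correct and follows essentially the same route as the paper. Claims \textbf{(A)}, \textbf{(B)} and \textbf{(D)} carry over as before, and the only substantive change is in \textbf{(C)}, where the $\rulek$ case is closed exactly as you do it: apply the height induction hypothesis to the premiss of $\rulek$ with the enlarged block context, then re-split with $\rulek$. The obstacles you anticipate (a duplicated $\Lambda$ needing $\nctr$, and the fallback via inverting $\rulek$) do not arise, because each case of \textbf{(C)} involves a single cut or a single application of the induction hypothesis.
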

\begin{proof}
Despite the different formulation of $\bsub$ and $\wsub$,
the proof resembles that of Theorem~\ref{th:cut}.
The only significant changes concern the proof of claim 
\textbf{(C)}:
Assume $\forall h. Cut(c, h)$.
As before, admissibility of $\bsub$ is proved by induction on the height $j$ of the derivation of ($\nest{\bl{\phi}, \Sigma}, \D$).
Assume $j > 0$ and $\nest{\bl{\phi}, \Sigma}$ principal in the last rule $\Rule$ applied in the derivation of ($\nest{\bl{\phi}, \Sigma}, \D$).
If $\Rule = \wbox$, then we have ($\nest{\bl{\phi}, \Sigma}, \D', \wh{\Box \psi}$) obtained from ($\bl{\phi}, \Sigma, \wh{\psi}$).
By cutting $\bl{\phi}, \wh{\phi}$ from this sequent and the other premiss ($\Lambda, \wh{\phi}$) of $\bsub$,
we obtain ($\Lambda, \Sigma, \wh{\psi}$), then by $\wbox$, ($\nest{\Lambda, \Sigma}, \D', \wh{\Box \psi}$).
If $\Rule = \bdiam$, the case is similar.
If $\Rule = \rulek$, then we have ($\nest{\bl{\phi}, \Sigma}, \D', \nest{\Theta}$)
obtained from ($\nest{\bl{\phi}, \Sigma, \Theta}, \D'$).
Then by applying $\bsub$ to this sequent and the other premiss ($\Lambda, \wh{\phi}$) of $\bsub$
(which is admissible by the induction hypothesis), we get ($\nest{\Lambda, \Sigma, \Theta}, \D'$),
then by $\rulek$,  ($\nest{\Lambda, \Sigma}, \D', \nest{\Theta}$).
The admissibility of $\wsub$ is proved similarly.
\end{proof}

Finally, as a consequence of Proposition~\ref{prop:sound k}, Theorem~\ref{th:cut k} and derivability of $\axKbox$, $\axKdiam$ in 
$\CIMpK$:

\begin{theorem}[Soundness and completeness]
$\vd_{\IMpK} \phi$ if and only if $\vd_{\CIMpK} \wh{\phi}$.
\end{theorem}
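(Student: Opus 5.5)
The statement has two directions, and I will take soundness first. Proposition~\ref{prop:sound-rules} covers the rules of $\CIM$. The blocks may now contain several formulas, so these proofs have to be rechecked with the extended interpretation $\fintf$. For $\wbox$, a premiss $\fint(\D, \wh{\phi}) = \delta \imp \phi$ gives $\Box\delta \imp \Box\phi$ by $\monbox$, where $\delta$ is the conjunction of the formulas in $\D$.

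For $\bdiam$ there are three cases.
\begin{itemize}
\item If $\D$ is input, the premiss $\phi\land\delta\imp\bot$ gives $\Diamond\phi\land\Box\delta\imp\bot$ by $\str$.
\item If $\D = (\Theta,\wh{\chi})$, the premiss is $\phi \land \theta \imp \chi$. Currying gives $\phi \imp (\theta\imp\chi)$, and $\mondiam$ then gives $\Diamond\phi \imp \Diamond(\theta\imp\chi)$, which is exactly the interpretation of the conclusion.
\item If $\D$ is empty, the argument of Proposition~\ref{prop:sound-rules} applies unchanged.
\end{itemize}

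The rule $\bimp$ under the new pruning still only turns an output block into $\enest$. The argument of Proposition~\ref{prop:sound-rules} via $\axIdiam$ therefore carries over verbatim. The rule $\rulek$ is handled by Proposition~\ref{prop:sound k}. A routine induction on the height of derivations then gives: if $\vd_{\CIMpK} \wh{\phi}$ then $\vd_{\IMpK} \phi$.

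For completeness I proceed by induction on the Hilbert derivation of $\phi$ in $\IMpK$, mirroring the completeness theorem for $\CIM$. The axioms of $\IPL$ are derivable by the propositional rules together with the generalised axiom, i.e.\ Proposition~\ref{prop:gen ax} extended to multiset blocks, which is still proved by induction on complexity. The axioms $\axmnc$ and $\axIdiam$ are derivable as in $\CIM$, for $\axIdiam$ via Example~\ref{exm:der}. The axioms $\axKbox$ and $\axKdiam$ are derivable by the example preceding Proposition~\ref{prop:sound k}. Since $\IMpK$ is defined with $\axKbd$, this covers all axioms.

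The rules are simulated as follows.
\begin{itemize}
\item Modus ponens: from $\wh{\phi}$ and $\wh{\phi\imp\psi}$, cut the latter against the derivable $\bl{\phi\imp\psi},\bl{\phi},\wh{\psi}$, then cut on $\phi$. Both cuts are admissible by Theorem~\ref{th:cut k}.
\item $\monbox$: from $\wh{\phi\imp\psi}$, cut to obtain $\bl{\phi},\wh{\psi}$. Then apply $\wbox$, $\bbox$ and $\wimp$.
\item $\mondiam$: from the same sequent $\bl{\phi},\wh{\psi}$, apply $\bdiam$, $\wdiam$ and $\wimp$.
\item Uniform substitution: this needs no separate treatment if the axioms are read as schemata. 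This is how I will present the Hilbert system.
\end{itemize}

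The main obstacle is making sure that everything invoked actually holds in the multiset-block setting. In particular I need the generalised identity $\Lambda,\bl{\phi},\wh{\phi}$ and the weakenings used in the simulations. The paper states that these adaptations are routine, and Proposition~\ref{prop:ext-K-hp-adm} supplies hp-admissible weakening and contraction. I would verify only the $\Box$/$\Diamond$ cases of the generalised identity explicitly; everything else reduces to Theorem~\ref{th:cut k}.
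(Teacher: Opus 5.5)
Your proposal is correct and follows the paper's argument. Soundness comes from soundness of the rules (you re-check the $\CIM$ rules for multiset blocks, and $\rulek$ is covered by Proposition~\ref{prop:sound k}), and completeness comes from simulating the Hilbert system using Theorem~\ref{th:cut k} and the derivations of $\axKbox$ and $\axKdiam$; you just write out explicitly the routine rule checks that the paper leaves implicit.
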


%%%%%%%%%%%%%%%%%%%%%%%%%%%%%%%%%%%%%%%%%%%%%%%%%%%%%%%%%%%%%%%%%%%%%%%%%%%%%%%%
\section{Conclusion}

  We studied the intuitionistic monotone modal logic $\IM$ and several
  extensions, providing a semantics by means of constructive neighbourhood
  frames as well as cut-free proof systems.
  The proof-theoretical analysis reveals the following analogy between
  constructive and intuitionistic versions of $\EM$ and $\K$:
  \begin{quote}
    \it
    Constructive modal logics are single-succedent restrictions
    of Genzten calculi; \\
    Intuitionistic modal logics are single-succedent restrictions
    of structured sequent calculi.
  \end{quote}
  This immediately raises the question whether this pattern restricts to
  constructive and intuitionistic versions of $\EM$ and $\K$, or if
  it applies more generally.

  Furthermore, Arisaka, Das and Strassburger~\cite{AriDasStr15} provide a nested
  calculus for $\CK$ which requires a more radical form of pruning than that
  employed for $\IK$.
  Similarly, we turned the calculus for $\IM$ into one for $\WM$ by replacing
  context pruning with context thinning.
  Again, it would be interesting to investigate if this is merely a coincidence
  or part of a pattern.

  Other interesting questions include:
  Does the semantic reading from Remark~\ref{rem:sem} allow one to extract
  countermodels for failed derivations in $\CIM$?
  What other variants of context thinning and pruning can be defined,
  and what logics do they give rise to?
  What kind of intuitionistic variants of $\EM$ arise from the
  single-succedent restriction of other structured calculi, such as
  the nested calculus from~\cite{LelPim19}?
  We hope to address some of these in future research.
  
  Finally, we would like to study the complexity of the derivability problem in $\IM$ and its extensions by either extending an optimal calculus for $\IPL$ like the one from~\cite{Hud93}, or considering reductions to classical multi-modal logics
  similarly to~\cite{Dal25,Fis77},~\cite[Chapter~10.2]{YB}.

\clearpage
%% Bibliography
%% Make sure to use the bibliographystyle aiml22.
\bibliographystyle{eptcs}
\bibliography{aiml}

\end{document}